\definecolor{darkblue}{rgb}{0, 0, 0.5}
\newcommand{\R}{\mathbb{R}}
\newcommand{\E}{\mathbb{E}}
\newcommand{\bydef}{\triangleq}
\newcommand{\weakto}{\rightsquigarrow}
\newcommand{\pTo}{\stackrel{p}{\rightarrow}}
\theoremstyle{plain}
  \newtheorem{rem}{\protect\remarkname}
  \newtheorem{example}{\protect\examplename}
  \theoremstyle{plain}
  \newtheorem{assumption}{\protect\assumptionname}
\theoremstyle{plain}
\newtheorem{thm}{\protect\theoremname}
\newtheorem*{thm*}{\protect\theoremname}
\newtheorem{cor}{\protect\corname}
  \theoremstyle{plain}
  \newtheorem{lem}{\protect\lemmaname}[section]
  \theoremstyle{remark}
  \providecommand{\assumptionname}{Assumption}
  \providecommand{\claimname}{Claim}
  \providecommand{\lemmaname}{Lemma}
  \providecommand{\remarkname}{Remark}
\providecommand{\theoremname}{Theorem}
  \providecommand{\examplename}{Example}
\providecommand{\corname}{Corollary}
\title{\vspace{-2cm}

Wild inference for wild SVARs with application to Volatility-based IV\thanks{  \footnotesize{
Bulat Gafarov: Department of Agricultural and Resource Economics, University of California, Davis.
Email: \href{bgafarov@ucdavis.edu}{bgafarov@ucdavis.edu};
Madina Karamysheva: HSE University. Email: \href{mkaramysheva@hse.ru}{mkaramysheva@hse.ru};
Andrey Polbin: Bank of Russia; Gaidar Institute. Email: \href{apolbin@gmail.com}{apolbin@gmail.com};
Anton Skrobotov: Centre for Big Data in Economics and Finance (CEBDA), HSE University. Email: \href{antonskrobotov@gmail.com}{antonskrobotov@gmail.com}.

The opinions expressed in the paper are solely those of the authors and may not reflect the official position of the
affiliated institutions.
We would like to thank Regis Barnichon, Colin Cameron, James Cloyne, Raffaella Giacomini, Yuriy Gorodnichenko, Lars Hansen, Jens Hilscher, Silvia Gon\c{c}alves, \'Oscar Jord\'a, Karel Mertens, Daniel Lewis, Serena Ng, Valerie  Ramey, Andres Santos, Sanjay Singh, Takuya Ura, Ke-Li Xu, participants of 2024 NorCal junior econometricians' conference at UC Santa Cruz,   2024 California econometrics conference, ICEBA 2024 conference, HSE Brown bag seminar, EEA Congress 2025, Sailing Macro Workshop 2025, World Congress 2025,  for constructive feedback and suggestions. The previous version of the paper was circulated under title ``Policymaker meetings as heteroscedasticity shifters: Identification and simultaneous inference in unstable SVARs''.}
}}
\author{
\setcounter{footnote}{1}
 Bulat Gafarov \and
	Madina Karamysheva\and Andrey Polbin\and Anton Skrobotov
}
\begin{document}
\maketitle
\thispagestyle{empty}

\vspace{-1cm}

\begin{abstract}

We propose a dependent wild bootstrap method based on local projections for computing the joint asymptotic distribution of parameter estimates in  structural vector autoregression models.
This procedure can be applied to the raw data in levels without pretesting while remaining robust   to  unit roots, cointegration, polynomial trends, and conditionally heteroscedastic shocks in a general form.
We show how knowledge of the joint asymptotic distribution in persistent data setups can improve the efficiency of impulse response function estimators  through smoothing, narrow multi-horizon confidence bounds, and deliver weak identification robust inference using external moments.
We illustrate these findings in simulations and  apply the method to US monetary policy shocks identified by FOMC-meeting-induced volatility.

\medskip

\medskip

\noindent \textbf{Keywords:}  SVAR, smoothed local projections, external instruments, conditional heteroscedasticity,  unit roots, cointegration, simultaneous inference,  monetary policy shocks.

\end{abstract}
\newpage

\section{Introduction}

Local projection (LP) estimators of impulse response functions (IRFs) \citep{jorda2005estimation,dufour2006short}  are increasingly popular in structural vector autoregression (SVAR) literature due to their computational simplicity and double robustness properties.
Namely, the   least squares regression coefficients corresponding to LP estimators are remarkably robust to the presence of unit roots and an unmodeled moving average component \citep{montiel2021local,Montiel2026double}.
The LP literature on robust inference has focused on IRFs at individual horizons, which leaves out a large class of modern structural identification techniques, such as identification through external moment restrictions \citep{rigobon2003identification, lewis2021identifying} or sign restrictions \citep{uhlig2005effects,gafarov2018delta,antolin2018narrative}.
Moreover, the classical LP estimators can potentially lose efficiency by ignoring smoothness restrictions on IRFs across horizons implied by the VAR model structure.\footnote{\cite{barnichon2018functional} proposed smoothing LP IRFs using the minimum distance (MD) method for better statistical efficiency, which requires knowledge of the joint asymptotic covariance matrix of LP estimates.
To our knowledge, there are no proposals on how to consistently estimate this matrix in the presence of unit roots. }
These limitations impose an uncomfortable trade-off between the robustness to unit roots of classical LP and the efficiency and structural identification versatility of iterative IRF estimators.

We develop a large sample theory for the joint distribution of LP estimators and study implications for inference on SVAR parameters in general, potentially nonstationary but nonexplosive cases.
Specifically, we propose a proof of the joint asymptotic normality of the local projection estimators across horizons and the corresponding asymptotic  covariance matrix formula  in the potential presence of trends, cointegration, any number of unit roots, and general weakly-dependent stationary conditionally heteroscedastic innovations.\footnote{The standard large sample least squares regression theory does not directly apply to nonstationary time series since the average Gram matrix of regressors $  T^{-1}X'X$ is unbounded as the sample size grows rather than having a  finite limit in probability. Our proof builds upon the results for  homoscedastic  VARs with unit roots \citep{sims1990inference,tsay1990asymptotic} and the functional central limit theorem for  stochastic integrals of conditionally heteroscedastic processes \citep{de2000functional}.}
Next, we extend the proof of the consistency of the joint heteroscedasticity and autocorrelation consistent (HAC) covariance matrix estimator  of
\cite{andrews1991heteroskedasticity} to accommodate nonstationary regressors with common finite order VAR dynamics.
Finally, we propose a version of the dependent wild bootstrap (DWB) procedure of \cite{shao2010dependent}, which provides a computationally convenient way to estimate the joint distribution of the local projection estimators  and the corresponding estimators of structural parameters.

The only regularity conditions we impose on the nonexplosive (possibly unit root)  VAR model are that the one-step-ahead forecast errors, along with the external variables used for structural identification, follow a stationary vector process with a mixing property and the existence of certain higher-order moments, which are necessary for the consistency of the HAC estimator.
Under this condition, the products of the multi-step forecast errors (the feasible analog of the statistical scores of LP estimators) and external variables have a coupling with a stationary mixing process, which makes CLT and DWB valid.
To our knowledge, this argument provides the first proof of the validity of \emph{joint} Gaussian inference on the IRF estimators that allows for the general case of unstable but nonexplosive VAR models with a general form of stationary conditional heteroscedasticity without pretests.

Among the multiple potential uses of DWB robust to nonstationary data, we focus on the following three: (i)  minimum distance (MD) smoothing of IRFs \citep{jorda2011estimation,barnichon2018functional, barnichon2019impulse}; (ii)  joint inference on multiple SVAR parameters using sup-$t$ statistic  \citep{jorda2009simultaneous,montiel2019simultaneous}; (iii) inference on structural IRFs identified using potentially weak higher order moments conditions \citep{lewis2022robust}.
We illustrate these benefits  in two  simulation designs and an empirical study of the impact of US monetary policy shocks in a six-variable VAR model.

\emph{Efficient estimation of IRFs.} The MD approach improves the efficiency of LP IRF estimates by imposing the underlying VAR structure on the IRF parameters.
The joint covariance matrix of LP estimates can be used either directly as an inverse weight matrix (if the number of horizons for smoothing is small) or indirectly (in the high-dimensional case).
In the latter case, one can use suboptimal  diagonal (precision-based) weights in MD estimators and then compute an efficient component-wise average of the resulting MD and the classical LP estimator. By construction, this efficient weighted estimator is guaranteed to have an asymptotic variance smaller than or equal to the minimum of the variances of the constituent LP or MD estimators.
Since, as we show, the lag-augmented VAR estimator of \cite{toda1995statistical} can be represented as a smoothed LP estimator with an equal number of VAR lags and smoothing horizons, our weighted estimator is weakly more asymptotically efficient than   popular regular estimators of IRFs with  unit-root robustness.
We illustrate this point in a Monte Carlo study based on an AR(1) design, showing that such a weighted estimator can adaptively match the confidence interval length  of the more efficient estimator. In designs with low persistence, it matches the length of  recursive IRF estimates (smoothed LP), while in near-unit root designs  it matches the classical LP estimators.
Moreover, in unit root designs, the weighted estimator also considerably improves the coverage probability for confidence intervals at long horizons, where classical LP demonstrates less than nominal coverage probability.

\emph{Efficiency gains in simultaneous confidence bands.} Using DWB, it is straightforward to compute critical values for sup-$t$ tests \citep{dunnett1955multiple} for testing multiple hypotheses about functions of the IRFs over multiple horizons and external moments.
\cite{montiel2019simultaneous} showed that critical values based on sup-$t$ result  in tighter simultaneous confidence bounds than alternatives that are agnostic about the joint covariance of the estimators, such as  the Bonferroni bounds  \citep{dunn1958estimation,inoue2023significance} or Scheff{\'e}  bounds \citep{scheffe1953method,jorda2009simultaneous}.
Using DWB the  efficient  sup-$t$ bounds are applicable without cumbersome specification searches for the exact nature of the violation of stationarity of the data.
We also discuss how sup-$t$ tests can be used for \cite{hausman1978specification}-type specification tests.

\emph{Anderson-Rubin bounds for structural IRFs based on external moments.} The robustness to the heteroscedasticity property of DWB allows the use of the informational content of heteroscedastic VAR innovations for identifying structural shocks.
Suppose that one observes an external variable  that is correlated with  the  volatility of a particular structural shock while being independent of the other structural shocks. We will refer to this variable as a volatility-based instrumental variable (VIV), which can be binary, as in the case considered in \cite{rigobon2003identification,rigobon2004impact}, or continuous, as in the case of lagged squared forecast errors considered in \citep{lewis2021identifying}.\footnote{The literature on identification through heteroscedasticity has a long history; see also \cite{feenstra1994new,brunnermeier2021feedbacks}.}
More interestingly, one can construct an external VIV, for example, using the impact of shock volatility on financial trading during the days of the FOMC meetings to identify monetary policy shocks.
We provide a proof that a VIV of this kind can uniquely identify the impact of a structural shock on the data and, together with the LP estimates, provide  the corresponding structural IRF estimates.
In our second Monte Carlo study based on a linearized \cite{smets2007shocks} DSGE model with  conditionally heteroscedastic monetary policy shocks, we show that DWB can successfully recover the true structural IRFs in realistic sample sizes if the volatility shifter is directly observed.\footnote{A linearized \cite{smets2007shocks} model is equivalent to a VARMA(3,2) model \citep{morris2016varma} which has a $VAR(\infty)$ representation. In this case, the VAR(8) model is only approximately correct, but LP-based DWB  delivers confidence bands for the structural IRFs with correct simulated coverage,  mimicking the theoretical findings about the effects of misspecification on LP of \cite{Montiel2026double}. }
Finally, following the observation of \cite{lewis2022robust} that many commonly used  higher order moment  schemes are only weakly identified in practice, we show how to use DWB to construct \cite{anderson1949estimation}-type confidence bands for the structural IRF identified with weak VIV.

We provide an empirical illustration of DWB to study the impact of monetary policy shocks in a  VAR model that includes six variables for the U.S.: real GDP, GDP deflator, commodity price index, non-borrowed reserves, the 1-year
bond rate, and the federal funds rate.
As VIV, we use monthly FOMC meeting counts \citep{rigobon2004impact,barnichon2025innovations}, including
meetings, telephone and video conferences, unscheduled meetings, and sequential day meetings.
This data is publicly available, allowing us to extend the sample to cover the period from January 1950 to December 2024 with monthly data.
We also consider a weighted meeting count  VIV, where we weight each meeting according to meeting-related surprises in trading activity (share volume or dollar volume) on the days of post-meeting announcements.
This approach allows us to distinguish between routine meetings and those that caused a significant shock to the financial market.

Our main empirical findings can be summarized as follows: (i) we document considerable efficiency gains from smoothing LP and sup-$t$ critical values in the classical Cholesky identification scheme \textit{{\`a} la }\cite{christiano1999monetary}; (ii) VIV results with the simple meeting-count instrument are statistically indistinguishable from those obtained using Cholesky; (iii) weak-IV robust bands show a significant negative impact of a contractionary  monetary policy shock on real GDP and a statistically significant price puzzle for the GDP deflator and commodity price index.\footnote{The price puzzle is sensitive to the choice of VIV variable. For the baseline VIV, it is statistically significant, while for the trading-volume weighted VIV, the puzzle is statistically insignificant.}

Information about policymaker meetings can  potentially be useful beyond monetary policy analysis. For example, \cite{kanzig2021macroeconomic} shows that OPEC meetings have high-frequency impacts on oil prices that can be used to identify oil supply shocks; \cite{janzen2018commodity} show that cotton storage levels change the volatility of precautionary demand shocks.
 Therefore, we believe that our VIV approach can be used more broadly.
Moreover, DWB can also be adopted to incorporate the standard external-IV moments \citep{stock2008what,mertens2013dynamic,montiel2021inference}
 or be used as input for sign-restriction frequentist inference \citep{gafarov2018delta,gafarov2016projection,drautzburg2021refining}.

The theoretical literature on inference for VAR models with unit roots goes back to \cite{dickey1979distribution}.
The classical textbook approach suggests doing a pre-test for unit roots, followed by first-differencing or de-trending \citep{hamilton1994time}.
Such a specification search can be challenging since it is sensitive to the presence of cointegration relations and polynomial trends that are popular among practitioners \citep{ramey2018government}. For example, \cite{bierens1997testing} shows that the unit root hypothesis for the US price level and interest rate is rejected for some popular tests once sufficiently flexible polynomial trends are included. Furthermore, seasonal unit roots pose additional challenges in the specification search \citep{ghysels2001econometric,cavaliere2019wild}.
If  stationary data is transformed into first differences without pretesting, there is a potential efficiency loss due to the introduction of a non-invertible moving average component from over-differencing \citep{plosser1977estimation}.
Moreover, pre-testing and specification search must be accounted for in the computation of the confidence bounds \citep{cavanagh1995inference}.
There is a long literature that aims to avoid this unit-root pre-test bias \citep{toda1995statistical,inoue2002bootstrapping,inoue2020uniform,mikusheva2012one,montiel2021inference, xu2023local}.
To our knowledge, our proof of the validity of the DWB method is more general than the earlier proposals since we consider proper multivariate VARs, any number of unit roots (including seasonal ones), polynomial trends, and general conditionally heteroscedastic innovations, while delivering tighter  (weighted LP) bands than the lag-augmented VAR recursive IRF estimates.

The paper is structured as follows. Section \ref{sec:model} discusses the setup. Section \ref{sec:estimation} provides the estimation and inference algorithms along with the corresponding theorems. Section \ref{sec:applicationDWB} discusses the uses of DWB. Section \ref{sec:symandempirics} summarizes the simulation evidence and  an empirical application to US monetary policy. Section \ref{sec:conclusion} concludes. All theorem proofs are provided in the Appendix. The Supplementary Appendix contains a technical lemma, additional figures, additional simulation designs, and details about data collection for the empirical illustration.

\section{``Wild'' SVAR model }\label{sec:model}

\subsection{Allowing for heteroscedastic and nonstationary data}

Many macroeconomic applications are concerned with  time series that feature smooth variation in mean over time  (for example, GDP) and also exhibit periods of elevated volatility.
Frequently, these features are not modeled explicitly.
Instead, they are corrected before the estimation stage, which may cause pre-testing bias.
In this paper, however, we follow an explicit modeling approach.
Suppose that we observe a $n-$vector time series $y_t$ that follows a VAR(p) model,
\begin{align}\label{var1}
    y_t =   V \mu_t + A_1 y_{t-1}+\cdots + A_p y_{t-p}+  \eta_t ,
\end{align}
where $y_t$ is a $n$-dimensional time series,  $  V \mu_t$ is a deterministic polynomial time trend (or time-varying drift for integrated processes) with $\mu_t=(1,t,t^2,\dots,t^k)'$ for some $k$; $ A_1,\dots,A_p$ are $(n\times n)$, $V$ is $(n\times (k+1))$ parameter matrices, and $\eta_t$ are serially uncorrelated forecast errors (innovations). We will assume that $p$ is finite and known.
In addition to the main time series $y_t$,  applications frequently feature external series $Z_t$ that are not part of the VAR model.
In this paper, we consider an external series $Z_t$, which is not part of the VAR model \eqref{var1} but is used as a volatility-based instrumental variable (VIV) for structural identification through additional moment restrictions.

 \begin{assumption} Regularity of innovations and VIV,
\label{ass:innovations}
\begin{enumerate}

    \item $\eta_t$ is a martingale difference sequence (MDS), i.e. $\forall$ \(t\) \(\mathbb{E}[\eta_t|\mathcal{F}_{t-1}] = 0\). The filtration $\mathcal{F}_{t-1}$ includes sigma algebra $\sigma(y_{t-1},y_{t-2},...)$.
    \item $(\eta_t,Z_t)$ is a  stationary sequence with $\alpha$-mixing coefficients satisfying $\alpha(j)= O(j^{-3(1+\epsilon)/\epsilon})$  for some $\epsilon>0$.
    \item  $\E||\eta_t||^{8+2\epsilon}<\infty$ and $\E |Z_t|^{4+\epsilon}<\infty$ for some $\epsilon>0$.
    \item $\Sigma_\eta = \E (\eta_t \eta'_t)$ is a full-rank matrix.
    \item $\E Z_t  \eta_{t}=0$ and $\E Z_t \eta_t\eta'_{t-\ell} = 0$  for all $\ell\geq 1$.
\end{enumerate}
\end{assumption}
This set of assumptions is commonly imposed in the literature.
Assumption \ref{ass:innovations}.1  implies that innovations are white noise, which is required for the consistency of LS estimators in stationary VAR models.
Assumption \ref{ass:innovations}.2 assumes that values  $(\eta_t,Z_t)$ and  $(\eta_{t-j},Z_{t-j})$  are  approximately independent for sufficiently large lags $j$ which is required for asymptotic normal approximations.
It allows for many stationary models of conditional heteroscedasticity.\footnote{See, for example, Assumption 2.1.(i-iii) in \cite{bruggemann2016inference}.
This assumption, for example, allows for asymmetric conditional heteroscedasticity models like the ones considered in \cite{rabemananjara1993threshold}.
Such models are ruled out by the popular residual wild bootstrap methods \citep{gonccalves2004bootstrapping}.} For simplicity, in this paper we do not consider unconditional trends in heteroscedasticity. Assumptions~\ref{ass:innovations}.2 together with \ref{ass:innovations}.3 are also sufficient for the consistency of the HAC estimators.
 The remaining two components, Assumptions~\ref{ass:innovations}.4 and \ref{ass:innovations}.5, are instrumental for structural identification and the asymptotic normality of the structural impulse responses.

Consider $n\times n$ matrices $C_h$ of $h$-step-ahead causal impulse responses with components
\begin{equation}\label{eq:IRFdefinition}
  C_{h,i,j}\bydef\E(y_{i,t+h}|\eta_t=e_j,y_{t-1},y_{t-2},...)-\E(y_{i,t+h}|\eta_t=0,y_{t-1},y_{t-2},...),
\end{equation}
where $e_j\in R^n$ is a vector of zeros with the exception of the $j-$th position, which is equal to 1.
We will refer to coefficients $C_{h}$ as \emph{reduced-form impulse responses} to distinguish them from the structural impulse responses to the structural shocks of interest.
Coefficients $C_{h}$ are always identified under the  martingale difference assumption.
In the case where $y_t$ is a weakly stationary vector series,  coefficients $C_{h}$ coincide with the MA coefficients of the $VMA(\infty)$ representation of $y_t$ \citep{wold1938study,zasuhin1941theory,whittle1953analysis}.
In the general and possibly  nonstationary  case, it is a priori unclear if the conditional expectations of the right hand side of \eqref{eq:IRFdefinition} depend on $t$.
To show that $C_{h}$ are time invariant, we can   use an equivalent representation of   $C_h$
\begin{equation*}
  C_{h,i,j}\bydef\E(\eta^{(h)}_{i,t+h}|\eta_t=e_j,y_{t-1},y_{t-2},...)-\E(\eta^{(h)}_{i,t+h}|\eta_t=0,y_{t-1},y_{t-2},...),
\end{equation*}
where for each integer pair $(s,h)$ the vector series $\eta_{s}^{(h)}$ are defined as errors in the $h$-step ahead version of equation \eqref{var1},
\begin{align}\label{varForecast}
    y_{s} &=  V^{(h)} \mu_{s-h} + \sum_{i=1}^p A_i^{(h)} y_{s-h-i} + \eta_{s}^{(h)}.
\end{align}
The VAR model \eqref{var1} implies an explicit representation $\eta_{s}^{(h)}=\sum_{h'=0}^h C_{h'}    \eta_{s-h'}$ with
 $C_0=I$, $  C_h = \sum_{\ell=1}^h   A_\ell  C_{h-\ell}$.\footnote{
Assuming $  A_\ell=0$ for $\ell>p$, matrices $(V^{(h)},A_1^{(h)},\dots,A_p^{(h)})$ in \eqref{varForecast} are  polynomial functions of $(V,A_1,\dots,A_p)$ that can also be computed recursively.}
By Assumption \ref{ass:innovations}.2, $\eta_{s}^{(h)}$ are stationary, and thus $C_h$
 is indeed time invariant and well defined even for nonstationary VAR models.

Reduced-form impulse response  $C_h$ can be estimated directly using the local projection estimator of \cite{jorda2005estimation}.
Although $C_h$ is well defined for arbitrary values of lag polynomial coefficients $A(L)$, for the purpose of our novel local projections inference framework, we  restrict our attention to
\begin{assumption}\label{ass:SSW} $det(I_n-A_1x-\dots-A_px^p)\neq0$  for $|x|<1$.
\end{assumption}
  Assumption \ref{ass:SSW} allows for an arbitrary number of unit roots, including complex unit roots, and cointegrating relations between $y_t$. It only rules out explosive roots.
We will show in the next section that one can apply our inference procedure to local projections estimated with or without a trend in levels, without preliminary seasonal adjustment.
It can be done without any pretests for trend stationarity, the (absence of) cointegration, or unit roots.
We allow both a polynomial trend as in \cite{sims1990inference} and complex unit roots as in \cite{tsay1990asymptotic}.
Under this assumption, the least squares estimator for the coefficients in equations \eqref{var1} and \eqref{varForecast} is consistent and has well-defined distribution limits.

When conducting counterfactual analysis, it is common to impose an underlying linear structure on innovations, $\eta_t = B \varepsilon_t$.
The $n-$vector of structural shocks $\varepsilon_t$ collects primal economic factors such as demand and supply shocks or unexpected policy changes.
Following the standard approach, we assume that the components of $\varepsilon_t$ are uncorrelated with each other.
Suppose we are interested in the causal impulse response functions of the structural shock $\varepsilon_{1,t}$,
\begin{equation*}
\beta_{h}\bydef \E(y_{t+h}|\varepsilon_t=e_1,\mathcal F_{t-1})-\E(y_{t+h}|\varepsilon_t=0,\mathcal F_{t-1}).
\end{equation*}
Immediate computation implies $\beta_{h} =  C_{h}\beta$,
where $\beta$ denotes the first column of $B$.
When $\beta_1\neq 0$, we denote the corresponding unit-impact-normalized structural impulse response by $\psi_h\bydef\beta_h/\beta_1=C_h\beta/\beta_1$.
The vector $\beta$ is not uniquely identified based on \eqref{var1} alone, but it can be identified using some external moment restriction that depends on the distribution of $(\eta_t,Z_t)$.
In this paper, we consider two cases. Vector $\beta$ could be $\beta_\Sigma$, a  column in the Cholesky decomposition of $\Sigma_\eta = \E \eta_t  \eta_{t}^\prime $, or it can be  a function of VIV,  $\beta_Z =  \E(\eta_{t} \eta_{1t}-\E\eta_{t} \eta_{1t})(Z_t - \E Z_t)$. The latter moment restriction will be considered   in Section~\ref{sec:HetIV}.

\subsection{Local projection estimators for reduced-form IRFs}
In the previous section, we saw that the structural IRFs  $\psi_h$ for $h=1,\dots,H$ can be represented as functions of the vector of parameters
\begin{equation*}
  \theta\bydef vec(  C_{1},\dots, C_{H}, \Sigma_\eta,\beta_Z ).
\end{equation*}
First, consider   $ C_{1},\dots, C_{H}, \Sigma_\eta$.
Following \cite{jorda2005estimation}, matrices $C_{1},\dots, C_{H}$ can be estimated using least squares regression (local projection) for each $h=1,\dots,H$
    \begin{align}\label{eq:hlagVAR}
    y_{t+h} = C_h y_{t} + \tilde V^{(h)} \mu_{t} + \sum_{i=1}^p  \tilde A_i^{(h)}  y_{t-i} +  u_{t+h}^{(h)},
\end{align}
 where matrices $\tilde V^{(h)}, \tilde A_1^{(h)},\dots,\tilde A_p^{(h)}$ are functions of  $ V ,  A_1 ,\dots,A_p$; $ u_{t+h}^{(h)}$  is a moving average  of $ \eta_{t+1},\dots,\eta_{t+h}$.
For the purpose of  joint inference, it is convenient to use an alternative equivalent representation based on the Frisch-Waugh-Lovell theorem \citep{lovell1963seasonal}.
\begin{thm}\label{thm:LPconsistency}
    Suppose Assumptions \ref{ass:innovations} and \ref{ass:SSW} hold.
The local projection estimators $\widehat C_h$ are consistent for $ C_h$ and have a representation
   $  \widehat C_h' =    (\sum_{t=p}^{T-h} \hat \eta_t  \hat \eta_{t}^\prime )^{-1}\sum_{t=p}^{T-h} \hat \eta_t  (\hat\eta_{t+h}^{(h)})',    $
where $\hat\eta_{t}^{(0)}\bydef \hat \eta_{t}  $ and $ \hat\eta_{t+h}^{(h)}$ are regression residuals from $h$-step ahead autoregressions \eqref{varForecast}, i.e.
       $\hat\eta_{t+h}^{(h)} =   y_{t+h} -  \hat V^{(h)} \mu_t - \sum_{i=1}^p \hat A_i^{(h)} y_{t-i}.$

Similarly, the matrix $\Sigma_\eta$ has a consistent estimator
   $       \hat \Sigma_\eta =       \frac{1}{T-p} \sum_{t=p}^{T} \hat \eta_t  \hat \eta_{t}^\prime.$
\begin{proof}
    See Appendix \ref{sec:app_proof_LP}
\end{proof}
\end{thm}
Theorem \ref{thm:LPconsistency}  claims that local projection estimators $\widehat C_h$ are equivalent to estimating regressions between $\hat\eta_{t+h}^{(h)}$ and $\hat\eta_{t}$, which have an interpretation of detrended, seasonally adjusted, and ``prewhitened'' $ y_{t+h}$ and $ y_{t}$.\footnote{It is common practice to detrend the data before IRF analysis. Some of the recent recommendations are summarized in \cite{hamilton2018you}. Prewhitening is a process of transforming autocorrelated time series into a white noise form, which has a long history in spectral estimation that goes back to \citep{press1956power}. For more recent use, see \cite{andrews1992improved}. Note also that a VAR(p) model with sufficiently large $p$ (for example, $p\geq12$ for monthly data) also nests seasonal unit roots, which automatically adjusts for the stochastic seasonal component.}
As a result, the consistency of $\widehat C_h$ essentially follows from the ergodic properties of
 innovations $\eta_t$ even for nonstationary $y_t$.
 The practical implication of this result is that it is unnecessary
  to detrend, seasonally adjust,   perform first-differencing, or use an error-correction model representation of $y_t$ before applying LP estimators.
  All the corresponding pretest procedures can thus be avoided, which radically simplifies inference.
 As long as the data generating process can be represented in VAR(p) form \eqref{var1}, the local projection estimator automatically transforms the data into an ergodic  form.

A consistent estimate of $ \Sigma_\eta$ allows one to compute structural impulse responses under recursive (Cholesky) identification, $C_h\beta_\Sigma$, where $\beta_\Sigma$ is a column of the Cholesky decomposition of $\Sigma_\eta$.
This classical scheme was used in \cite{sims1980comparison,christiano1999monetary} among others.
Alternatively, one can use external moment conditions $\beta_Z$ for structural IRFs.
The following theorem shows that the feasible estimator of $\beta_Z$  uses residuals $\hat\eta_{t}$ defined in Theorem \ref{thm:LPconsistency}.
\begin{thm}\label{thm:IVconsistency}
Under Assumptions \ref{ass:innovations}\&\ref{ass:SSW},  $\hat\beta_Z\bydef\frac{1}{T-p}   \sum_{t=p}^T  (\hat\eta_{t}\hat\eta_{1t}-\overline{\hat\eta_{t}\hat\eta_{1t}})(Z_t - \bar Z_T) \pTo \beta_Z$.

\end{thm}
\begin{proof}
    See Appendix Section \ref{sec:app_proof_IV}.
\end{proof}
Theorem \ref{thm:IVconsistency} uses the first component of the vector of forecast errors $\hat\eta_{1t}$ without loss of generality.
We will discuss the interpretation and additional normalization procedure for $\beta_Z$ in Section \ref{sec:HetIV} after the exposition of the joint inference procedure.

\section{Analytical and bootstrap inference on IRFs}\label{sec:estimation}

\subsection{Multiple coefficient inference for nonstationary data}

The most popular procedure for inference on multiple reduced-form
 coefficients in SVARs is the residual bootstrap \citep{gonccalves2004bootstrapping} or the residual-block bootstrap, a version for general heteroscedasticity innovations \citep{bruggemann2016inference}.
The proof of validity of these procedures assumes stationarity $y_t$.
The extension to nonstationary data is non-trivial, as the following example shows.

\begin{example}\label{exa:ar1}
   Consider a special case of model \eqref{var1}, a univariate AR(1) model
   \begin{equation}
       y_t= A_1 y_{t-1}+\eta_t,\label{eq:ar1}
   \end{equation}
with $\eta_t$ being i.i.d. Gaussian sequence with zero mean and unit variance. The impulse response sequence $C_h$ has explicit form $C_h=A_1^h$. It is well known that the OLS estimator of $ A_1$ has a non-Gaussian Dickey-Fuller (DF) distribution when $ A_1=1$ \citep{dickey1979distribution}. The common bootstrap procedures, such as residual bootstrap, are not valid for inference on $A_1$ and therefore functions of it when $A_1=1$. However, as proposed in \cite{inoue2002bootstrapping}, one can add a time trend or add a redundant lag to the model \eqref{eq:ar1},
   \begin{equation}
       y_t= A_1 y_{t-1}+A_2 y_{t-2}+\eta_t, \label{eq:ar2}
   \end{equation}
where the restriction $A_2=0$ is not imposed. The addition of such an irrelevant regressor results in OLS estimates that have an asymptotic Gaussian marginal distribution of individual components but are not jointly Gaussian.  \cite{sims1990inference} showed that there exist ``forbidden'' linear combinations of OLS estimates, corresponding to the so-called generalized cointegrating vectors.
Indeed, one can rewrite the model in ADF form
   \begin{equation*}
       y_t=( A_1+ A_2) y_{t-1} - A_2   \Delta y_{t-1}+\eta_t.
   \end{equation*}
Then, the OLS estimator of the corresponding population parameter $f(A) \bydef  A_1+ A_2$ will be super-consistent and will have a Dickey-Fuller (DF) asymptotic distribution, while $ \hat A_2$ will be asymptotically Gaussian.
By the linearity property of the OLS estimators, $\hat A_1=f(\hat A) -\hat A_2$. Due to the super-consistency of $f(\hat A)$ for 1, the asymptotic distribution of $\hat A_1$ is dominated by the asymptotic Gaussian distribution of $\hat A_2$.
As a result, both $\hat A_1$ and $\hat A_2$ have marginal distributions for which the residual bootstrap is valid.
Similarly, IRF estimators based on the  augmented model \eqref{eq:ar2}
\begin{align*}
     \widehat C_1 - C_1 &= \hat A_1 -1 =  O_P(\frac{1}{\sqrt{T}})  , \\
     \widehat C_2 - C_2&= \hat A_2 +\hat A^2_1 -1 =   (\hat A_1-1)+  (f(\hat A)-1) +    (\hat A_1-1)^2= \hat A_1-1 +  O_P(\frac{1}{{T}})= O_P(\frac{1}{\sqrt{T}}),
\end{align*}
have marginal asymptotic Gaussian distributions with the standard rate of consistency (since the asymptotic distribution of $\widehat C_2$ is dominated by the asymptotically Gaussian component $  \hat A_1-1 $).
However, the IRF vector is not jointly Gaussian, since the following difference has a non-Gaussian distribution
\begin{align*}
     \widehat C_2  -     \widehat C_1   =  (f(\hat A)-1)+   (\hat A_1-1)^2 = O_P(\frac{1}{T}).
\end{align*}
This difference is asymptotically equal to a linear combination of correlated DF random variables and $\chi^2(1)$ and is super-consistent for $C_2-C_1=0$. As a result, the residual bootstrap will not approximate the distribution of the difference $  \widehat C_2  -  \widehat C_1 $ even in the pointwise sense. \qed
\end{example}

There are ways to sidestep this problem at the cost of losing efficiency.
The standard textbook proposal is to test for unit roots. If a unit root is detected,  the data is converted to stationary form, which requires conservative adjustments to confidence intervals for pre-testing \citep{cavanagh1995inference}.
If  stationary data is transformed into first differences without pretesting, there is a potential efficiency loss due to the introduction of a non-invertible moving average component from over-differencing the innovations \citep{plosser1977estimation}.

Alternatively, one can estimate the model in levels and then use the projection of the product of marginal Gaussian confidence sets for $C_1,\dots,C_H$ with Bonferroni correction \citep{inoue2023significance}.
This intuitive approach also results in unnecessarily wide intervals since the  Bonferroni-type bounds are more conservative than  the sup-$t$ bounds \citep{montiel2019simultaneous}.
The sup-$t$ approach, however, requires an estimator of the covariance between the components of the IRF vector or the ability to jointly bootstrap the entire vector of estimators.

Yet another approach is to replace estimates of redundant parameters ($A_2$ in Example \ref{exa:ar1}) with their theoretical value, i.e., 0, when computing $f(\hat A)$ \citep{toda1995statistical}.
This approach is called lag-augmentation, as it requires one to add as many lags to the model as the potential number of unit roots in the model (as many as $p$ extra lags  under Assumption \ref{ass:SSW}).
This approach results in an asymptotic Gaussian distribution for smooth functions of $A(L)$ such as $f(\hat A)$, but it leads to very wide confidence intervals at large horizons $h$, even in the simplest case of AR(1) with one unit root, as shown in \cite{montiel2021local}.

In their pioneering work, \cite{montiel2021local} proved the asymptotic Gaussian marginal distribution for LP estimators $C_h$ for each individual horizon in the presence of unit roots and demonstrated that LP estimators have significantly shorter confidence sets than those based on lag-augmented recursive VAR estimates.
In the following subsections, we establish the joint asymptotic normality of LP estimators across horizons under Assumptions \ref{ass:innovations}-\ref{ass:SSW} and prove the validity of a novel dependent wild bootstrap procedure.
This inference procedure provides a way to conduct simultaneous inference on structural IRFs for all horizons without pre-tests.
Moreover, the joint inference procedure suggests more efficient and robust weighted estimators that fully utilize the underlying VAR model structure.

\subsection{Analytical HAC-inference}

Using representations from Theorems \ref{thm:LPconsistency}-\ref{thm:IVconsistency} for components of $\hat \theta$ as approximate averages of   stationary and mixing series, we can now apply the Central Limit Theorem (CLT) to derive the joint asymptotic distribution of  $\hat \theta$. We introduce relevant scores and their feasible analogs
\begin{equation*}
    \xi_t  \bydef \begin{pmatrix}
         vec(  \Sigma_\eta^{-1}  \eta_t \eta_{t+1}^\prime    )\\
         \cdots\\
         vec(  \Sigma_\eta^{-1} \sum_{s=0}^{H-1}  \eta_t\eta_{t+H-s} ^\prime  C_s ^\prime  )\\
                  vec( \eta_t  \eta_{t} ^\prime - \Sigma_\eta )\\
          (\eta_{t} \eta_{1t}-\E\eta_{t} \eta_{1t})(Z_t - \E Z_t)-  \beta_Z
         \end{pmatrix}\text{ and } \hat\xi_t \bydef \begin{pmatrix}
         vec(  \hat\Sigma_\eta^{-1}  (\hat\eta_t\hat\eta'_{t+1}-\overline{\hat\eta_t\hat\eta'_{t+1}}) )\\
         \cdots\\
         vec(  \hat\Sigma_\eta^{-1}\sum_{s=0}^{H-1}  (\hat\eta_t\hat\eta'_{t+H-s}-\overline{\hat\eta_t\hat\eta'_{t+H-s}}) \widehat C_s ^\prime  )\\
          vec( \hat\eta_t  \hat\eta_{t} ^\prime - \hat\Sigma_\eta )\\
         ( \hat\eta_{t} \hat\eta_{1t}-\overline{ \hat\eta_{t} \hat\eta_{1t}})(Z_t - \bar Z_T)- \hat\beta_Z
         \end{pmatrix},
\end{equation*}
where
$\eta_{1t}$ is the first element of $\eta_{t}$.
\begin{thm}\label{thm:inferenceCH}
    Suppose Assumptions \ref{ass:innovations} and \ref{ass:SSW}  hold. Then
\begin{equation}
\lim_{T\to\infty} \sup_{x\in \mathbb{R}, \|a\|=1}
\left|
P\left(\sqrt{T}a'(\hat\theta-\theta)<x\right)
-
\Phi\left(
\frac{x}{\sqrt{a'\Omega a}}\right)
\right|
=0,
\label{eq:CLTmainText}
\end{equation}
with  $\Omega   \bydef \sum^{\infty}_{\ell=-\infty} \E \xi_t  \xi'_{t+\ell} $.
\end{thm}
\begin{proof}
    See Appendix Section \ref{sec:app_proof_CLT}
\end{proof}

The variance-covariance matrix $\Omega$ can be estimated using the HAC estimator along the lines in \cite{andrews1991heteroskedasticity}.
One, however,  needs  to modify the original proofs   to account for the nonstationarity of $y_t$.
Consider
\begin{equation*}
     \widehat\Omega_T  = \sum^{B_T}_{\ell=-B_T}K\Big(\frac{\ell}{B_T}\Big) \overline{  \hat \xi_t \hat \xi'_{t+\ell} }
\end{equation*}
for some kernel functions $K(\cdot)$ and bandwidth sequences $B_T$.
\begin{thm}\label{thm:hac} Suppose Assumptions  \ref{ass:innovations}-\ref{ass:SSW} hold. Then, for any bandwidth sequence satisfying $ B_T^2 /T\to 0$, we obtain $ \widehat\Omega_T\pTo\Omega$.
\end{thm}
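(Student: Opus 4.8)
The plan is to bridge the feasible estimator $\hat\Omega$ to the population object $\Omega$ through an \emph{infeasible} HAC estimator $\tilde\Omega(H)=\sum_{|\ell|\le B_T}K(\ell/T)\,\overline{\tilde\Xi_t(H)\tilde\Xi'_{t+\ell}(H)}$, where $\tilde\Xi_t(H)$ is obtained from $\hat\Xi_t(H)$ by replacing each residual $\hat\eta$ by the true innovation $\eta$, each estimate $(\hat\Sigma_\eta,\hat C_s,\hat\gamma)$ by its population value, and each sample mean by its expectation; by the representations established in the proof of Theorem \ref{thm:inferenceCH}, $\tilde\Xi_t(H)$ equals the stationary score vector $\Xi_t(H)$ of \eqref{eq:XiVector}. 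One then shows $\tilde\Omega\pTo\Omega$ and $\hat\Omega-\tilde\Omega\pTo 0$ separately. The value of this split is that $\{\tilde\Xi_t(H)\}$ is a genuinely stationary process, so $\tilde\Omega\pTo\Omega$ is a textbook statement, and all of the model's nonstationarity is pushed into the gap $\hat\Omega-\tilde\Omega$.

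For $\tilde\Omega\pTo\Omega$ I would invoke the standard kernel-HAC consistency theorem for stationary strong-mixing sequences \citep{andrews1991heteroskedasticity,de2000functional}. Its hypotheses hold here: $\tilde\Xi_t(H)$ is a measurable function of the finite window $(\eta_t,\dots,\eta_{t+H_1},Z_t)$, hence strictly stationary and $\alpha$-mixing with coefficients of the same polynomial order as in Assumption \ref{ass:innovations}.2; and $\E\|\tilde\Xi_t(H)\|^{2+\delta}<\infty$ for some $\delta>0$, since each block is either quadratic in $\eta_t$ (integrable to the required power by $\E\|\eta_t\|^{8+2\epsilon}<\infty$) or of the form $\eta_t\eta_{1t}(Z_t-\E Z_t)$, which lies in $L^{2+\delta}$ by Hölder's inequality from $\E\|\eta_t\|^{8+2\epsilon}<\infty$ and $\E|Z_t|^{4+\epsilon}<\infty$. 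The same moment bounds and the Davydov covariance inequality give $\sum_\ell\|\E\Xi_t(H)\Xi'_{t+\ell}(H)\|<\infty$, so $\Omega$ is well defined, and the bandwidth conditions $B_T\to\infty$, $B_T^2/T\to0$ are more than enough for the bias and variance of $\tilde\Omega$ to vanish.

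The substantive step is $\hat\Omega-\tilde\Omega\pTo0$. Write $\hat\Xi_t(H)=\tilde\Xi_t(H)+\Delta_t$; then $\Delta_t$ is a sum of (i) terms in which a population object ($\Sigma_\eta^{-1}$, $C_s$, $\gamma$, $\E\eta_t\eta_{1t}$, $\E Z_t$) is replaced by its estimate or sample mean, which by Theorem \ref{thm:inferenceCH} and laws of large numbers/CLTs for the stationary factors are $O_p(T^{-1/2})$ multiples of stationary quantities, and (ii) terms in which a true innovation is replaced by a least-squares residual, governed by $\hat\eta_{t+h}(h)-\eta_{t+h}(h)=-(\hat\beta(h)-\beta(h))'x_t$ with $x_t=(\mu_t',y_{t-1}',\dots,y_{t-p}')'$ the (nonstationary) regressor vector. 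The central estimate I would prove, exactly along the lines of the proof of Theorem \ref{thm:inferenceCH}, is
\[
\frac1T\sum_t\bigl\|\hat\eta_{t+h}(h)-\eta_{t+h}(h)\bigr\|^2=\frac1T\bigl\|P_X\,\eta(h)\bigr\|^2=O_p(1/T),
\]
where $\eta(h)$ and $X$ stack the $h$-step errors and regressors over $t$ and $P_X$ is the projection onto the column space of $X$. This follows from the Sims--Stock--Watson/Tsay rotation: after a fixed change of basis $R$ and a diagonal normalization $D_T$ (with entries $\sqrt T$, $T$, $T^{3/2},\dots$ reflecting the polynomial-trend and unit-root content of $x_t$), the normalized Gram matrix $D_T^{-1}R^{-1}(X'X)(R^{-1})'D_T^{-1}$ converges to a (possibly random) positive-definite limit while $D_T^{-1}R^{-1}X'\eta(h)=O_p(1)$ — even though $\eta(h)$ here is an MA error rather than an MDS. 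Carrying this rotation through the weighted quadratic forms, the pieces of $\Delta_t$ in (ii), and the pieces of (i) that multiply $x_t$, each contribute $O_p(1/T)$ to $\frac1T\sum_t\|\Delta_t\|^2$, so $\frac1T\sum_t\|\Delta_t\|^2=O_p(1/T)$; this last bookkeeping — checking that the relevant sample second moments of the nonstationary regressors against the random weights built from $\eta_t$ and $Z_t$ are of the claimed order — is where Assumption \ref{ass:innovations}.3 is consumed and is the routine but lengthy part.

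Finally I would assemble
\[
\hat\Omega-\tilde\Omega=\sum_{|\ell|\le B_T}K(\tfrac{\ell}{T})\,\frac1T\sum_t\bigl(\Delta_t\tilde\Xi'_{t+\ell}+\tilde\Xi_t\Delta'_{t+\ell}+\Delta_t\Delta'_{t+\ell}\bigr)+o_p(1),
\]
the $o_p(1)$ absorbing the negligible end-effects from the differing summation ranges. With $\sup_x|K(x)|<\infty$, Cauchy--Schwarz in $t$, and $\frac1T\sum_t\|\tilde\Xi_{t+\ell}\|^2\le\frac1T\sum_s\|\tilde\Xi_s\|^2=O_p(1)$ uniformly in $\ell$, the two cross terms are bounded by $(2B_T+1)\sup|K|\,\bigl(\frac1T\sum_t\|\Delta_t\|^2\bigr)^{1/2}O_p(1)=O_p(B_T/\sqrt T)=o_p(1)$, which is exactly where the hypothesis $B_T^2/T\to0$ is used, while the quadratic term is $O_p(B_T/T)=o_p(1)$. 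Combining with $\tilde\Omega\pTo\Omega$ yields $\hat\Omega\pTo\Omega$. The main obstacle is the third step: obtaining the sharp $O_p(1/T)$ rate — not merely $o_p(1)$ — for $\frac1T\sum_t\|\Delta_t\|^2$ uniformly over the I(0)/I(1)/polynomial-trend block structure of $x_t$, since the $O(B_T)$ terms in the kernel sum amplify that rate, which is precisely why the bandwidth must satisfy $B_T^2/T\to0$ rather than only $B_T/T\to0$.
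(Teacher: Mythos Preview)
Your proposal is correct and follows essentially the same route as the paper's proof: both split $\hat\Omega-\Omega$ into an infeasible-HAC piece $\tilde\Omega-\Omega$ handled by \cite{andrews1991heteroskedasticity} under the mixing and moment conditions of Assumption~\ref{ass:innovations}, and an estimation-error piece $\hat\Omega-\tilde\Omega$ controlled via the Sims--Stock--Watson/Tsay rotation $\tilde X_t=D_T^{-1}R^{-1}X_t$ together with Cauchy--Schwarz, yielding the same final rate $O_p(B_T/\sqrt T)$ and hence the same bandwidth requirement $B_T^2/T\to0$.

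The only notable packaging difference is that you compress the estimation-error analysis into the single scalar bound $\tfrac1T\sum_t\|\Delta_t\|^2=O_p(1/T)$ and then apply Cauchy--Schwarz in $t$ uniformly in $\ell$, whereas the paper expands $\hat\Psi_t'aa'\hat\Psi_{t+\ell}-\Psi_t'aa'\Psi_{t+\ell}$ into cross terms and bounds each $\sum_t(\cdot)$ by $O_p(\sqrt T)$ for fixed $\ell$ before summing over $\ell$. Your formulation has the advantage that $\ell$-uniformity is automatic and the instrument block $(\hat\eta_t\hat\eta_{1t}-\overline{\hat\eta_t\hat\eta_{1t}})(Z_t-\bar Z)$ is absorbed by the same $L^2$ bound via H\"older (using exactly the $8{+}2\epsilon$ and $4{+}\epsilon$ moments in Assumption~\ref{ass:innovations}.3), whereas the paper treats that block by a separate $\max_t|Z_t-\bar Z|=O_p(T^{1/4})$ argument; conversely, the paper's term-by-term expansion makes the role of the rotated-regressor limits $\Lambda_T^h$ from Lemma~\ref{lem:residuals} more explicit. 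Either organization delivers the result.
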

\begin{proof}
    See Appendix Section \ref{sec:app_proof_hac}
\end{proof}
Since we are estimating $\Omega$ of a stationary mixing process $\xi_t$, we suggest using the conventional rule for HAC inference, $ B_T= \{c \sqrt[3]{T}\}$, which is appropriate for the Bartlett kernel.\footnote{In our simulations, we use  $c=0.75$  as the rule of thumb \citep{stock_watson_2021_intro_econometrics}.}

  Recent literature emphasizes that when data is very persistent, HAC estimators do not perform well \cite[see, for example][]{lazarus2021size}.
  In particular, in the near-to-unity asymptotics, one should use a very large bandwidth $B_T$ to control the size of $t$ tests.
  This is not the case under our Assumption \ref{ass:innovations}.2 (weak dependence) for the innovations $\eta_t$.
If, in addition, the innovations are two-sided MDS, then \cite{montiel2021local} showed that one can use $B_T=0$ for inference on individual coefficients $\widehat C_h$.
However, even under these stronger assumptions, joint inference on the entire vector $\theta $ still requires $B_T$ to grow with the sample size to account for the potential presence of conditional heteroscedasticity of unknown form.

\subsection{Dependent wild bootstrap}

A HAC variance estimator could be used for inference on functions of $\theta$, such as smoothed local projection estimators, within a plug-in delta-method framework.

From a computational perspective, though, it may be preferable to use simulation-based methods.
The dependent wild bootstrap (DWB) approach of \cite{shao2010dependent} can be used as a numerical implementation of the delta method that allows making joint inference on multiple functions of $\theta$ simultaneously and without explicit analytical derivatives.
DWB is also computationally preferable to generating random draws directly from the asymptotic normal distribution with estimated $\Omega$, which can be high-dimensional \citep{chang2023testing}.
A specific  implementation is presented in the following algorithm.

\begin{algorithm}
\caption{Dependent wild bootstrap}\label{algo:DWB}

 \begin{enumerate}
    \item[Step 1.]  Select bandwidth $B_T$ according to Theorem \ref{thm:hac}. For example, $B_T= 0.75 \sqrt[3]{T} $.
   \item[Step 2.] Generate i.i.d. $\zeta^{s}_t\sim N(0,\frac{1}{B_T})$ for $t=-B_T+1,-B_T+2,...,0,1,...T$.
     \item[Step 3.] Recursively compute $MA(B_T)$ process  $\nu^s_t=\zeta^{s}_t+\zeta^{s}_{t-1}+...+ \zeta^{s}_{t-B_T+1}$ , $t=1,...T$.
     \item[Step 4.]  Compute        $ \hat\theta^s  = \hat\theta  + (T-H-p)^{-1}\sum_{t=p}^{T-H}\hat\xi_t \nu^s_t.$
\end{enumerate}

\end{algorithm}

Algorithm \ref{algo:DWB} is consistent for estimating the asymptotic distribution of $\hat\theta$ using bootstrap draws $ \hat\theta^s$ as the following theorem shows.

 \begin{thm}\label{thm:bootstrap}
 Suppose Assumptions \ref{ass:innovations}-\ref{ass:SSW} hold, and $B_T^2 /T\to 0$. Then $\hat{\theta}^s$ from Algorithm~\ref{algo:DWB} satisfies
 \begin{equation*}
      \lim_{T\to\infty} \sup_{x\in R, \|a\|=1}  |P(\sqrt{T}a'(\hat \theta^s -\hat \theta )<x|y_1,...,y_T)-P(\sqrt{T}a'(\hat \theta - \theta )<x) |=0.
 \end{equation*}

 \end{thm}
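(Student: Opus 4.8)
The plan is to verify the validity of the dependent wild bootstrap by combining (i) the central limit theorem for $\sqrt{T}(\hat\theta(H_1)-\theta(H_1))$ from Theorem \ref{thm:inferenceCH}, (ii) the consistency of the HAC estimator from Theorem \ref{thm:hac}, and (iii) the structure of \cite{shao2010dependent}'s dependent wild bootstrap, which is tailor-made to reproduce a long-run variance of exactly the form $\Omega(H_1)=\sum_\ell \E\,\Xi_t(H_1)\Xi_{t+\ell}'(H_1)$. The key observation is that conditionally on the data, $\sqrt{T}(\hat\theta^s(H_1)-\hat\theta(H_1))=\frac{\sqrt{T}}{T-H_1-p}\sum_{t=p}^{T-H_1}\hat\Xi_t(H_1)u^s_t$ is a Gaussian random vector (because the $u^s_t$ are Gaussian), with conditional mean zero and conditional covariance matrix $\hat\Omega^s(H_1)$ that is, up to negligible end-effects, a weighted sum $\sum_\ell K_{\mathrm{Bartlett}}(\ell/B_T)\,\overline{\hat\Xi_t\hat\Xi_{t+\ell}'}$ — i.e. precisely the Bartlett-kernel HAC estimator $\hat\Omega(H_1)$ appearing in Theorem \ref{thm:hac}. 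So the whole argument reduces to showing that this conditional covariance matrix converges in probability to $\Omega(H_1)$, and then invoking convergence of Gaussian laws whose covariances converge.

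The steps I would carry out, in order, are as follows. First, compute the conditional mean and covariance of $\sqrt{T}(\hat\theta^s-\hat\theta)$ given $y_1,\dots,y_T$: the mean is zero, and the covariance is $\frac{T}{(T-H_1-p)^2}\sum_{s,t}\hat\Xi_s\hat\Xi_t'\,\E[u^s_s u^s_t]$, where $\E[u^s_s u^s_t]=\frac{1}{B_T}\max(0,B_T-|s-t|)$ is the Bartlett weight; this identifies the conditional covariance with $\hat\Omega(H_1)$ plus asymptotically negligible boundary terms. Second, invoke Theorem \ref{thm:hac} to get $\hat\Omega(H_1)\pTo\Omega(H_1)$ under $B_T^2/T\to0$. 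Third, since conditional on the data the bootstrap draw is exactly $N(0,\hat\Omega(H_1)+o_P(1))$, for any fixed unit vector $a$ the conditional law of $\sqrt{T}a'(\hat\theta^s-\hat\theta)$ is $N(0, a'\hat\Omega(H_1)a + o_P(1))$, whose CDF converges (in probability, uniformly in $x$ by Pólya's theorem once the limit variance $a'\Omega(H_1)a$ is positive, and trivially otherwise) to the CDF of $N(0,a'\Omega(H_1)a)$, which by Theorem \ref{thm:inferenceCH} is the limit law of $\sqrt{T}a'(\hat\theta-\theta)$. Fourth, upgrade the pointwise-in-$a$ statement to uniformity over $\|a\|=1$: this follows because the map $a\mapsto a'\Omega(H_1)a$ is continuous on the compact sphere and the Gaussian CDF is uniformly continuous in its variance parameter away from degeneracy, so a standard compactness/subsequence argument (or a direct Lévy-distance bound between two centered Gaussians in terms of the difference of their covariance matrices) gives the uniform conclusion; the supremum over $x$ is handled by Pólya's theorem since the limiting CDFs are continuous.

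The main obstacle is the second step combined with the boundary-term control — i.e. making rigorous that the bootstrap conditional covariance is asymptotically $\Omega(H_1)$ even though the summands $\hat\Xi_t(H_1)$ are \emph{feasible} residual-based quantities built from a possibly nonstationary, cointegrated VAR rather than the true $\Xi_t(H_1)$. One must show that replacing $\hat\eta_t,\hat\eta_{t+h}(h),\hat\Sigma_\eta,\hat\gamma$ by their population counterparts, and replacing the demeaned cross-products by their true centered versions, perturbs the weighted double sum $\sum_\ell K(\ell/B_T)\overline{\hat\Xi_t\hat\Xi_{t+\ell}'}$ by $o_P(1)$ uniformly — this is exactly the content that Theorem \ref{thm:hac} already establishes (it is the feasible HAC consistency result), so I would cite it directly rather than reprove it, and the residual work is only the elementary variance computation for the Gaussian multipliers plus the negligible-boundary and uniform-in-$a$ arguments. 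A secondary, more delicate point is that Theorem \ref{thm:inferenceCH}'s limit is $N(0,\Omega(H_1))$ with $\Omega(H_1)$ possibly singular along the "forbidden" generalized-cointegrating directions flagged in Example \ref{exa:ar1} at the level of $\hat A^{BR}$; but at the level of $\hat\theta(H_1)=vec(\hat\Sigma_\eta,\hat C_1,\dots,\hat C_{H_1},\hat\gamma)$ all components are $\sqrt{T}$-consistent and jointly Gaussian, so the only care needed is to allow $a'\Omega(H_1)a=0$, which the uniform-continuity argument above accommodates without difficulty.
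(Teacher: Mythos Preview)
Your proposal is correct and follows essentially the same approach as the paper: compute the conditional covariance of the Gaussian bootstrap draw, identify it with the Bartlett-kernel HAC estimator up to $O_P(B_T/T)$ boundary terms, invoke Theorem~\ref{thm:hac} for $\hat\Omega(H_1)\pTo\Omega(H_1)$, and conclude by comparing Gaussian CDFs. Your treatment is in fact slightly more careful than the paper's, which does not spell out the uniformity in $a$ over the unit sphere or the degenerate-variance case $a'\Omega(H_1)a=0$; your compactness/continuity argument for the former and your remark that the latter is trivially handled are welcome additions.
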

 \begin{proof}
     See Appendix Section \ref{sec:app_proof_bootstrap}.
 \end{proof}
To our knowledge, this is the first joint bootstrap resampling procedure for LP estimators  that allows  for nonexplosive VARs with unit roots, polynomial trends, cointegration, and weakly dependent conditional heteroscedasticity without pretesting.
The validity of  DWB is based on the fact that it mimics the asymptotic Gaussian distribution of $\hat \theta $ with a covariance matrix that coincides with the HAC estimator given in Theorem \ref{thm:hac}.
This argument does not impose any parametric model on $\eta_t$ and $Z_t$ and is generally applicable under the maintained Assumption \ref{ass:innovations}.2 (weak dependence of innovations).
Unlike the popular recursive residual-based bootstrap procedures  \citep{gonccalves2004bootstrapping,inoue2020uniform}, we allow for asymmetric conditional heteroscedasticity in $\eta_t$ and for a general (multivariate) VAR model with any number of unit roots.

It should be possible to replace the DWB procedure with a block multiplier bootstrap or other resampling methods for dependent data.
However, since    $\hat\xi_t$   depends on possibly nonstationary processes, $y_t$, the validity of these alternative methods requires independent analysis.

 The delta method yields the following immediate corollary to Theorem \ref{thm:bootstrap}  which will be useful in applications of DWB.
\begin{cor}\label{cor:DeltaMethod}
 Suppose that a continuously differentiable mapping $f(\cdot)$ maps $\R^ {dim(\theta)}$ to $\R^k$ such that the Jacobian $\nabla f(\theta)$ at the true value $\theta$ has a rank equal to $k$.
Then under the assumptions of Theorem \ref{thm:bootstrap}, we have
 \begin{align*}
      \lim_{T\to\infty} &\sup_{x\in \R,a\in \R ^k ,\|a\|=1}  |P(\sqrt{T}a'(f(\hat \theta^s) -f(\hat \theta) )<x|y_1,...,y_T)-P(\sqrt{T}a'(f(\hat \theta) - f(\theta) )<x) |=0,\\
\lim_{T\to\infty} & \sup_{x\in \mathbb{R}, a\in \R ^k , \|a\|=1}
\left|
P(\sqrt{T}a'(f(\hat \theta) - f(\theta) )<x)
-
\Phi\left(
\frac{x}{\sqrt{a' \nabla f(\theta)\Omega \nabla f(\theta)' a}}
\right)
\right|
=0.
 \end{align*}

\end{cor}

\begin{rem}
    One can use the delta method to     estimate the forecast-error variance decomposition for each $r=1,\dots,n$ and $H$ using the formula
 \begin{equation*}
     \widehat {FEVD}_r(H) = \frac{\sum_{h=0}^{H-1}  (e'_r\widehat C_h\hat\beta_Z)^2 }{(\hat\beta_Z^\prime \hat\Sigma^{-1}_\eta \hat\beta_Z)\sum_{h=0}^{H-1}  e'_r \widehat C_h \hat \Sigma_\eta \widehat C'_h e_r }.
 \end{equation*}
 Corollary \ref{cor:DeltaMethod} shows that DWB is  a valid inference procedure for this estimator.
\end{rem}

\section{Applications of dependent wild bootstrap } \label{sec:applicationDWB}
Theorem \ref{thm:bootstrap} has multiple practical corollaries. In this section, we consider applications for efficient smoothing of local projections, simultaneous confidence bounds on impulse response functions (IRFs), inference on structural IRFs identified through external VIV robust to weak identification, and specification tests.

\subsection{Efficient estimation using minimum distance approach}\label{subsec:MD}

Local projection estimators of $C_h$ do not impose the parametric restrictions implied by the underlying VAR model. One can explore a potential efficiency gain from constructing smoothed (restricted) LP estimators using the minimum distance (MD) approach. This idea was explored in the literature \citep{jorda2011estimation,barnichon2018functional} under the assumption of stationary $Y_t$.\footnote{\cite{barnichon2019impulse} introduced smooth local projections by adding a ridge-type penalty that enforces the smoothness of the estimated impulse response function across horizons. Their focus is primarily on point estimation, so they adopt a heuristic approach to constructing confidence intervals.}
Using DWB, which is robust to unit roots and time trends, we can now construct the optimal MD estimator of IRFs that  exploits the VAR($p$) recursion restrictions that link $C_h$ across horizons.
Because such a minimum-distance estimator is a smooth transformation of $\hat{\theta}$ under the maintained regularity conditions, Corollary~\ref{cor:DeltaMethod} directly yields valid pointwise and simultaneous inference for the resulting smoothed impulse responses $\widehat C_h^{MD}$, even under conditional heteroscedasticity and weak dependence.

This approach has several practical advantages: (i) MD improves the efficiency of $\widehat C_h^{MD}$ at short horizons;  (ii) it allows one to extrapolate IRFs to arbitrarily long horizons using the VAR structure; (iii) it provides estimates of lag polynomials $A_1,\dots,A_p$ as a byproduct.

To construct the MD estimate using $H'$ LP horizons ( $p\leq H'\leq H$), take any positive definite matrix $\hat W $ that converges in probability to a  positive definite matrix $W$ of dimension $(H'n^2)\times (H'n^2) $ and minimize
   \begin{equation}
        ( \hat A^{MD}_1,\dots, \hat A^{MD}_p) =  \underset{(A_1,\dots,A_p)\in \mathbb{R} ^{p n^2} }{\operatorname{argmin}}  (   g(A_1,\dots,A_p)-\hat g^{LP})' \hat W  (   g(A_1,\dots,A_p)-\hat g^{LP}),\label{eq:MD}
\end{equation}
where $\hat g^{LP} = vec (\widehat C_1,\dots,\widehat C_{H'})$ denotes the vector of local projection estimators and $g(A_1,\dots,A_p) = vec (C_1(A_1,\dots,A_p),\dots,C_{H'}(A_1,\dots,A_p))$ denotes the corresponding vector of recursively computed IRF coefficients for any values of the matrices $(A_1,\dots,A_p)$.
From the theory of minimum distance estimators (see, for example, \cite{hansen2022econometrics}), the optimal choice of $\hat W$ corresponds to the inverse of the long-run covariance matrix, as given in Theorem \ref{thm:inferenceCH} earlier in this paper.
In small samples or in cases where $H'$ is large, however, the HAC estimator given in Theorem \ref{thm:hac} may result in a non-invertible matrix.
In this paper, we instead use a diagonal matrix $\hat W$ with the vector of inverse variances of the components of $\hat g^{LP} $ on the diagonal.
These variances are estimated using the DWB sample.

Based on the minimum distance estimates $(\hat A^{MD}_1,\dots,\hat A^{MD}_p)$,
we can compute the smoothed impulse responses for any horizon $H$ using the recursion
$    \widehat C^{MD}_0 \bydef I$,$\widehat C^{MD}_h \bydef \sum_{\ell=1}^{\min(p,h)} \hat A^{MD}_\ell \widehat C^{MD}_{h-\ell}$, for $ h=1,\dots,H$.

Under the correct VAR($p$) specification, the limit of \eqref{eq:MD} coincides with the true parameters $A_1,\dots,A_p$.\footnote{See Remark \ref{rem:HausmanTest} in the next subsection for a test of misspecification. }
Moreover, the estimator $(\hat A^{MD}_1,\dots,\hat A^{MD}_p)$ is a smooth transformation of $\hat \theta$. So, by  Corollary \ref{cor:DeltaMethod}, the DWB from Theorem \ref{thm:bootstrap} remains valid for $\widehat C^{MD}_i$.
By implication, we can use it for both pointwise inference and simultaneous inference on $C_h$ across horizons.

The optimization program in equation \eqref{eq:MD} can be solved using quasi-Newton methods. The gradient of the objective function can be computed analytically. As the initial point for parameters $(A_1,\dots,A_p)$, one can use the   corresponding LS estimator for model \eqref{var1}. To compute the DWB sample $C^{s,MD}_{h}$, one can use $A^{MD}_{1},\dots,A^{MD}_{p} $ as the initial point in \eqref{eq:MD} and compute only a small number of quasi-Newton steps \citep{robinson1988stochastic,andrews2002equivalence}.

In the special case where the number of smoothed LP coefficients $H'=p$, the objective function \eqref{eq:MD} can be set exactly to $0$.
For each $i=1,\dots,p$  compute
   \begin{equation*}
       \hat A^{MD}_i =    \widehat C_i - \sum_{\ell=1}^{i-1}   \hat A^{MD}_\ell  \widehat C_{i-\ell}
   \end{equation*}
   assuming $\widehat C_{0} = I$ and $\hat A^{MD}_\ell=0$ for $\ell\leq0$.
This algorithm, in the case of $p=1$, further simplifies to  $\hat A^{MD}_1 = \widehat C_1$.
In other words, if  $p=1$, the local projection estimator $\widehat C_1$ from \eqref{eq:hlagVAR} coincides with the lag-augmented VAR estimator of IRF proposed in \cite{toda1995statistical}.
Since any VAR(p) model can be represented as VAR(1) by expanding the vector of observations $y_t$ to include the $p$ lags, one can see that MD with the number of smoothed horizons $H'=p$ nests lag-augmented VAR estimators that are robust to $p$ unit roots.
Although such estimators can be used for valid inference of the IRF, they may result in very wide confidence bands in the nonstationary case.
The choice of smoothed horizons $H'$ is explored later using simulations in Section \ref{subsec:MC}.

To address potential  efficiency loss from the  choice of suboptimal diagonal $\hat W$, we employ an optimal component-wise linear combination of $\widehat C_{h,i,j}$ and $\widehat C^{MD}_{h,i,j}$. For each  $(h,i,j)$, let
the weighted estimator be
\begin{equation}
    \widehat C^{\omega}_{h,i,j}
    =
    \widehat \omega_{h,i,j} \widehat C^{MD}_{h,i,j}
    +
    (1-\widehat \omega_{h,i,j})\widehat C_{h,i,j},\label{eq:WMD}
\end{equation}
where the scalar weight $ \widehat \omega_{h,i,j}$ is computed from the bootstrap variances and covariances of the two estimators,
\begin{equation*}
    \widehat \omega_{h,i,j}
    =
    \frac{
        \widehat{\operatorname{Var}}( \widehat C^{s}_{h,i,j})
        -
        \widehat{\operatorname{Cov}}(\widehat C^{s}_{h,i,j},\widehat C^{s,MD}_{h,i,j})
    }{
        \widehat{\operatorname{Var}}(\widehat C^{s,MD}_{h,i,j})
        +
        \widehat{\operatorname{Var}}(\widehat C^{s}_{h,i,j})
        -
        2\widehat{\operatorname{Cov}}(\widehat C^{s,MD}_{h,i,j},\widehat C^{s}_{h,i,j})
    }.
\end{equation*}
 The weighted estimator $ \widehat C^{\omega}_{h,i,j}$ shrinks the unrestricted LP estimate toward the smoother VAR-recursive estimate when the latter is empirically less variable (typically at short horizons)
, while retaining the classical LP estimate when the smoothing restriction appears comparatively noisy (typically at long horizons).
Whenever both raw and smoothed LP estimators are asymptotically normal and consistent by Corollary \ref{cor:DeltaMethod}, this optimal weighted combination results in an asymptotic variance that is smaller than or equal to the minimal variance of the two alternatives.\footnote{
Asymptotic normality of smoothed estimators may fail at longer horizons if  the largest characteristic root of the lag polynomial  is close to zero.
In that case,  Corollary \ref{cor:DeltaMethod} is not applicable to smoothed IRF $\widehat C^{MD}_{h,i,j}$, since the corresponding Jacobian is rank deficient.
Instead, $\widehat C^{s,MD}_{h,i,j}$ becomes super-consistent and has a non-standard asymptotic distribution.
The variance of $\widehat C^{MD}_{h,i,j}$ becomes asymptotically negligible when compared to $\widehat C_{h,i,j}$.
To address this degeneracy, we regularize the weights $ \widehat \omega_{h,i,j}$ by replacing the variance estimate $  \widehat{\operatorname{Var}}(\widehat C^{s,MD}_{h,i,j})$ with   $   \max \{\widehat{\operatorname{Var}}(\widehat C^{s }_{h,i,j})c,\widehat{\operatorname{Var}}(\widehat C^{s,MD}_{h,i,j})\}$ for some $c\leq 1$ and rescaling the covariance estimator accordingly. }

\subsection{Joint inference using sup-$t$ critical values} \label{subsec:supt}

Another  use of the joint inference procedures outlined in Theorems \ref{thm:inferenceCH}-\ref{thm:bootstrap} is simultaneous confidence bounds that cover the entire IRF for a particular time series over all horizons with a given nominal probability.
The problem has been extensively studied in stationary VAR models \citep[see, for example,][]{jorda2009simultaneous,inoue2016joint,montiel2019simultaneous,inoue2022joint, arias2023uniform}.
 DWB can be used to construct sup-$t$ confidence bounds, which are robust to unit roots.
The following algorithm can be used for joint inference on any smooth functions $f_h(\theta)$ (for example,  IRF($h$) over multiple  $h=1,\dots,H$):

\begin{algorithm}
\caption{sup-$t$ simultaneous confidence bounds}
\label{algo:supt}
\begin{enumerate}

\item[Step 1.]
Use Algorithm \ref{algo:DWB} to compute
$\{\hat{\theta}^s\}_{s=1}^S$.

\item[Step 2.]
Estimate the standard error for each
$h=1,\dots,H$:
\begin{equation*}
\hat\sigma_h =
\frac{
q(f_h(\hat{\theta}^s),\Phi(1))
-
q(f_h(\hat{\theta}^s),1-\Phi(1))
}{2},
\end{equation*}
where $q(f_h(\hat{\theta}^s),\alpha)$ is the order statistic of
$\{f_h(\hat{\theta}^s)\}_{s=1}^S$
corresponding to the quantile $\alpha$.

\item[Step 3.]
For each bootstrap draw $s=1,\dots,S$, compute
$\tau^s =
\sup_{h=1,\dots,H}
\frac{
|f_h(\hat{\theta}^s)-f_h(\hat{\theta})|
}{
\hat\sigma_h
}.$

\item[Step 4.]
Compute $ CV_\alpha = q(\tau^s,1-\alpha).$

\item[Step 5.]
For all $h$, compute
$
\left[
f_h(\hat\theta)-\hat\sigma_h CV_\alpha,
\;
f_h(\hat\theta)+\hat\sigma_h CV_\alpha
\right].
$

\end{enumerate}
\end{algorithm}

In large samples, by the Corollary \ref{cor:DeltaMethod} (delta method), the bootstrap distribution of $f_h(\hat{\theta}^s)$ is approximately Gaussian.
As a result, the difference between the $q(f_h(\hat{\theta}^s),\Phi(1))$ and $q(f_h(\hat{\theta}^s),1-\Phi(1))$ quantiles is converging in probability to 2 standard deviations of $f_h(\hat{\theta})$ as $T,S\to\infty$.
Step 2 can be replaced with a conventional bootstrap standard error estimator if the sample size is sufficiently large.\footnote{ Substituting the interquartile range for the standard deviation is frequently used in the bootstrap literature \citep[e.g.][]{hansen2022econometrics}.
This procedure is robust to the potential non-existence of second moments of functions $f_h(\hat\theta)$ in small samples.}
Another advantage of using the standard error estimator based on the quantiles is that the simultaneous bounds coincide with the $68\%$ pointwise bounds if we set the critical value $CV_\alpha=1$.

\begin{rem}\label{rem:HausmanTest}
   Another application of the sup-$t$ test is as a model specification test for the $VAR(p)$ model, an analog of the \cite{hausman1978specification} test. By Corollary \ref{cor:DeltaMethod}, DWB can correctly approximate the asymptotic distribution of the difference $\widehat C_h-\widehat C^{MD}_h$, which is  a smooth transformation of $\widehat C_h$.
 The null hypothesis of correct specification is rejected if $\sup_{i=1,n;j=1,n;h=1,H;}|\widehat C_{h,i,j}-\widehat C^{MD}_{h,i,j}|$ exceeds the $(1-\alpha)$ quantile of the   DWB distribution, $\sup_{i=1,n;j=1,n;h=1,H;}|\widehat C^s_{h,i,j}-\widehat C_{h,i,j}-(\widehat C^{s,MD}_{h,i,j}-\widehat C^{MD}_{h,i,j})|$.\footnote{\cite{chernozhukov2017clt} shows that high-dimensional sup-$t$ inference works for estimators with component-wise Gaussian limits, even in the case where the number of components grows at a polynomial rate with sample size.}
\end{rem}

\subsection{Inference on structural IRFs identified using potentially weak  VIV}
\label{sec:HetIV}

The  applications of the DWB considered in the previous sections were concerned with the reduced-form parameters in SVAR models.
It can also be useful for inference on structural impulse response $C_h \beta$, where $\beta$ is identified using additional external moment restrictions.
To illustrate the core of this problem, consider a  VAR(0) model or a simultaneous equations  model
\begin{equation*}
    \eta_t =  B \varepsilon_t,
\end{equation*}
where observations $\eta_t\in \R^n$ and $\varepsilon_t$ are i.i.d. zero-mean processes with $\E \varepsilon_t \varepsilon_t'=I_n$ that are referred to as structural shocks.\footnote{The simultaneous equations model with lags was first proposed and studied in  \cite{haavelmo1943statistical,mann1943statistical}.} Suppose that we are interested in measuring the impact of the structural shock $\varepsilon_{1t}$ on the vector of observables $\eta_t$, which is given by the first column of $B=(b_{\cdot1},\dots, b_{\cdot n})$.  We can estimate $\Sigma_\eta \bydef \E \eta_t \eta_t' $ using the sample analog and then solve equation $\Sigma_\eta= B  B'$ for the first column of $B$.
However, there is a continuum of solutions to this equation.
Therefore, additional assumptions or data are required to uniquely select $B$.

Many policy counterfactual questions only require one to identify a single column  $\beta = b_{\cdot1}$ of $B$.
Recently, several papers have proposed using moment conditions for the heteroscedasticity of the innovation process to identify the whole matrix $B$,  \citep{rigobon2003identification,lewis2021identifying}.
When a VAR model has many variables, this full identification approach may be challenging to interpret.
To make the interpretation of any single shock more transparent, one can reformulate this identification approach  as an external VIV problem.
Before we generalize the novel identification proposal, let us consider a simple motivating example.
\begin{example}\label{ex:VAR0}
Suppose $n=2$ and  that structural shocks have a representation
\begin{align*}
    \varepsilon_{it}=\zeta_{it}\sigma_{it},
\end{align*}
such that $\zeta_{it}$ is an i.i.d. sequence with   $\E \zeta_{t}=0$ and $\E \zeta_{t}\zeta'_{t}=I_2$; the process $\sigma_{it}$ is a stationary ergodic process that is independent of $\zeta_{it}$.
The volatility process $\sigma^2_{1t}$ for the shock of interest may be correlated with some observable time series  of    $Z_t$.
Suppose that this time series $Z_t$ only shifts the volatility  of the first structural shock but does not change volatility for other shocks.
We can then use it as a VIV for the identification of $b_{\cdot1}$.

Indeed, suppose that
  $\E  (Z_{t}-\E Z_{t} )(\sigma^2_{it}-\E\sigma^2_{it})= \rho_i,$
such that $\rho_1 \neq 0$ and $\rho_2=0$.

Then since
\begin{align*}
      \eta^2_{1t}&= (b_{11} \varepsilon_{1t}+ b_{12} \varepsilon_{2t})(b_{11} \varepsilon_{1t}+ b_{12} \varepsilon_{2t}), \\
       \eta_{1t}\eta_{2t}&= (b_{11} \varepsilon_{1t}+ b_{12} \varepsilon_{2t})(b_{21} \varepsilon_{1t}+ b_{22} \varepsilon_{2t}),
\end{align*}
we get, assuming that $b_{11}$ is not zero,

\begin{align*}
 \frac{ b_{21}}{ b_{11}}=    \frac{  \E   \eta_{1t}\eta_{2t}(Z_t -\E Z_{t})}{\E \eta^2_{1t}(Z_t -\E Z_{t})} .
\end{align*}\qedsymbol

\end{example}
Consider the general case where $\beta_Z  \bydef  \E(\eta_{t}\eta_{1t}-\E\eta_{t}\eta_{1t})(Z_t - \E Z_t)$. The identification conditions in  Example \ref{ex:VAR0} can be generalized to
\begin{assumption} \label{ass:instuments} VIV and shocks satisfy $\beta_{Z1} \neq 0$ and   for some $\rho_1\neq0$,   \\
  $\E \varepsilon_{it}\varepsilon_{jt} (Z_t - \E Z_t) =\rho_1 \mathds{1}\{i=j=1\}$
.
\end{assumption}
This assumption implies that the heteroscedasticity of the innovations for this time series must be correlated with the VIV $Z_t$.
In the monetary policy application, the federal funds rate is used as the normalizing variable, so the unit-impact normalization requires that the contemporaneous effect of the monetary policy shock on this variable be nonzero. The VIV can only affect the conditional variance of the shock of interest and cannot impact the conditional correlation of the other structural shocks.
\begin{thm}\label{thm:identification}
Suppose that Assumption \ref{ass:instuments} holds.
Then the impact vector for $\varepsilon_{1t}$ on $\eta_t$
is $\beta_Z/\beta_{Z1}$ with normalization of a unit impact on time series $i=1$ .
\end{thm}
\begin{proof}
    See Appendix \ref{app:proof_ID}.
\end{proof}
How can one find $Z_t$ for a particular shock? A recent paper by \citet{lewis2021identifying} shows that when the heteroscedasticity of a structural shock is persistent, one can use \emph{internal} VIV $Z_t = \eta^2_{t-1}$.
However, the approach can be applied more generally if appropriate \emph{external} VIV are available, as in our empirical application in Section \ref{sec:hetiv_id} which generalizes
\cite{rigobon2004impact} to a standard monthly monetary SVAR.
Specifically, in our application,  we use measures of FOMC meeting intensity as external VIV for changes in monetary-policy shock volatility.
Using Corollary \ref{cor:DeltaMethod}, one can conduct pointwise and simultaneous inference on the normalized structural impulse response functions $\psi_h=\frac{1}{\beta_{Z1}}C_h\beta_Z $, which is a smooth function of $\theta$ under Assumption \ref{ass:instuments}. The reduced-form IRF $C_h$ can be estimated using the MD approach from Section \ref{subsec:MD}.\footnote{Formula \eqref{eq:WMD} for the weighted estimator can be directly adapted to the product $\psi_h $ by substituting pairs of plug-in  estimators of $\hat\psi_h=\frac{1}{\hat\beta_{Z1}}\widehat C_h \hat \beta_Z$ and $\hat\psi^{MD}_h=\frac{1}{\hat\beta_{Z1}}\widehat C^{MD}_h \hat \beta_Z$ for $\widehat C_h$  and $C^{MD}_h$. }

\begin{rem}
 Suppose that we have two competing identification assumptions, for example, VIV and Cholesky.
We can construct a joint confidence set for the vector of differences in IRFs under these two schemes, $\frac{1}{\hat\beta_{Z1}}\widehat C_h \hat \beta_Z - \frac{1}{\hat\beta_{{\Sigma},1}}\widehat C_h \hat \beta_{\Sigma}$. Algorithm \ref{algo:supt} can then be used to compute sup-$t$ critical values for a test of hypothesis $\frac{1}{\beta_{Z1}} C_h \beta_Z = \frac{1}{\beta_{{\Sigma},1}} C_h  \beta_{\Sigma}$.
\end{rem}

In the context of full structural shock identification, \cite{lewis2022robust} shows that the original moment conditions in \cite{rigobon2004impact} have a weak identification problem, which complicates inference in realistic sample sizes.
The standard approach for weak-IV robust confidence bounds is  based on AR-type test inversion \citep{anderson1949estimation,montiel2021inference,lewis2022robust}.
We can use DWB to implement confidence sets that are robust to weak IV and an arbitrary number of unit roots (Assumption \ref{ass:SSW}).

Weak IV situations occur when $\beta_{Z1}$ is close to zero, so the Assumption \ref{ass:instuments} is nearly violated.
In such a situation, plug-in delta method inference becomes unreliable since $\hat\psi_{h}$   is a non-differentiable function of $\hat \theta$.
Nevertheless, it is possible to construct an asymptotically  valid confidence interval for $\psi_h$ using the Anderson-Rubin (AR) approach.
The idea is to invert the two-sided t-test of an equivalent linear hypothesis for each $k=1,\dots,n$.
\begin{equation*}
   H_0: e_k'  C_h  \beta_Z =\psi_{hk} \beta_{Z1}.
\end{equation*}
Using DWB, one can construct confidence sets for each structural IRF $\psi_{hk}$ for time series $k$ at the horizon $h$.
\begin{algorithm}
\caption{Anderson-Rubin confidence bounds}
\label{algo:AR}
        \begin{enumerate}
             \item[Step 1.] Pick $\psi^0_{hk}$  on a grid
             \item[Step 2.] Compute statistic $\zeta(\psi^0_{hk}) = e_k'  \widehat C_h  \hat\beta_Z -\psi^0_{hk} \hat\beta_{Z1}  $
             \item[Step 3.] Compare $|\zeta(\psi^0_{hk})|$ with DWB bootstrap $1-\alpha/2$ quantile  of
            \begin{equation*}
               \zeta^s(\psi^0_{hk}) = |  e_k'  (\widehat C^s_h  \hat\beta^s_{Z}-\widehat C_h  \hat\beta_Z) -\psi^0_{hk} (\hat\beta^s_{Z1} -\hat\beta_{Z1})|
            \end{equation*}
             \item[Step 4.] Collect all values $\psi^0_{hk}$ on a fine grid for which statistic  $|\zeta(\psi^0_{hk})|$ falls below the corresponding critical values. Take convex combination of the included grid points.
        \end{enumerate}
\end{algorithm}
This algorithm can be modified to improve efficiency by substituting minimum distance and weighted analogs for $\widehat C_h$.
The validity of Algorithm \ref{algo:AR} follows immediately from Corollary \ref{cor:DeltaMethod} and the fact that we are using a linear hypothesis.

Despite the necessity to recompute critical values for each hypothetical  $\psi^0_{hk}$,   Algorithm \ref{algo:AR} is relatively easy to compute.
The key parameter of interest is a scalar that requires only a univariate grid search.
The bootstrap sample   $\{\widehat C^s_h, \hat \beta^s_Z\}^{S}_{s=1}$ only needs to be computed once.

\section{Simulations and Empirical Application} \label{sec:symandempirics}
\subsection{Efficiency gains from smoothing}\label{subsec:MC}

We simulate the AR(1) model
\begin{align}
    y_t&=\rho y_{t-1}+\varepsilon_t, \qquad t=1,\ldots,T,\label{eq:AR-ARCH}\\
    \varepsilon_t&=(1+\gamma \mathbf{1}\{\zeta_{t-1}\le 0\})\zeta_t,\qquad \zeta_t\sim i.i.d. N(0,1)\qquad t=1,\dots,T. \label{eq:ARCH}
\end{align}
We consider the following grid $
\rho\in\{0,\,0.5,\,0.95,\,1\}$,
$T\in\{800,2400\}$,
$h\in\{1,\ldots,60\}$,
$\gamma\in\{0,\,1\}$,
where $\gamma=0$ corresponds to homoscedastic innovations and $\gamma=1$ introduces  conditional heteroscedasticity.
The asymmetric heteroscedasticity model \eqref{eq:ARCH} satisfies Assumption \ref{ass:innovations}, but violates the two-sided MDS innovations assumption maintained in \cite{montiel2021local}. A simple residual bootstrap without blocks is not guaranteed to consistently approximate the asymptotic distribution. This model is, however, empirically relevant, as it mimics a possible volatility boost of macro variables following a large negative structural shock.
Grid values for $\rho$ are chosen to represent four scenarios: (i) White noise  ($\rho=0$); (ii) Weakly persistent data ($\rho=0.5$); (iii) Near-unit root   ($\rho=0.95$); (iv) Unit root   ($\rho=1$).

The goal of the exercise is to compare the coverage probability and median length of confidence bounds for IRF $C_h=\rho^h$ in model \eqref{eq:AR-ARCH} for our novel procedures and the existing alternatives. We compare five lag-augmented procedures: (i) LP with confidence bounds robust to conditional heteroscedasticity (LP-DWB); (ii) Smoothed LP with percentile DWB (LP-Smoothed) (iii) Efficient weighted average of LP and Smoothed LP (LP-Weighted) (iv) LP with simple Eicker-Huber-White standard errors proposed in \cite{montiel2021local} (LP-White) (v) Iterative estimates using a lag augmented AR model with percentile residual bootstrap proposed in \cite{inoue2020uniform} (AR-LA-Residual).
We focus on the lag augmented procedures since, without lag augmentation, the unit root case requires pretesting and first differencing.
We use 500 bootstrap draws and 1000 Monte Carlo simulations for all the designs and the competing methods.
\begin{figure}[H]
\begin{center}
 \includegraphics[width=0.9\linewidth]{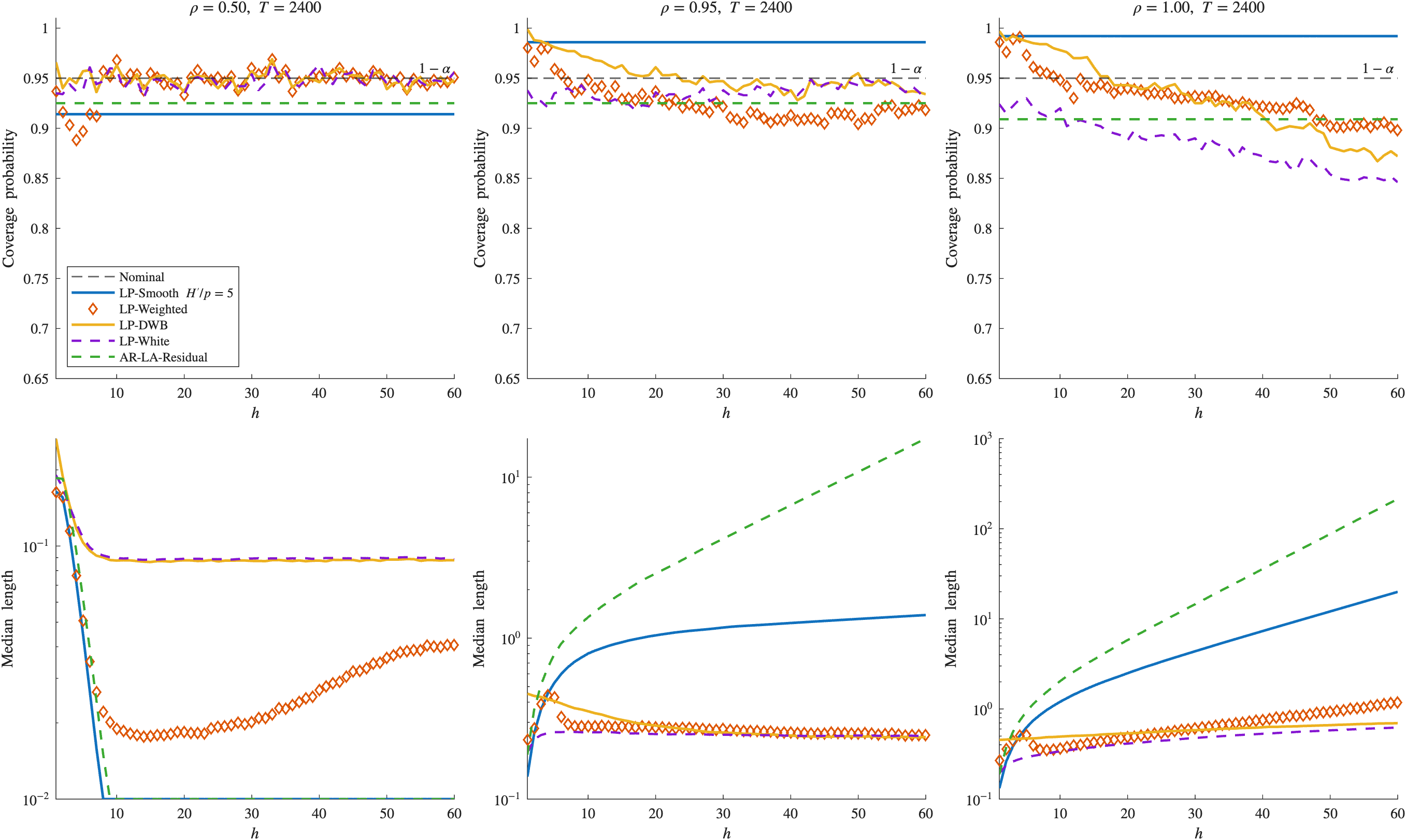}
\caption{Coverage probability and median interval length in AR(1) design with  asymmetric conditional heteroscedasticity ($\gamma=1$) for large sample ($T=2400$).
}\label{fig:MC_2400_main}
\end{center}
\end{figure}
Figure \ref{fig:MC_2400_main} compares the coverage probability and confidence interval (CI) length in large samples with persistent data, where the asymptotic efficiency comparison becomes evident. This figure shows the smoothed and weighted LP estimates for $H'=5$, which correspond to the ratio of $H'/p=5$ used in our empirical application.
We focus on the asymmetric heteroscedasticity designs in the main text while delegating the homoscedastic case to Supplementary Appendix  Figure \ref{fig:MC_gamma_0}.
For weakly persistent data ($\rho=0.5$), all procedures are relatively close to the nominal size for all horizons, with LP-DWB and LP-White coverage being the most precise. The corresponding CI length for non-linear plug-in estimators (AR-LA and LP-Smoothed) is orders of magnitude smaller than that for LP procedures, with LP-Smoothed having the shortest intervals. It is noticeable that LP-Weighted delivers much shorter CIs than basic LP estimators, with nearly identical coverage probabilities for $h\geq8$.
For near-unit root design ($\rho=0.95$), the order of median length flips, and the AR-LA and LP-Smoothed become one order of magnitude wider than other procedures for $h\geq H'$.
In the unit root case ($\rho=1$), the gap becomes even wider. In this design, the LP-Weighted estimator shows the most precise coverage probability of all the procedures, with a length comparable to simple LP-White without compromising the coverage.
Supplementary Appendix Figure \ref{fig:MC_rho_0} shows corresponding results for the white noise case ($\rho=0$). Since the first order derivative of $C_h=\rho^h$ is zero for $h\geq1$, the first order delta method is only applicable to $h=1$. As a result, AR-LA and LP-Smoothed show very low coverage probability for every even $h$. In contrast, LP-DWB and LP-White do not rely on the delta method and show coverage close to $95\%$ uniformly over $h$.  LP-Weighted procedure has a built-in regularization that bounds the weight on LP-Smoothed component away from 1. As a result, the LP-Weighted procedure preserves robustness to delta-method failure while improving the efficiency of the CIs.

Finally, Supplementary Appendix Figures \ref{fig:MC_hprime_800} and \ref{fig:MC_hprime_2400}
 show the effect of the tuning parameter choice $H'$ on the performance of the weighted procedure.
 For smaller $\rho$, it may be beneficial  to ``under-smooth'' ($H'=p$) to improve coverage probability at small $h$. For more persistent data, however, this under smoothing results in overly wide CIs. Over-smoothing, $H'=60p$, becomes problematic for very persistent data because of accumulated bias in LP estimates at long horizons. We recommend a rule of thumb $H'=5p$. As the sample size grows from 800 to 2400, the effects of the choice of $H'$ on coverage probability become weaker, and higher smoothing results in shorter CIs, which is consistent with the asymptotic efficiency property.

\subsection{Identification using VIV in a DSGE model}
 The Supplementary Appendix Section \ref{sec:dsge} studies the finite-sample performance of the proposed VIV
 local projection procedure and the associated dependent wild bootstrap confidence
intervals in data generated by a well-known medium-scale New Keynesian model \cite{smets2007shocks}. The goal of this exercise is twofold. First, we illustrate that the VIV moments can correctly recover structural impulse responses to a monetary policy shock in a realistic model.
Second, we compute the coverage and average length of the confidence sets based on the dependent wild bootstrap in a misspecified VAR(8) model when the true model is VARMA(3,2).
Using a small scale Monte Carlo study, we find that LP IRF estimates identified using VIV can successfully estimate the true IRFs to a monetary policy shock.
Weighted IRFs allow one to considerably reduce the length of the confidence intervals while maintaining good coverage probability, especially for highly persistent variables estimated in levels.

\subsection{Empirical application to US monetary policy }\label{sec:monetaryshock}

Our last empirical exercise, illustrating the usefulness of the DWB, reconsiders the classical question of identifying the impact of monetary policy shocks in the US.
The identification of monetary policy shocks in SVAR models goes back to \cite{sims1980comparison}.
This application has been extensively studied, with many identification schemes proposed, including the classical Cholesky decomposition \citep{sims1980comparison,christiano1999monetary}, external  instrumental variables \citep[for example,][]{romer1989does,romer2004new,gertler2015monetary}, sign restrictions \citep[for example,][]{uhlig2005effects,gafarov2018delta,antolin2018narrative}, high-frequency identification \cite[for example,][]{faust2004identifying}, and regime-switching models \citep[for example,][]{sims2006were,lutkepohl2017structural}, to name a few.\footnote{For foundational context and a detailed examination of prior empirical efforts, we refer the reader to seminal works by \cite{christiano1999monetary}, \cite{boivin2010has}, and \cite{ramey2016macroeconomic}, which provide comprehensive reviews on the subject.}
Such an abundance of empirical studies  allows us to evaluate the practical implications of the novel weighted LP estimators and the DWB confidence sets in both strongly identified  (Cholesky identification) and more agnostic but weakly identified  (VIV identification) schemes.

We focus on the reduced-form SVAR model with six variables studied in seminal papers, including, but not limited to \cite{christiano1999monetary}, \cite{uhlig2005effects}, \cite{antolin2018narrative}.
Our specification includes six variables for the U.S.: real GDP, GDP deflator, commodity price index, non-borrowed reserves, the 1-year bond rate, and the federal funds rate, as described in Table \ref{tab:data_sources} in Supplementary Appendix \ref{sec:app_monetary_policy}. The sample for the baseline covers the period from January 1950 to December 2024, with monthly data.  Our specification includes a constant and a polynomial time trend of order up to $k\in\{0,6\}$ and the number of lags $p=12$.\footnote{
For example, \citet{christiano1999monetary} uses a constant but does not include a time trend, with lag lengths of 6 and 12, while \citet{uhlig2005effects} and \citet{antolin2018narrative} use 12 lags without a constant or deterministic trend.
Supplementary Appendix \ref{sec:app_extra_figs} reports results for $k=0$.
}

\subsubsection{Efficiency gains from optimal smoothing}

Consider the effect of IRF smoothing in the classical recursive (Cholesky) identification.
Figure \ref{fig:IRF_CEEsmooth} compares impulse response functions with    non-smoothed  and    weighted (smoothed) pointwise bands for the benchmark specification for selected variables.\footnote{For the sake of space, we report impulse responses of only three variables in the main text; figures with six impulse responses are available upon request.
Supplementary Appendix \ref{sec:app_extra_figs} shows weighted LP IRFs for the full set of variables in selected specifications. }

The recursively identified structural IRF of real GDP to a monetary policy shock is well-interpreted, with a 12 bp reduction in response to a 25 bp federal funds rate shock at a 24 month horizon. This scheme, however, exhibits a price puzzle, i.e., a temporary statistically significant increase in the GDP deflator level  in response to a monetary tightening shock.
For commodity prices, the puzzle is relatively short lived and statistically significant only for the weighted estimates.
We will revisit these findings using alternative identification approaches in Section \ref{sec:hetiv_id}

What is the effect of optimal smoothing? It leaves short-horizon dynamics almost unchanged.
Compared to the raw (non-smooth) LP estimates, the weighted IRF results in a stronger impact of the shock on GDP   but a weaker impact on inflation at longer horizons.
Notice that the confidence bounds become tighter as a result of the smoothing.
This tightening occurs due to efficiency gains from imposing model-based shape restrictions on the LP coefficients.

\begin{figure}[H]
\begin{center}
 \includegraphics[width=\linewidth,trim= 100pt 350pt  80pt 20pt,clip]{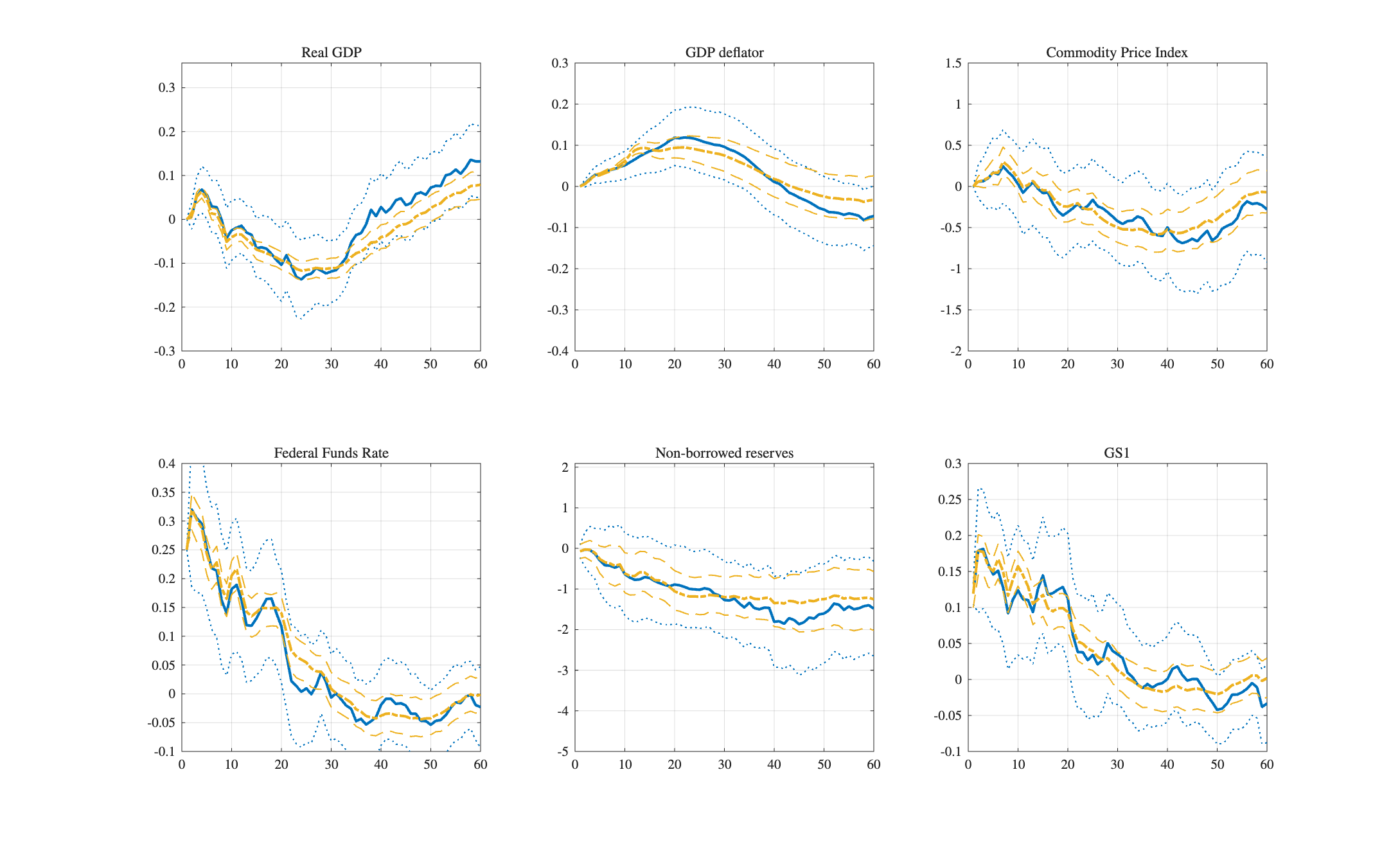}
\vspace{-1cm}
\caption{LP IRF estimates under Cholesky identification (solid lines) with corresponding   $90\%$  pointwise DWB confidence bands (dotted and dashed lines).  Non-smoothed estimates (blue lines) and weighted estimates (orange lines).
 }\label{fig:IRF_CEEsmooth}
\end{center}
\end{figure}
Supplementary Appendix Figure \ref{fig:IRF_CEEnoTrend} shows that the inclusion of a flexible trend in the main specification eliminates the long run effects of monetary policy shocks on real GDP but does not substantially affect the price dynamics.
This figure also shows that the weighted LP confidence bounds are often tighter than the corresponding conventional homoscedastic residual bootstrap bounds computed recursively for VAR models estimated in levels using least squares.
These findings are in line with our Monte Carlo results for near-unit and unit root designs shown in Figure \ref{fig:MC_2400_main}.

\subsubsection{More efficient sup-$t$ simultaneous confidence bounds}

Next, we compare the DWB-based sup-$t$ confidence bands with simpler Bonferroni bands that do not use information about the correlation between the IRF estimators across horizons.
We focus on the weighted LP estimators.
Figure \ref{fig:IRF_CEEsupt} reports pointwise and simultaneous confidence bands after smoothing.

\begin{figure}[H]
\begin{center}
 \includegraphics[width=\linewidth,trim= 100pt 350pt  80pt 20pt,clip]{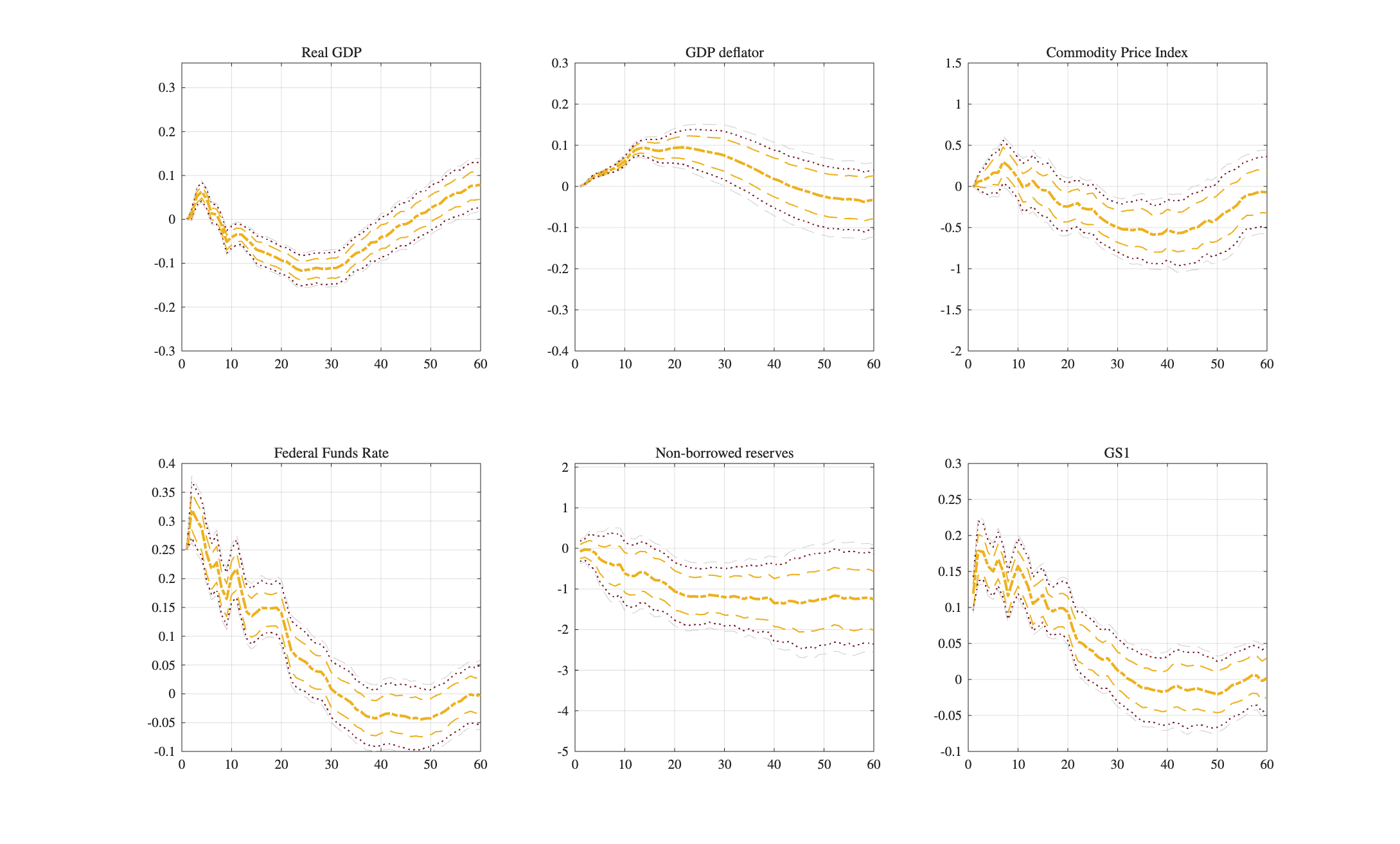}
 \vspace{-1cm}
\caption{Comparison of $90\%$ confidence sets for smoothed LP IRFs under Cholesky identification, $k=6$. Point estimates (solid orange lines), pointwise confidence bounds (dashed orange lines), sup-$t$ simultaneous bounds (purple dots), Bonferroni simultaneous bounds (dashed gray).}\label{fig:IRF_CEEsupt}
\end{center}
\end{figure}
\vspace{-1cm}
The main observation here, similar to \cite{montiel2019simultaneous}, is that there is substantial tightening of the simultaneous confidence bounds for IRFs based on sup-$t$ when compared to the so-called union or Bonferroni bounds \citep{inoue2023significance}.
Unlike our DWB sup-$t$ bounds, the Bonferroni  approach assumes the worst-case dependence between the individual components of the IRF estimates, which results in more conservative bounds.
In contrast to earlier studies, our paper is the first to propose a bootstrap method for the computation of sup-$t$ critical values in the presence of an arbitrary number of unit roots in general multivariate VAR models.

\subsubsection{Identification using VIV}
\label{sec:hetiv_id}

Following the insight from   \cite{rigobon2004impact} that the volatility of monetary policy shocks is higher on the dates of the FOMC meetings,
we propose  external instruments $Z_t$ that are allowed to shift the conditional variance of the monetary policy shock while remaining orthogonal to other shocks and lagged information.
Differently from \cite{rigobon2004impact}, our instruments are not necessarily binary; instead, we consider various instrument intensities.

The VIV identification approach has two practical advantages. First, it avoids the  ``labeling'' problem common to purely heteroscedasticity-based identification because $Z_t$ targets the variance of a specific shock of interest (monetary policy). Second, using a count/intensity instrument at the monthly frequency provides additional time-series variation relative to a binary event dummy, which is useful in practice when event incidence is sparse.
The use of our VIV $Z_t$ should be distinguished from the standard ``external instruments'' in the high-frequency identification tradition: rather than using intra-day asset-price surprises as proxies for the shock itself, we use event-based intensity measures to shift the second moment of the policy shock and then conduct inference on the resulting stacked structural object using the DWB.\footnote{High-frequency identification examples include \cite{cochrane2002fed} \cite{gurkaynak2005sensitivity}, \cite{gertler2015monetary}, and \cite{bauer2022reassessment}

}

We focus exclusively on the dates of FOMC meeting announcements, including meetings, telephone and video conferences, unscheduled meetings, and sequential day meetings.
Our baseline instrumental variable indicates the number of meetings that occur in a given month, including telephone conferences and unscheduled meetings.\footnote{See Supplementary Appendix \ref{sec:app_monetary_policy} for details.} Our instrument remains agnostic about the sign and size of the policy change.
Unlike many other papers that restrict  the FOMC meetings sample starting from 1994,
\footnote{This is because the Fed released no public announcements about monetary policy decisions prior to 1994. Although some scholars, for example, \cite{swanson2023speeches}, extended the sample back to 1988.} we extend our sample back to 1950 and forward to 2024.
Supplementary Appendix Figure \ref{fig:FOMC_events} and Figure \ref{fig:FOMC_days} summarize the time-series variation in the number of FOMC events and in the number of meeting days per month, distinguishing scheduled meetings from the full set that also includes unscheduled meetings and conference calls.

A natural concern that one might have is that the occurrence (or timing) of meetings may be endogenous to macroeconomic conditions.
To address this concern, we consider an alternative external instrument based on meeting-related surprises in trading activity (share volume or dollar volume). Here we follow the high-frequency identification literature, specifically \cite{kuttner2001monetary}, as we construct daily surprises. Next, we aggregate them into monthly series, as illustrated in Supplementary Appendix Figure \ref{fig:usd_log_vol_pre1_post1_FALSE_TRUE}. Intuitively, this instrument isolates changes in trading activity concentrated in narrow windows around FOMC events.

It is worth noting that the hypothesis  $\beta_{Z1}=0$  is rejected in favor of the alternative $\beta_{Z1}>0$ (Assumption \ref{ass:instuments}) at 5\% significance level.
The corresponding t-test statistic is $t=1.83$,  with a one-sided p-value  equal to $0.034$ based on DWB.
In other words, the variance of the federal funds rate forecast errors is higher in the months with more FOMC meetings.
This confirms earlier findings in the high-frequency identification literature that markets react to FOMC meeting announcements \citep[see, for example,][]{faust2004identifying}.
At the same time, since the $t$-statistic is  below the conventional cut-off for weak IV tests, we should report weak IV robust confidence bounds together with the simple DWB bounds.

Figure \ref{fig:IRF_Rigobon} reports VIV impulse responses together with two inference procedures: (i) pointwise DWB CI and (ii) Anderson-Rubin (AR) CI to address the potential weak instrument problem. For comparison, we also plot Cholesky-based point estimates.
Similarly to recursive identification, one can see a significant drop in output, which remains significant from 20 to 50 months using both pointwise confidence bands as well as AR confidence intervals.
Interestingly, the price puzzle survived even when we apply AR CI, with point estimates being even larger than those for the recursive approach.
Overall, both asymptotically normal and AR-type confidence sets include the point estimates corresponding to the recursive approach. In other words, the classical Cholesky-type identification is consistent with the new VIV approach, which does not impose a priori zero impact restrictions on real GDP, the GDP deflator, and commodity prices.

\begin{figure}[H]
\begin{center}
 \includegraphics[width=\linewidth,trim= 100pt 350pt  80pt 20pt,clip]{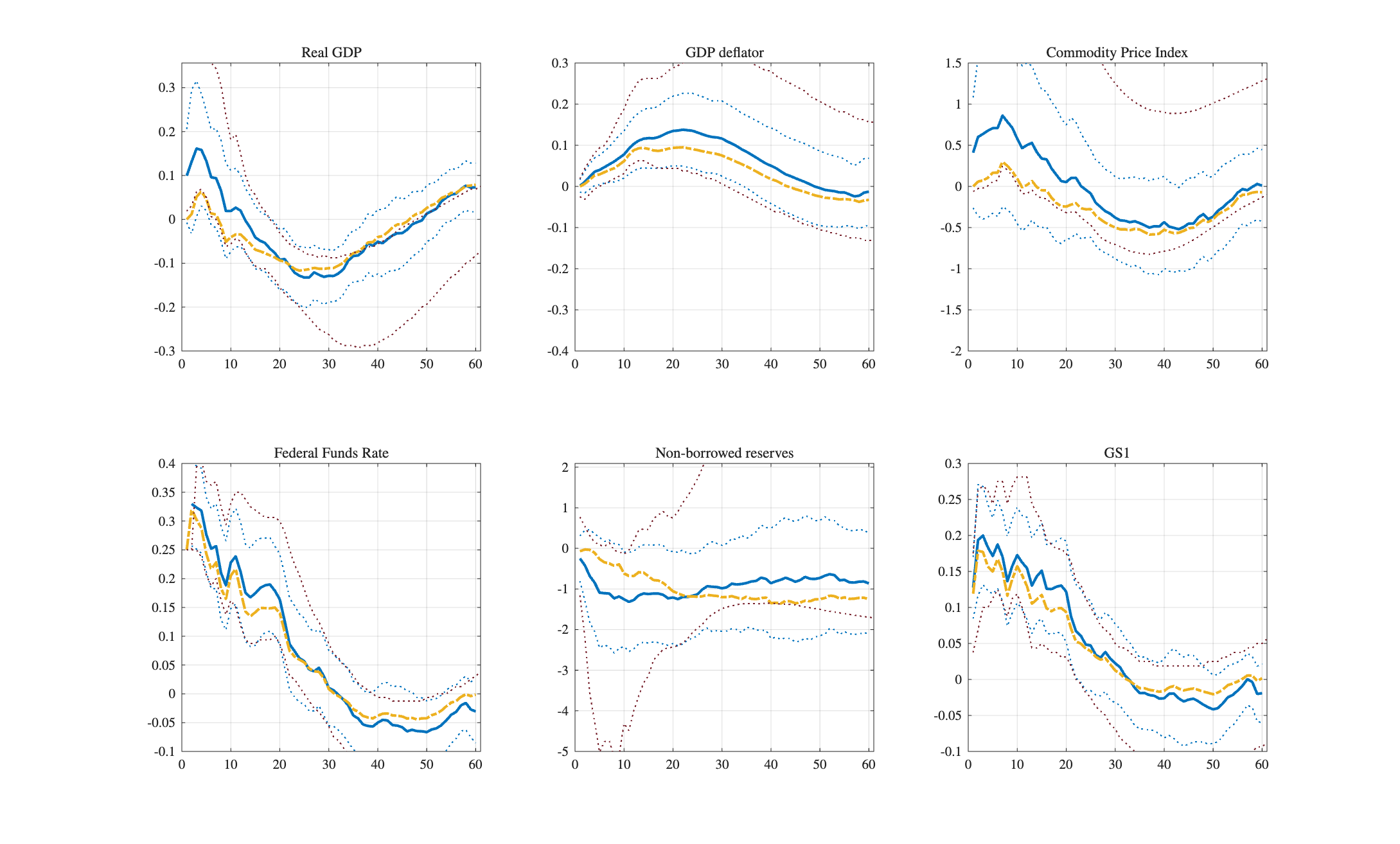}
 \vspace{-1cm}
 \caption{Estimates of monetary policy shocks using FOMC meeting count as VIV based on weighted LP approach. VIV point estimates (blue solid line), pointwise confidence bounds (blue dots),  pointwise AR confidence bounds (purple dots), Cholesky point estimates (solid gold). Confidence coefficient is  $90\%$, $k=6$.
}\label{fig:IRF_Rigobon}
\end{center}
\end{figure}

Figure \ref{fig:IRF_Volumes} repeats the analysis using the trading-activity surprise instrument that places a higher weight on the meeting dates with large jumps in the volume of financial trading. The FOMC meetings that were not followed by a jump in trading are less likely to introduce large policy shocks.
Since the variation-in-trading-volume VIV is only present in the subset of FOMC meetings, this VIV is weaker than the simple meeting count
The null hypothesis  $\beta_{Z1}=0$  is rejected with a larger p-value of 0.08. As a result, AR-bounds are infinite at a 90\% confidence level for most variables, and we report 80\% confidence sets instead.
Two observations are worth mentioning. First, the output response remains statistically significant even when the weak IV problem is accounted for, albeit at a lower confidence level. Second, the price becomes statistically insignificant upon impact,  but it is still within the confidence sets at medium horizons.

To summarize, a 12 bp reduction in real GDP at approximately a 24 month horizon is common among the three alternative identification schemes. The price puzzle, on the contrary, is more sensitive to the choice of identification. At the same time, across the three schemes, the price puzzle cannot be rejected at medium horizons when using the full dataset from 1950 to 2024.

\begin{figure}[H]
\begin{center}
 \includegraphics[width=\linewidth,trim= 100pt 350pt  80pt 20pt,clip]{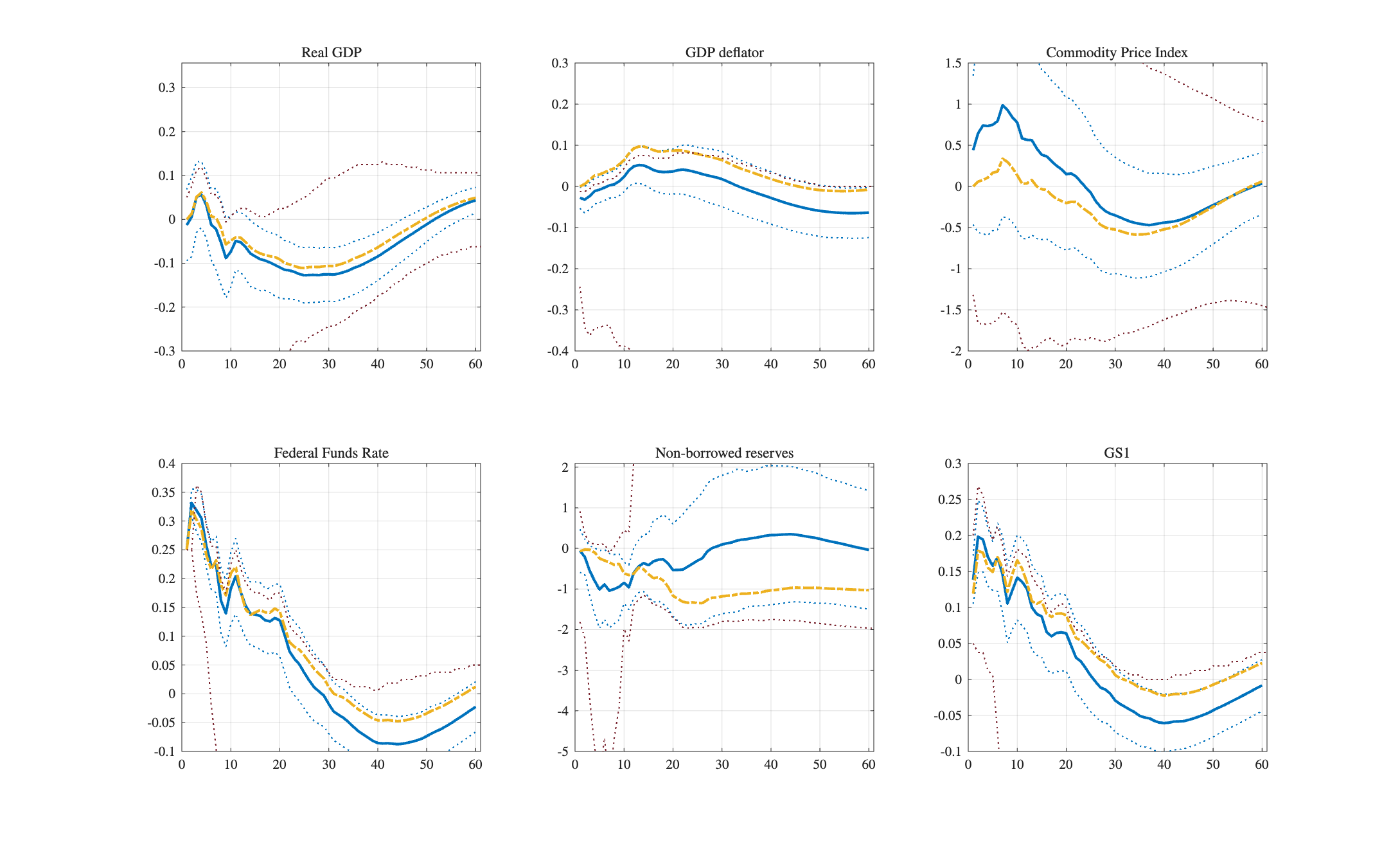}
\vspace{-1cm}
\caption{ Estimates of monetary policy shocks using trading volume jumps  as VIV based on weighted LP approach. VIV point estimates (blue solid line), pointwise confidence bounds (blue dots),  pointwise AR confidence bounds (purple dots), Cholesky point estimates (solid gold). Confidence coefficient is  $80\%$, $k=6$.
} \label{fig:IRF_Volumes}
\end{center}
\end{figure}

\section{Conclusion} \label{sec:conclusion}

This paper shows how to perform robust frequentist inference  on many practically relevant parameters   in general potentially nonstationary SVAR models using dependent wild bootstrap applied to local projection estimators of IRFs.
Our main insight is that the local projection estimator applied to  data in levels automatically prewhitens the data and makes the joint CLT and DWB applicable.
Through the delta-method, this allows us to estimate the asymptotic distribution of any smooth function of the entire   collection of LP estimates over multiple horizons and the external moment conditions.
Joint inference over multiple horizons, as opposed to horizon specific inference, allows us to: (i) derive efficient smoothed LP estimators and estimate their asymptotic distribution for potentially nonstationary data; (ii) perform sup-$t$ tests of misspecification and construct efficient simultaneous confidence bands; (iii) incorporate external moments such as VIV to make inferences on structural IRFs.
We illustrate the statistical validity and efficiency gains from these procedures in Monte Carlo experiments, both in the AR(1) model and in VAR(8)  approximation to a VARMA(3,2) representation of \cite{smets2007shocks} DSGE model, as well as using a six-variable monetary SVAR model with a novel VIV identification.
Efficient smoothing reduces the length of confidence intervals, particularly for the IRFs at long horizons. Additionally, weak-IV robust intervals based on smoothed LP are informative about statistically significant real effects of monetary policy shocks while highlighting the fact that the price puzzle is sensitive to the choice of identification.

\part*{ Appendix}
\appendix

\section{Proofs of the main results}

\subsection{Proof of Theorem \ref{thm:LPconsistency}, LP consistency} \label{sec:app_proof_LP}

\textbf{Step 1:} Use the Frisch-Waugh-Lovell theorem (Theorem 4.2 in \cite{lovell1963seasonal}) to obtain a representation
\begin{equation*}
    \widehat C_h^\prime =     (\sum_{t=p}^{T-h} \hat \eta_t  \hat \eta_{t}^\prime )^{-1}\sum_{t=p}^{T-h} \hat \eta_t (\hat \eta^{(h)}_{t+h} )^\prime .
\end{equation*}

\textbf{Step 2.} By Lemma \ref{lem:residuals} in Supplementary Appendix \ref{sec:app_aux_lemma}
\begin{align*}
    \widehat C_h^\prime &=     (\sum_{t=p}^{T-h} \hat \eta_t  \hat \eta_{t}^\prime )^{-1}\sum_{t=p}^{T-h} \hat \eta_t (\hat \eta^{(h)}_{t+h} )^\prime = \big(\frac{1}{T}\sum_{t=p}^{T-h}   \eta_t   \eta_{t}^\prime +  O_p(T^{-1})\big)^{-1}( \sum_{s=0}^h \frac{1}{T}\sum_{t=p}^{T-h}   \eta_t ( C_s\eta_{t+h-s})^\prime   +  O_p(T^{-1}))\\
    &= C_h^\prime +  \big(\frac{1}{T}\sum_{t=p}^{T-h}   \eta_t   \eta_{t}^\prime +  O_p(T^{-1})\big)^{-1}(  \frac{1}{T}\sum_{t=p}^{T-h} \sum_{s=0}^{h-1}  \eta_t ( C_s\eta_{t+h-s})^\prime   +  O_p(T^{-1})) + O_p(T^{-1}).
\end{align*}
Since $\eta_t$ is stationary and mixing, and thus the sequence $\eta_t\eta_{t+h}$ is ergodic with finite fourth moments, we can apply the ergodic theorem to obtain the desired result.

The same proof of consistency applies to $\hat\Sigma_\eta$.

\subsection{Proof of  Theorem \ref{thm:IVconsistency}, consistency of IV estimator}\label{sec:app_proof_IV}

Consider an individual term in $\hat \beta_Z$,
\begin{align}
      \hat\eta_{it}\hat\eta_{1t}(Z_t - \bar Z_T) =&   \eta_{it} \eta_{1t}(Z_t - \bar Z_T) + (\hat\eta_{it}-\eta_{it}) \eta_{1t}(Z_t - \bar Z_T) \notag\\
       &+   \eta_{it}(\hat\eta_{1t}-\eta_{1t})(Z_t - \bar Z_T) + (\hat\eta_{it}-\eta_{it})(\hat\eta_{1t}-\eta_{1t})(Z_t - \bar Z_T).\label{eq:estimatedCovZ}
\end{align}

First, note that by using the
same argument as in Step 3 in the proof of Lemma \ref{lem:residuals},
\begin{align*}
    \bar Z_T  \sum_{t=p}^T  (\hat\eta_{it}-\eta_{it}) \eta_{jt}  =  O_p(1).
\end{align*}
Next, by Assumption \ref{ass:innovations}.5, $ Z_t  \eta_{t} $ is zero mean and is uncorrelated with $\Tilde{X}_{t-1}^\prime$. Using the same argument as in Step 2 of the proof of Lemma \ref{lem:residuals}, $\sum_{t=p}^{T}  Z_t  \eta_{t}  \Tilde{X}_{t-1}^\prime    \Upsilon_T^{-1}   $ is a tight random sequence by  \citet[Theorem 4.1]{de2000functional}.
So
\begin{align*}
      (\sum_{t=p}^T  (\hat\eta_{t}-\eta_{t}) \eta'_{t}Z_t )' = \sum_{t=p}^{T}  Z_t  \eta_{t}  \Tilde{X}_{t-1}^\prime    \Upsilon_T^{-1}    ( \Upsilon_T^{-1} \sum_{r=p}^{T} \Tilde{X}_{r-1} \Tilde{X}_{r-1}^\prime \Upsilon_T^{-1})^{-1}  \sum_{r=p}^{T} \Upsilon_T^{-1}  \Tilde{X}_{r-1}\eta'_{r} =O_p(1).
\end{align*}

The last term in  \eqref{eq:estimatedCovZ} can be bounded by the Cauchy-Schwarz theorem,
\begin{align*}
    | \sum_{t=p}^T  (\hat\eta_{it}-\eta_{it})(\hat\eta_{1t}-\eta_{1t})(Z_t - \bar Z_T)  |\leq  \sqrt{ \sum_{t=p}^T  (\hat\eta_{it}-\eta_{it})^2 \sum_{t=p}^T (\hat\eta_{1t}-\eta_{1t})^2(Z_t - \bar Z_T)^2  } = O_p(T^{1/4}),
\end{align*}
since
\begin{align*}
   \sum_{t=p}^T (\hat\eta_{1t}-\eta_{1t})^2(Z_t - \bar Z_T)^2 = \sum_{t=p}^T  (\hat\eta_{1t}-\eta_{1t})^2Z^2_t - 2 \bar Z_T \sum_{t=p}^T  (\hat\eta_{1t}-\eta_{1t})^2Z_t  +  \bar Z_T^2  \sum_{t=p}^T  (\hat\eta_{1t}-\eta_{1t})^2 .
\end{align*}
and  by Assumptions \ref{ass:innovations}.2 and \ref{ass:innovations}.3  and Lemma \ref{lem:residuals}
\begin{align*}
        \max_{t=p,\dots,T}|Z_t|    & =  \sqrt[4]{ \max_{t=p,\dots,T}|Z_t|^4}\leq T^{1/4} \sqrt[4]{ \sum_{t=p}^T \frac{|Z_t|^4}{T}} =O_p(T^{1/4}),\\
       \max_{t=p,\dots,T}|Z_t|^2    & =  O_p(T^{1/2}),\\
     \sum_{t=p}^T  (\hat\eta_{1t}-\eta_{1t})^2 &=  O_p(1).
\end{align*}
By Assumption \ref{ass:innovations}, the time series $\eta_{t} \eta_{1t} $ and $Z_t$ satisfy CLT for mixing processes. Therefore,
\begin{align*}
    \sum_{t=p}^T   \eta_{t} \eta_{1t}(Z_t -  \bar Z_T) =& \sum_{t=p}^T  ( \eta_{t} \eta_{1t}-\E\eta_{t} \eta_{1t})(Z_t - \E Z_t) \\
    &+ \frac{1}{\sqrt{T-p}}\sum_{t=p}^T( Z_t - \E Z_t)  \frac{1}{\sqrt{T-p}} \sum_{t=p}^T  ( \eta_{t} \eta_{1t}-\E\eta_{t} \eta_{1t})\\
    =&\sum_{t=p}^T  ( \eta_{t} \eta_{1t}-\E\eta_{t} \eta_{1t})(Z_t - \E Z_t) + O_p(1).
\end{align*}
To summarize,
\begin{align*}
  \hat\beta_Z =   \frac{1}{T-p}   \sum_{t=p}^T  (\hat\eta_{t}\hat\eta_{1t}-\overline{\hat\eta_{t}\hat\eta_{1t}})(Z_t - \bar Z_T) =&  \frac{1}{T-p}   \sum_{t=p}^T  ( \eta_{t} \eta_{1t}-\E\eta_{t} \eta_{1t})(Z_t - \E Z_t)     + O_p(T^{-\frac{3}{4}}).
\end{align*}
By ergodicity of $( \eta_{t} \eta_{1t}-\E\eta_{t} \eta_{1t})(Z_t - \E Z_t)$, we get $\hat\beta_Z \pTo \beta_Z$.
The statement of the theorem then follows from Theorems   \ref{thm:LPconsistency} and  \ref{thm:identification}.

\subsection{Proof of Theorem \ref{thm:inferenceCH}, asymptotic normality of the LP estimators}\label{sec:app_proof_CLT}

By Assumption \ref{ass:innovations}, $ \xi_t$ is a strong mixing process of size at least $-2(1+\epsilon)/\epsilon$ with zero mean and finite $2+\epsilon$ moments (condition DMR is satisfied, see p. 12 in \cite{rio2017asymptotic}).
By Theorem 4.2 in \cite{rio2017asymptotic} and the Cramer-Wold theorem, there exists the long run variance matrix $\Omega$ for $ \xi_t $ and
\begin{equation}
    \frac{1}{\sqrt{T-p-H}}\sum_{t=p}^{T-H}  \xi_t \weakto N(0, \Omega). \label{eq:CLTmixing}
\end{equation}

By Theorems \ref{thm:LPconsistency} and \ref{thm:IVconsistency},
\begin{align*}
     \sqrt{T-p} (\hat \Sigma_\eta - \Sigma_\eta) &=  \frac{1}{\sqrt{T-p}}\sum_{t=p}^{T}  ( \eta_t   \eta_{t}^\prime -\Sigma_\eta)+  O_p(T^{-\frac{1}{2}}),  \\
   \sqrt{T-p-h} ( \widehat C_h^\prime - C_h^\prime)&=  \big(\frac{1}{T-p-h}\sum_{t=p}^{T-h}   \hat\eta_t   \hat\eta_{t}^\prime  \big)^{-1}(  \frac{1}{\sqrt{T-p-h}}\sum_{t=p}^{T-h} \sum_{s=0}^{h-1}  \eta_t ( C_s\eta_{t+h-s})^\prime    ) + O_p(T^{-\frac{1}{2}}),\\
    \sqrt{T-p} (\hat \beta_Z  - \beta_Z) &=  \frac{1}{\sqrt{T-p}}\sum_{t=p}^{T}
    \left[(\eta_{t} \eta_{1t}-\E\eta_{t}\eta_{1t})(Z_t - \E Z_t)-\beta_Z\right]
    +  O_p(T^{-1/4}) .
\end{align*}

Since l.h.s. of \eqref{eq:CLTmixing} differs from \eqref{eq:CLTmainText} by some  $O_p(T^{-\frac{1}{4}})$ terms,
the statement of the theorem follows from the continuous mapping theorem \citep{mann1943stochastic}.

\subsection{Proof of Theorem \ref{thm:hac}, consistency of HAC standard errors}\label{sec:app_proof_hac}

\textbf{Step 1.} To show consistency of the whole matrix $\widehat\Omega_T$, we need to establish consistency of HAC estimators for all combinations $  a' \xi_t$.
Then the HAC estimator for particular $a$ is defined as
     $\widehat\Omega_T(a) \bydef \sum_{\ell=-B_T}^{B_T} \frac{K(\frac{\ell}{B_T})}{T-\ell}  \sum_{t=p}^{T-\ell} (a^\prime\hat \xi_t) (  \hat \xi_{t+\ell}^\prime a)$ .
This estimator has an infeasible analog
     $\widetilde\Omega_T(a)  \bydef   \sum_{\ell=-B_T}^{B_T} \frac{K(\frac{\ell}{B_T})}{T-\ell}  \sum_{t=p}^{T-\ell} (a^\prime  \xi_t) (    \xi_{t+\ell}^\prime a) .$

Since $\xi_t $ is strong mixing with $4+\epsilon$ moment, Theorem 1(a) in \cite{andrews1991heteroskedasticity} implies that $\widetilde\Omega_T(a)$ is consistent for $\Omega(a)=a'\Omega a$.

\textbf{Step 2.} What remains to show is $\widehat\Omega_T(a)-\widetilde\Omega_T(a) \pTo 0$.
Since by Theorem \ref{thm:LPconsistency}, $\hat \Sigma_\eta\pTo\Sigma_\eta $ and $\widehat C_h\pTo C_h $, we just need to establish HAC consistency for vectors
\begin{equation*}
   \hat\zeta_t = \begin{pmatrix}

             \hat \zeta^{(1)}_t   \\
         \cdots\\
             \hat \zeta^{(H)}_t  \\
                               \hat \zeta^{(H+1)}_t  \\
           \hat \zeta^{(H+2)}_t
         \end{pmatrix}= \begin{pmatrix}
         vec(    (\hat\eta_t\hat\eta'_{t+1} -\overline{\hat\eta_t\hat\eta'_{t+1} }) )\\
         \cdots\\
         vec(  (\hat\eta_t\hat\eta'_{t+H}-\overline{\hat\eta_t\hat\eta'_{t+H}}))\\
         vec( \hat\eta_t  \hat\eta_{t} ^\prime - \hat\Sigma_\eta )\\
         ( \hat\eta_{t} \hat\eta_{1t}-\overline{\hat\eta_{t} \hat\eta_{1t}})(Z_t - \bar Z_T)- \hat\beta_Z
         \end{pmatrix}.
\end{equation*}
Let $\zeta_t$ denote the population analog of $\hat\zeta_t$. Using
$vec(AXB)=(B^\prime\otimes A)vec(X)$, the original horizon-specific scores are
linear transformations of the auxiliary scores: for each $h=1,\dots,H$,
$\xi_t^{(h)}=\sum_{s=0}^{h-1}(C_s\otimes\Sigma_\eta^{-1})
\zeta_t^{(h-s)}$ and $\hat\xi_t^{(h)}=\sum_{s=0}^{h-1}
(\widehat C_s\otimes\hat\Sigma_\eta^{-1})\hat\zeta_t^{(h-s)}$,
where $C_0=\widehat C_0=I_n$.

The last component, $\hat \zeta^{H+2}_t$, has a different structure from the other components since it is multiplied by $(Z_t - \bar Z_T)$.
From the proof of Theorem \ref{thm:IVconsistency}, we know that
$\max_{t=1,\dots T}|Z_t - \bar Z_T| = O_p(T^{1/4})$. So, the consistency of the HAC components, including the linear combinations of $\hat \zeta^{H+2}_t$ can be proven similarly to the other components.
So, to save space,  we only consider vector $a=vec(a_0,a_1,\dots,a_{H},0_n)$ that puts 0 on the last $n$ components and all $a_i\in R^{n^2}$.
 \begin{align}
 \hat \zeta_t^\prime a a^\prime \hat \zeta_{t+\ell} -   \zeta_t^\prime a a^\prime  \zeta_{t+\ell} = &\sum_{h,s=0}^H (\hat\zeta^{(h)}_t-\zeta^{(h)}_t)' a_h a'_{s} \hat\zeta^{(s)}_{t+\ell} \notag\\
 & +  \sum_{h,s=0}^H ( \zeta^{(h)}_t)' a_h a'_{s}( \hat\zeta^{(s)}_{t+\ell}-\zeta^{(s)}_{t+\ell}) \notag\\
 &+  \sum_{h,s=0}^H (\hat\zeta^{(h)}_t-\zeta^{(h)}_t)' a_h a'_{s}( \hat\zeta^{(s)}_{t+\ell}-\zeta^{(s)}_{t+\ell}). \label{eq:estimationHACerror}
 \end{align}
The first  term after summation over $t$ becomes
\begin{align*}
     & \sum_{t=p}^{T-\ell}\sum_{h,s=0}^{H} (\hat\zeta^{(h)}_t-\zeta^{(h)}_t)' a_h a_{s} \hat\zeta^{(s)}_{t+\ell}  = \\
     &=    \sum_{t=p}^{T-\ell}\sum_{i,j=1}^k\sum_{h,s=0}^{H} e'_i(  \hat \eta_t  (\hat\eta^{(h)}_{t+h})'-\overline{\hat \eta_t (\hat\eta^{(h)}_{t+h})'}-\eta_t    \eta^{(h)}_{t+h}  + \E \eta_t  (\eta^{(h)}_{t+h})' )' a_{ih} \\
     &\times a_{js}'   ( \eta_{t+\ell} (\eta^{(s)}_{t+\ell+s})'- \E \eta_{t+\ell} (\eta^{(s)}_{t+\ell+s})')e_j\\
    & = \sum_{i,j=1}^k\sum_{h,s=0}^{H}    \sum_{t=p}^{T-\ell} e'_i(\hat \eta_t  \hat \eta^{(h)}_{t+h} -\eta_t    \eta^{(h)}_{t+h}) ' a_{ih} a_{js}'    ( \eta_{t+\ell} (\eta^{(s)}_{t+\ell+s})'- \E \eta_{t+\ell}  (\eta^{(s)}_{t+\ell+s})')e_j\\
     & - \sum_{i,j=1}^k\sum_{h,s=0}^{H}    e'_i (\overline{\hat \eta_t (\hat\eta^{(h)}_{t+h})'}  - \E \eta_{t}  (\eta^{(h)}_{t+h})' ) a_{ih}  \sum_{t=p}^{T-\ell}    a_{js}'    ( \eta_{t+\ell} (\eta^{(s)}_{t+\ell+s})'- \E \eta_t (\eta^{(s)}_{t+\ell+s})')e_j
\end{align*}
The last term is $O_p(1)$ by Lemma \ref{lem:residuals} and the FCLT for mixing processes \citep{de2000functional} applied to $\eta_{t+\ell} (\eta^{(s)}_{t+\ell+s})'- \E \eta_t   (\eta^{(s)}_{t+\ell+s})'$.

  Next, consider one element of the sum in the first term.  By \eqref{eq:etahdif1}-\eqref{eq:etahdif3} we get
\begin{align*}
  &  \sum_{t=p}^{T-\ell} e'_i(\hat \eta_t   (\hat\eta^{(h)}_{t+h})' -\eta_t    (\eta^{(h)}_{t+h})') a_{ih} \\
& =  \sum_{t=p}^{T-\ell}    a_{js}'    ( \eta_{t+\ell} (\eta^{(s)}_{t+\ell+s})'- \E \eta_{t+\ell}  (\eta^{(s)}_{t+\ell+s})')e_j e'_i \eta_{t}  \Tilde{X}_{t-1}^\prime    \Upsilon_T^{-1}   \Lambda_{T}^h a_{ih} \\
& + 2 \sum_{t=p}^{T-\ell}    a_{js}'    ( \eta_{t+\ell} (\eta^{(s)}_{t+\ell+s})'- \E \eta_{t+\ell}  (\eta^{(s)}_{t+\ell+s})')e_j e'_i \eta^{(h)}_{t+h}  \Tilde{X}_{t-1}^\prime    \Upsilon_T^{-1}   \Lambda_{T}^0 a_{ih}
\end{align*}
where $ \Lambda_{T}^h \bydef ( \Upsilon_T^{-1} \sum_{r=p}^{T-h} \Tilde{X}_{r-1} \Tilde{X}_{r-1}^\prime \Upsilon_T^{-1})^{-1}  (\sum_{r=p}^{T-h} \Upsilon_T^{-1}  \Tilde{X}_{r-1}(\eta^{(h)}_{r+h})') =O_p(1) $  by the argument in the proof of Lemma \ref{lem:residuals}.
Both terms in the r.h.s of the displayed formula above are $O_P(\sqrt{T-B_T})$ by the Cauchy-Schwarz inequality since, for example, for any constant vector $b$,
\begin{align*}
    & |\sum_{t=p}^{T-\ell}    a_{js}'    ( \eta_{t+\ell} (\eta^{(s)}_{t+\ell+s})'- \E \eta_{t+\ell}  (\eta^{(s)}_{t+\ell+s})')e_j e'_i \eta_{t}  \Tilde{X}_{t-1}^\prime    \Upsilon_T^{-1}   b  |\\
     & \leq \sqrt{b'\left(\sum_{t=p}^{T-\ell} \Upsilon_T^{-1} \Tilde{X}_{t-1}\Tilde{X}_{t-1}^\prime \Upsilon_T^{-1}\right)b}
     \sqrt{\sum_{t=p}^{T-\ell}\left(a_{js}'\left[\eta_{t+\ell}(\eta^{(s)}_{t+\ell+s})'-\E\eta_{t+\ell}(\eta^{(s)}_{t+\ell+s})'\right]e_j e_i'\eta_t\right)^2}.
\end{align*}
The r.h.s. is $O_p(\sqrt{T})$ by proof of Step 2 of Lemma \ref{lem:residuals} and Markov inequality implied by the existence of $6+\epsilon$ moments of the stationary mixing sequence $\eta_t$.

The sums of the remaining two terms in \eqref{eq:estimationHACerror} are also $ O_p(\sqrt{T})$, which can be shown using the same techniques.

\textbf{Step 3.}
Putting all bounds together gives us
\begin{equation*}
    \widehat\Omega_T-\widetilde\Omega_T = \frac{B_T}{T-B_T} O_p(\sqrt{T-B_T}) = O_p(\frac{B_T}{\sqrt{T}}).
\end{equation*}
Since by assumption of the theorem $B_T^2/T \to 0$, we get the desired result.

 \subsection{Proof of Theorem \ref{thm:bootstrap}, wild bootstrap}\label{sec:app_proof_bootstrap}

By construction, $ \sqrt{ T-p-H } (\hat\theta^s -\hat\theta) \sim N(0,\hat{\Omega}^*_T)$ for some matrix $\hat{\Omega}^*_T$ conditional on the data $y_1,...,y_T$.
Let us find the covariance matrix $\hat{\Omega}^*_T=E[(T-p-H)(\hat\theta^s -\hat\theta)^2|y_1,...,y_T]$. Consider
\begin{align}
   (T-p-H)\widehat{\Omega}^*_T &=   \sum_{t,t'=p}^{T-H} \hat\xi_t  \hat\xi'_{t'}  \E \nu^s_t \nu^s_{t'}  =  \sum_{t,t'=p}^{T-H} \hat\xi_t  \hat\xi'_{t'} K( \frac{|t-t'|}{B_T} )= \sum_{\ell=-B_T}^{B_T}\sum_{t=p+\ell}^{T-H} \hat\xi_t  \hat\xi'_{t+\ell} K( \frac{ \ell }{B_T} ) \notag\\
    &=   \sum_{\ell=-B_T}^{B_T}K( \frac{ \ell }{B_T} )(T-H-p-\ell) \overline{\hat\xi_t  \hat\xi'_{t+\ell} } \notag\\
     &=  (T-H-p )  \sum_{\ell=-B_T}^{B_T}K( \frac{ \ell }{B_T} )\overline{\hat\xi_t  \hat\xi'_{t+\ell} } -  \sum_{\ell=-B_T}^{B_T}\ell K( \frac{ \ell }{B_T} )   \overline{\hat\xi_t  \hat\xi'_{t+\ell} },\label{eq:DWBproof}
\end{align}
where $K(x) = \max\{1-|x|,0\}$ is a Bartlett kernel. Since second term in \eqref{eq:DWBproof} is $O_P( B^2_T)$, we get    $  \widehat{\Omega}^*_T = \widehat{\Omega}_T + O_P(\frac{B^2_T}{T}).$

Since $\frac{B^2_T}{T}\to 0$ , by Theorem \ref{thm:hac} we get  $\widehat{\Omega}_T \pTo \Omega $. So by Theorem \ref{thm:inferenceCH},
\begin{align*}
& \lim_{T\to\infty} \sup_{x\in R, \|a\|=1}  |P(\sqrt{T}a'(\hat \theta^s-\hat \theta)<x|y_1,...,y_T)-P(\sqrt{T}a'(\hat \theta- \theta)<x) |\\
\leq &
  \lim_{T\to\infty}\sup_{x\in R, \|a\|=1}   |\Phi( \frac{x}{\sqrt{a'\hat{Q}_Ta}})- \Phi(  \frac{x}{\sqrt{a'\Omega a}})| \\
  & +   \lim_{T\to\infty} \sup_{x\in R, \|a\|=1}  | P(\frac{\sqrt{T}a'(\hat \theta- \theta)}{\sqrt{a'\Omega a}}  <x) - \Phi(   \frac{x}{\sqrt{a'\Omega a}})| = 0 .
\end{align*}

\subsection{Proof of Theorem \ref{thm:identification}}\label{app:proof_ID}

The reduced-form shocks have the following representation for any $k=1,\dots,n,$
\begin{align*}
     \eta_{1t}\eta_{kt}&= b_{k1}b_{11}\varepsilon^2_{1t} + \sum_{i\neq j}b_{ki}b_{1j}\varepsilon_{it}\varepsilon_{jt}+ \sum_{i\neq 1} b_{ki}b_{1i} \varepsilon^2_{it} .\\
\end{align*}
Then, by Assumption \ref{ass:instuments},
\begin{align*}
   \beta_Z   &=  b_{\cdot1}b_{11}\E\varepsilon^2_{1t}(Z_t - \E Z_t) + \sum_{i\neq j}b_{\cdot i}b_{1j}\E\varepsilon_{it}\varepsilon_{jt}(Z_t - \E Z_t)+ \sum_{i\neq 1} b_{\cdot i}b_{1i} \E\varepsilon^2_{it}(Z_t - \E Z_t)\\
    &=  b_{\cdot 1}b_{11}\E(\varepsilon^2_{1t}-1)(Z_t - \E Z_t)  = b_{\cdot 1}b_{11} \rho_1.
\end{align*}

 The statement of the Theorem follows immediately.

\renewcommand{\refname}{References}

\clearpage
\hypersetup{pageanchor=false}
\setcounter{page}{1}
\renewcommand{\thepage}{S\arabic{page}}
\hypersetup{pageanchor=true}
\part*{Supplementary Appendix}
\appendix
\renewcommand{\thesection}{S.\Alph{section}}
\renewcommand{\theHsection}{supplementary.\Alph{section}}
\setcounter{figure}{0}
\renewcommand{\thefigure}{S\arabic{figure}}
\renewcommand{\theHfigure}{supplementary.\arabic{figure}}
\setcounter{table}{0}
\renewcommand{\thetable}{S\arabic{table}}
\renewcommand{\theHtable}{supplementary.\arabic{table}}

\section{Auxiliary lemma } \label{sec:app_aux_lemma}
\begin{lem}\label{lem:residuals}
Suppose that Assumptions   \ref{ass:innovations} and \ref{ass:SSW} hold. Consider the LS residuals $\hat \eta^{(H)}_{t}$ from $H-$step forward autoregression,
\begin{align}\label{eq:hlagVARApp}
     y_{t+H} &=  V^{(H)} \mu_t + \sum_{i=1}^p A^{(H)}_i y_{t-i} + \eta_{t+H}^{(H)},
\end{align}
where $\eta_{t}^{(H)} = \sum_{s=0}^H C_s \eta_{t-s} $. Then, for each $h=0,1,2,\dots,H,$, the following coupling holds:
    \begin{equation}
       \sum_{t=p}^{T-h} \hat \eta_t (\hat \eta^{(h)}_{t+h} )^\prime -    \sum_{t=p}^{T-h}  \eta_t (\eta^{(h)}_{t+h} )^\prime =  O_p(1) . \label{eq:residualSums}
    \end{equation}

\end{lem}

\begin{proof}

The proof proceeds in 3 steps. First, we show that \eqref{var1} can be decomposed into linearly independent univariate AR processes using Jordan decomposition. Then, we show that the LS estimators for individual processes have well-defined asymptotic distributions. Finally, we use the invariance of LS estimates with respect to linear transformations to estimate the difference in \eqref{eq:residualSums}.

\textbf{Step 1.}
Notice that $\mu_t=(1,t,\dots,t^k)'$ satisfies a first-order deterministic difference equation
\begin{equation*}
\mu_t = Q_\mu \mu_{t-1},
\end{equation*}
where $Q_\mu $ is a lower triangular matrix
\begin{equation*}
   Q_\mu \bydef \begin{bmatrix}
1 & 0 & 0 & 0 & \cdots & 0  & 0\\
1 & 1 & 0 & 0 & \cdots & 0  & 0\\
1 & 2 & 1 & 0 & \cdots & 0  & 0\\
\vdots &\vdots  &   &   &   & \vdots  & \vdots\\
1 & {k \choose 2 }  & \cdots & \cdots & \cdots &{k \choose k-1 } & 1
\end{bmatrix} .
\end{equation*}

Take any solution $W$  to the following linear matrix equation
\begin{equation*}
       W   =  \sum_{i=1}^p A_i   W  Q^i_\mu +  V.
\end{equation*}
Then  $y^\circ_t=y_t -  W\mu_t$ satisfies
\begin{equation*}
      y^\circ_t =   \sum_{i=1}^p A_i  y^\circ_{t-1} + \eta_t ,
\end{equation*}

We can now introduce $Y^{\circ}_t = vec(y^\circ_t,\dots,y^\circ_{t-p+1})$.
Then
\begin{align*}
    \begin{pmatrix}
        Y^{\circ}_t \\
        \mu_t
    \end{pmatrix} =    \begin{bmatrix}Q_y & 0_{np\times k}\\
    0_{k\times np}& Q_\mu\end{bmatrix} \begin{pmatrix}
        Y^{\circ}_{t-1} \\
        \mu_{t-1}
    \end{pmatrix} + \begin{pmatrix}
        U_{t} \\
         0
    \end{pmatrix}
\end{align*}
with the companion matrix
\begin{equation*}
    Q_y \bydef \begin{bmatrix}
 A_1 & A_2 &\dots & A_p  \\
I &0 & \cdots  &0\\
\dots &\dots &\dots &\dots \\
0 &0 & \cdots &I\\
\end{bmatrix}
\end{equation*}
and $ U_{t} = vec(\eta_t, 0,\dots,0)$.

Matrix  $Q_y  $   can be represented in Jordan normal form
\begin{align*}
    Q_y = P_{y}J_yP_y^{-1},
\end{align*}
where $P_{y}$ is an invertible matrix and $J_y$ is a block diagonal matrix.\footnote{See, for example, \cite{gantmakher2000theory-supp}. }
Then, following ideas in \cite{tsay1990asymptotic-supp} and \cite{sims1990inference-supp}, we get VAR(1) representation
\begin{align}
      \begin{pmatrix}
       P_y^{-1} Y^{\circ}_t \\
        \mu_t
    \end{pmatrix} =    \begin{bmatrix}J_y & 0_{np\times k}\\
    0_{k\times np}&Q_\mu\end{bmatrix}       \begin{pmatrix}
       P_y^{-1} Y^{\circ}_{t-1} \\
        \mu_{t-1}
    \end{pmatrix} + \begin{pmatrix}
        \tilde U_{t} \\
         0
    \end{pmatrix},\label{eq:VAR1jordan}
\end{align}
with $ \tilde U_{t} =  P_y^{-1} U_{t}$.
Matrix $J_y$ has a block-diagonal form $J_y= diag \{J_{y,1},\dots,J_{y,g}\}$,  where each block $J_{y,i}$ is a $n_i\times n_i  $ matrix
\begin{equation*}
    J_{y,i} =\begin{bmatrix}
\lambda_i & 1       & 0      & \cdots  & 0 \\
0       & \lambda_i & 1      & \cdots  & 0 \\
\vdots  & \vdots  & \vdots & \ddots  & \vdots \\
0       & 0       & 0      & \lambda_i & 1      \\
0       & 0       & 0      & 0       & \lambda_i
\end{bmatrix} .
\end{equation*}
 with $\lambda_i$ being  $i$-th eigenvalue  of $Q_y$ (not necessarily distinct)  with geometric multiplicity $n_i$.
 Suppose, without loss of generality, that the blocks $J_{y,i}$ are arranged in decreasing order according to the absolute values of the corresponding $\lambda_i$. If a unit root is present, then start with $\lambda_1=1$.
 By Assumption \ref{ass:SSW}, there are no roots larger than 1 in absolute value.

Now let us find a row permutation matrix $P_x$ that rearranges rows of $  P_y^{-1} Y^{\circ}_{t-1}$  to get
     \begin{equation*}
         \Tilde{X}_t \bydef  \begin{pmatrix}
     P_x  P_y^{-1} Y^{\circ}_t \\
        \mu_t
    \end{pmatrix}  \bydef F(L) \nu_t,
     \end{equation*}
where
\begin{equation}
    F(L) = \begin{bmatrix}
F_{11}(L) & 0 & 0 & 0 & \cdots & 0  & 0\\
0 & F_{22} & 0 & 0 & \cdots & 0  & 0\\
0 & F_{32} & F_{33} & 0 & \cdots & 0  & 0\\
0 & F_{42} & F_{43} & F_{44} & \cdots & 0  & 0\\
\vdots &\vdots  &   &   &   & \vdots  & \vdots\\
0 & F_{G,2}  & \cdots & \cdots & \cdots & F_{G,G}  & 0\\
0 & 0  & \cdots & \cdots & \cdots & 0 & I_k
\end{bmatrix} \label{eq:SSWmat}
\end{equation}
and $\nu_t = (\eta'_t,(\upsilon^{1}_t(\lambda_1))^\prime,\dots,(\upsilon^{g_1}_t(\lambda_1))^\prime,(\upsilon^{1}_t(\lambda_2))^\prime,\dots,(\upsilon^{g_2}_t(\lambda_2))^\prime,\dots, \\(\upsilon^{1}_t(\lambda_b))^\prime,\dots,(\upsilon^{g_b}_t(\lambda_b))^\prime, 1,t, \dots,t^{k})^\prime$ with $\upsilon^{j}_t(\lambda)\bydef\lambda\sum_{s=1}^t\upsilon^{j-1}_s(\lambda)$, $\upsilon^{0}_t(\lambda)\bydef\eta_t \lambda^t$,  $i=1,\dots,b$, $\lambda_i$ are unit roots   of \eqref{var1} and  $g_i$ are their corresponding maximum geometric multiplicities.
The blocks $F_{i,j}$ corresponding to two different unit-roots $\lambda_i$ and $\lambda_j$ are 0.
The lag matrix $F_{11}(L)$ is obtained by inversion of the stationary rows of \eqref{eq:VAR1jordan} and therefore $\sum^{\infty}_{j=0}j^g|F_{11j}|<\infty$ and $\sum_{j=0}^\infty F_{11j}F_{11j}^\prime$ is not singular.

Following \cite{sims1990inference-supp} and \cite{tsay1990asymptotic-supp}, one can arrange the properly chosen powers of $T^{1/2}$ in a diagonal matrix $\Upsilon_T$ such that $ \sum_{t=p}^{T} \Upsilon_T^{-1}  \Tilde{X}_{t-1}$ has a bounded variance (for example, for stationary components it is $T^{1/2}$, for components containing $\upsilon^{j}_t(\lambda_2)$ it is $T^{(1+j)/2}$ etc.).

\textbf{Step 2.} We would like to show now that for all $h=0,1,2,\dots,H$, there exists a diagonal matrix $\Upsilon_T$ such that
\begin{align*}
    & \sum_{r=p}^{T-h} \Upsilon_T^{-1}  \Tilde{X}_{r-1}(\eta^{(h)}_{r+h})^\prime = \sum_{s=0}^h  \sum_{r=p}^{T-h} \Upsilon_T^{-1}  \Tilde{X}_{r-1} C_s\eta_{r+h-s} \overset{d}{\to} \sum_{s=0}^h \phi^s,\\
   &   \Upsilon_T^{-1} \sum_{r=p}^{T-h} \Tilde{X}_{r-1}\Tilde{X}_{r-1}^\prime \Upsilon_T^{-1} \overset{d}{\to}\tilde V  ,
\end{align*}
where $\phi^s$ are tight random vectors and $\tilde V$ is  a tight random matrix.

Consider $h=0$ first. The proof for deterministic, stationary, and ordinary unit root components follows the proofs of Lemmas 1 and 2 in \cite{sims1990inference-supp} with a modification required to account for conditionally heteroscedastic shocks $\eta_t$.
Namely, the proof of Lemma 1 \cite{sims1990inference-supp} requires the replacement of the Functional CLT for sums of  $\upsilon^1_t$ in part (a) and Theorem 2.4(ii) from \cite{chan1988limiting-supp} in part (c) for case $p=1$.
Under Assumption \ref{ass:innovations}, Theorem 3.1 \citep{de2000functional-supp} applied to $\upsilon^1_t$ gives the required FCLT under the mixing processes assumption. Theorem 4.1 of \cite{de2000functional-supp} for any pair of components of $\upsilon^1_t$ replaces Theorem 2.4(ii) of \cite{chan1988limiting-supp}  for the mixing sequence case (the MDS assumption on $\eta_t$ ensures that we do not need to recenter the sums).
For complex unit roots, the limiting distribution is given in Theorem 4.4 of \cite{tsay1990asymptotic-supp}, which also requires substituting the FCLT for martingale differences with mixing white noise \citep[Theorem 4.1]{de2000functional-supp} and the strong law of large numbers used from \cite{chan1988limiting-supp} for MDS with Corollary 3.2 in \cite{rio2017asymptotic-supp}.
For our purposes, we do not need the explicit form of $\phi$ and $\tilde V$.

For $h>0$, the existence of the limiting distributions follows the proof of Theorem 4.5 of \cite{tsay1990asymptotic-supp}.

\textbf{Step 3.} The regression model \eqref{eq:hlagVARApp} can be written as
    \begin{equation*}
        y_{t+h} = \Theta^{(h)} X_{t-1} + \eta^{(h)}_{t+h},
    \end{equation*}
where $$\Theta^{(h)}  \bydef (A_1^{(h)}, A_2^{(h)} , \dots ,  A_p^{(h)} ,  V^{(h)})$$ and $$X_{t-1}\bydef
 vec( y_{t-1} ,y_{t-2} ,\dots ,y_{t-p},1,(t-1),\dots (t-1)^k). $$

Each period $t=p,\dots,(T-h)$ we have  \begin{equation*}
     \hat \eta^{(h)}_{t+h} -  \eta^{(h)}_{t+h}   =     (\Theta^{(h)}-\hat \Theta^{(h)}) X_{t-1}= - X_{t-1}^\prime(\sum_{r=p}^{T-h} X_{r-1}X_{r-1}^\prime)^{-1}\sum_{r=p}^{T-h} X_{r-1} \eta_{r+h}^{(h)}.
\end{equation*}
By Step 1, there exists a linear transformation $P$ such that  $X_{t} = P^{-1}\Tilde{X}_t$. So we can get a representation
\begin{align*}
     &X_{t-1}^\prime(\sum_{r=p}^{T-h} X_{r-1}X_{r-1}^\prime)^{-1}\sum_{r=p}^{T-h} X_{r-1} \eta^{(h)}_{r+h} \\  =& \Tilde{X}_{t-1}^\prime  (P^{-1})' (P^{-1}\sum_{r=p}^{T-h} \Tilde{X}_{r-1} \Tilde{X}_{r-1}^\prime(P^{-1})')^{-1}\sum_{r=p}^{T-h} (P^{-1}) \Tilde{X}_{r-1}\eta^{(h)}_{r+h} \\
    =&   \Tilde{X}_{t-1}^\prime    ( \sum_{r=p}^{T-h} \Tilde{X}_{r-1} \Tilde{X}_{r-1}^\prime )^{-1}\sum_{r=p}^{T-h}  \Tilde{X}_{r-1}\eta^{(h)}_{r+h} .
\end{align*}

Taking weighted sums over $t$ using  matrix $\Upsilon_T$ from Step 2, we get representations
{\small
\begin{align}
( \sum_{t=p}^{T-h}       (\hat \eta^{(h)}_{t+h} -  \eta^{(h)}_{t+h} ) \eta_{t}^\prime )^\prime &=\sum_{t=p}^{T-h}  \eta_{t}  \Tilde{X}_{t-1}^\prime    \Upsilon_T^{-1}    ( \Upsilon_T^{-1} \sum_{r=p}^{T-h} \Tilde{X}_{r-1} \Tilde{X}_{r-1}^\prime \Upsilon_T^{-1})^{-1}  \sum_{r=p}^{T-h} \Upsilon_T^{-1}  \Tilde{X}_{r-1}(\eta^{(h)}_{r+h})^{\prime}  , \label{eq:etahdif1}\\
(\sum_{t=p}^{T-h}(\hat \eta_t-\eta_t) (\eta^{(h)}_{t+h})^\prime )^\prime & = \sum_{t=p}^{T-h}   \eta^{(h)}_{t+h}  \Tilde{X}_{t-1}^\prime    \Upsilon_T^{-1}    ( \Upsilon_T^{-1} \sum_{r=p}^{T} \Tilde{X}_{r-1} \Tilde{X}_{r-1}^\prime \Upsilon_T^{-1})^{-1}  \sum_{r=p}^{T} \Upsilon_T^{-1}  \Tilde{X}_{r-1}\eta_r^{\prime}  , \notag\\
\sum_{t=p}^{T-h} (\hat \eta_t -\eta_t) (\hat \eta^{(h)}_{t+h}-  \eta^{(h)}_{t+h})^  \prime & = \sum_{t=p}^{T-h}   \eta^{(h)}_{t+h}  \Tilde{X}_{t-1}^\prime    \Upsilon_T^{-1}   ( \Upsilon_T^{-1} \sum_{r=p}^{T} \Tilde{X}_{r-1} \Tilde{X}_{r-1}^\prime \Upsilon_T^{-1})^{-1}  \sum_{r=p}^{T} \Upsilon_T^{-1}  \Tilde{X}_{r-1}\eta_r^{\prime}. \label{eq:etahdif3}
\end{align}
}
By Step 2, these sums converge in distribution to some random matrices. Convergence in distribution implies that these sums are tight random sequences   \cite[Lemma 4.8]{kallenberg1997foundations-supp}.  This means that the sums are $O_p(1)$.

To complete the proof, notice
\begin{align*}
     & \sum_{t=p}^{T-h} \hat \eta_t (\hat \eta^{(h)}_{t+h} )^\prime-  \sum_{t=p}^{T-h}  \eta_t (\eta^{(h)}_{t+h} )^\prime   \\
     &= \sum_{t=p}^{T-h} \eta_t(\hat \eta^{(h)}_{t+h}-  \eta^{(h)}_{t+h})^  \prime+   \sum_{t=p}^{T-h}(\hat \eta_t-\eta_t) (\eta^{(h)}_{t+h})^\prime+  \sum_{t=p}^{T-h} (\hat \eta_t -\eta_t) (\hat \eta^{(h)}_{t+h}-  \eta^{(h)}_{t+h})^  \prime \\
     &= O_p(1).
\end{align*}

\end{proof}
\newpage
\section{Additional figures}

\begin{figure}[!htbp]
\centering
\vspace{-0.4cm}
\includegraphics[width=0.7 \linewidth]{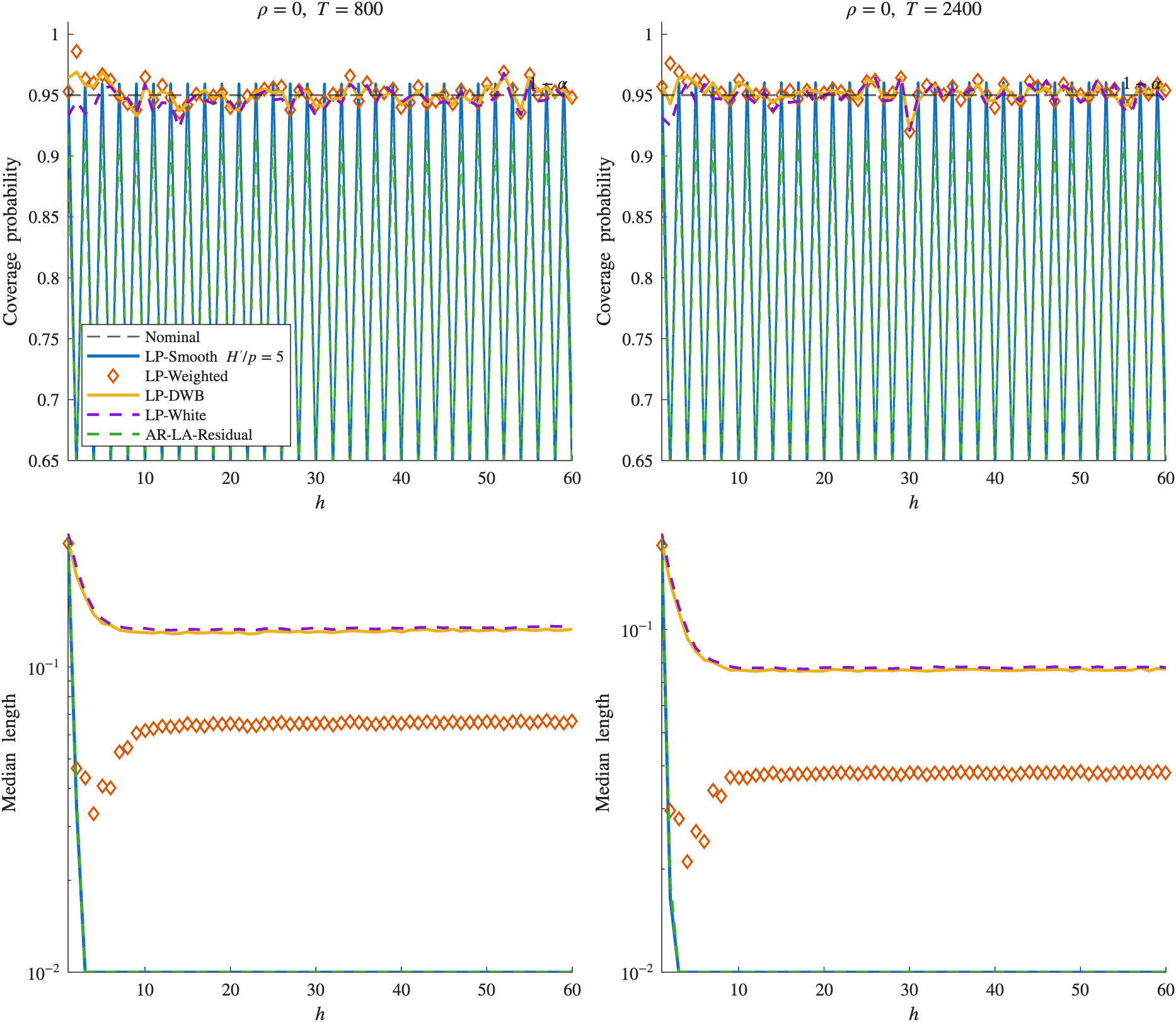}
\vspace{-0.4cm}
\caption{ {Coverage probability and median interval length in AR(1) design
with asymmetric conditional heteroscedasticity ($\gamma=1$) for large sample ($T=2400$) and $\rho=0$.}}
\label{fig:MC_rho_0}
\vspace{0.3cm}
\includegraphics[width=0.7 \linewidth]{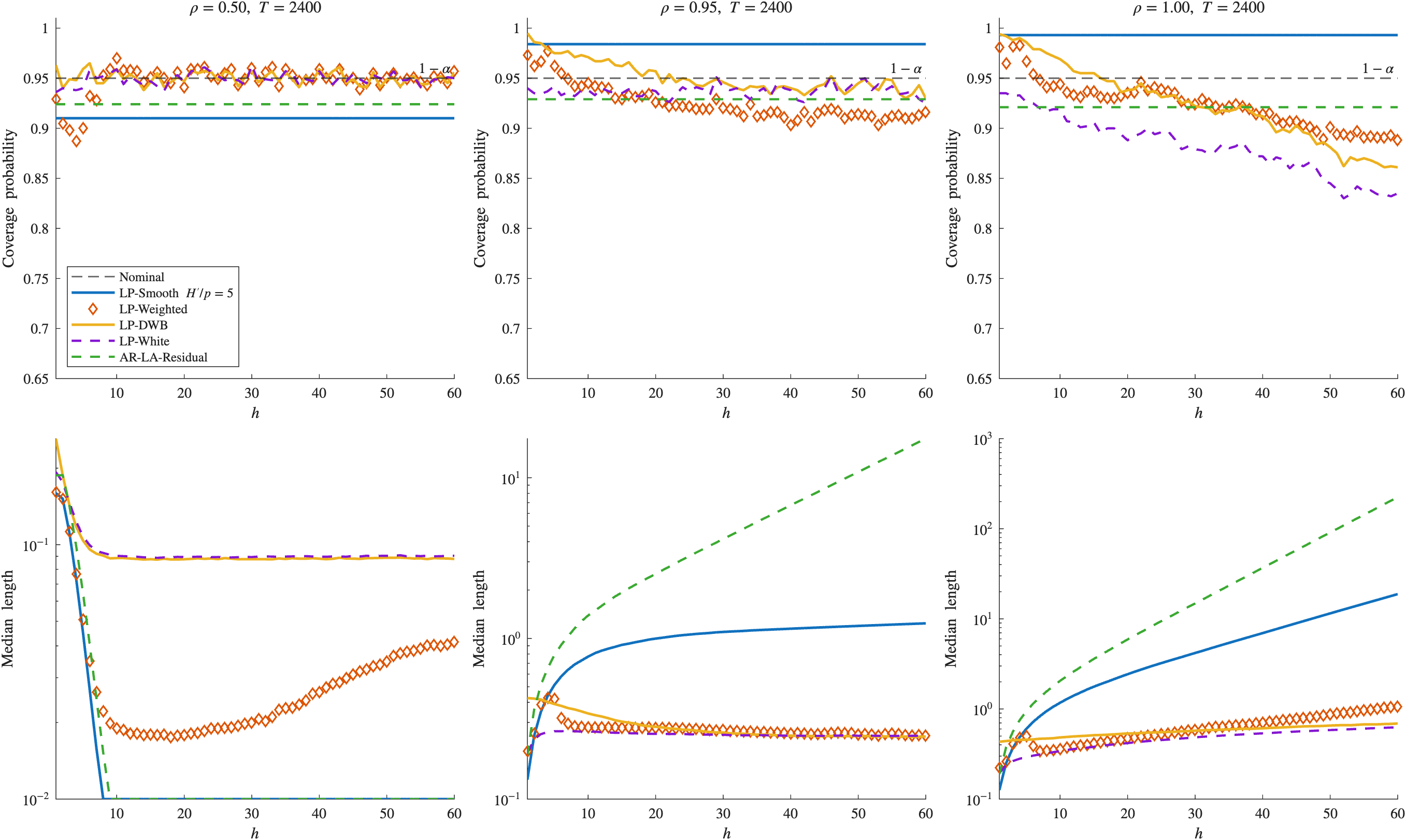}
\vspace{-0.2cm}
\caption{ {Coverage probability and median interval length in AR(1) design
with homoscedastic innovations ($\gamma=0$) in large sample $T=2400$.}}
\label{fig:MC_gamma_0}
\end{figure}

\newpage

\begin{figure}[!htbp]
\centering

\includegraphics[width=0.95\linewidth]{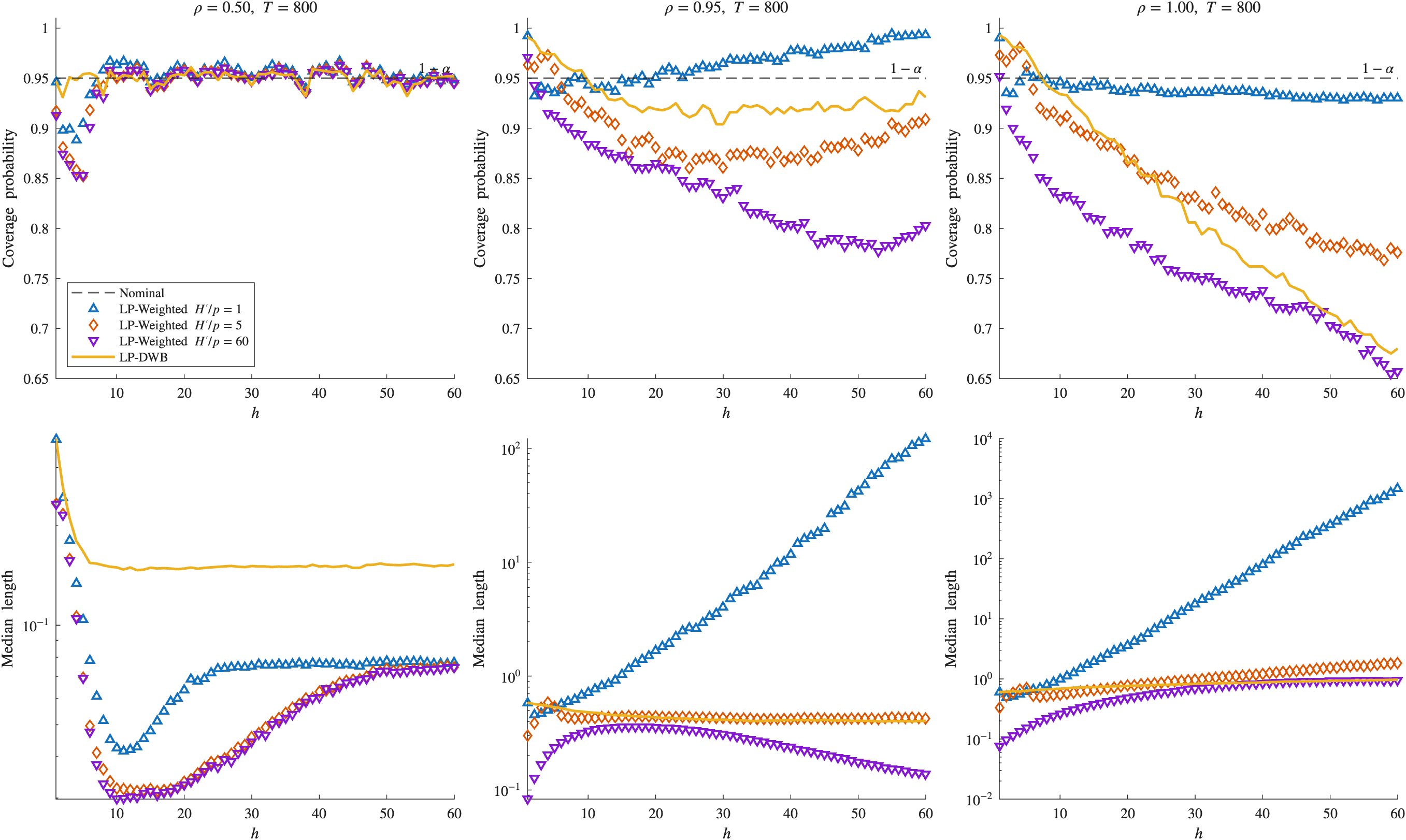}
\vspace{-0.25cm}

\caption{ {Coverage probability and median interval length in AR(1) design
with asymmetric conditional heteroscedasticity ($\gamma=1$) for medium size sample ($T=800$).}}
\label{fig:MC_hprime_800}

\vspace{0.25cm}

\includegraphics[width=0.95\linewidth]{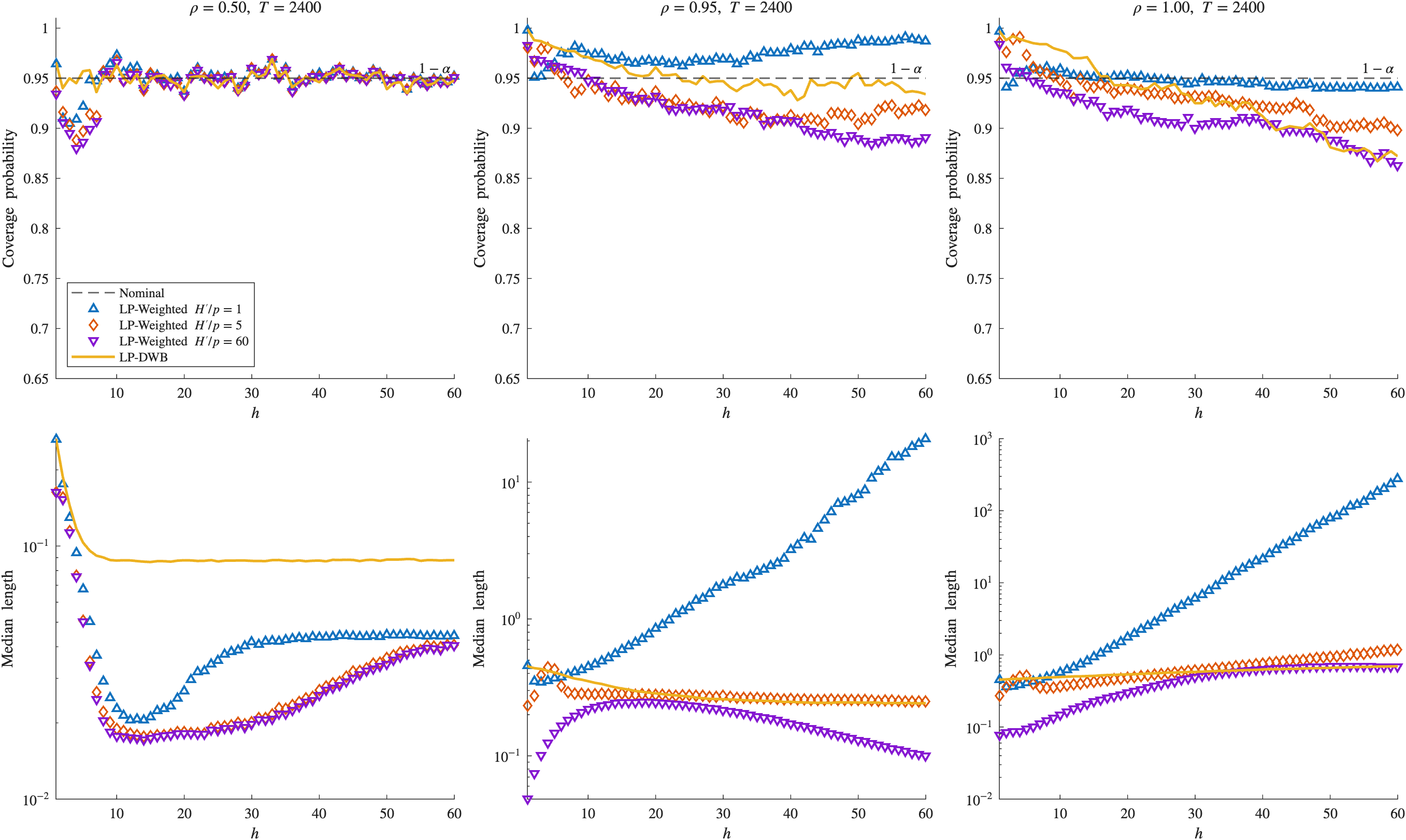}
\vspace{-0.25cm}

\caption{ {Coverage probability and median interval length in AR(1) design
with asymmetric conditional heteroscedasticity ($\gamma=1$) for large sample ($T=2400$).}}
\label{fig:MC_hprime_2400}

\end{figure}

\section{Simulation experiment based on a DSGE model of the U.S. economy} \label{sec:dsge}

In this section, we assess the finite-sample performance of the proposed VIV local projection procedure and the associated dependent wild bootstrap confidence intervals in data generated by a well-known medium-scale New Keynesian model with a wide range of nominal and real rigidities of \citet{smets2007shocks-supp} (SW hereafter).
The goal of this exercise is twofold.
First, we illustrate that VIV moments can recover structural impulse responses  to a monetary policy shock in a realistic model. Second, we compute the coverage and average length of the confidence sets based on the dependent wild bootstrap in a misspecified VAR model when the true model is VARMA.

The state vector in the full linearized DSGE model has a dimension of about 20 variables.
As in the original paper, we introduce seven structural macroeconomic shocks into the model, including the monetary policy shock.
Typically,  the model is estimated using the Kalman filter, with the observation vector consisting of seven variables (the number of variables cannot be less than the number of shocks): the log difference of real GDP, real consumption, real investment, real wage, log hours worked, the log difference of the GDP deflator, and the federal funds rate.
The filter is used to estimate the linear state space model that corresponds to the solution of the log-linearized DSGE models.

The unobserved state vector follows a SVAR model. Since the observation vector has a smaller dimension than the state vector (7 vs. 20), the observation vector is not generally a finite-order VAR.
However, in a fairly general case, the vector of observed variables can be represented as a VARMA(3,2) process, and, assuming the invertibility of the MA part, a VAR of finite order can act as an approximation for the corresponding  VAR($\infty$) \citep{morris2016varma-supp}.

In this exercise, we will approximate the underlying VARMA(3,2) process with VAR(p), $p=8$ model to mimic 2 years of lagged quarterly data.
Since the true DGP is  only an \emph{approximate} VAR(8) process,  the proposed inference procedures are only approximately correct.
Hence, this design can be used to study the finite sample properties of DWB in misspecified VAR models.

We generate samples of various lengths from the log-linearized SW DSGE model, calibrated using the posterior mean estimates obtained in \citep{smets2007shocks-supp}.
The only exception is the standard deviation parameter of the monetary policy shock, for which we introduce two regimes described by a Markov chain.
We consider a symmetric case where the probability of remaining in one of the regimes is 70\%, and the probability of switching to the other regime is 30\%.
We introduce an auxiliary instrumental variable $Z_t$, which takes the value 0 in the first regime and 1 in the second.
The value of the standard deviation of the monetary policy shock, $\sigma_{r,t}$, is set according to the following equation:
\begin{equation*}
    \sigma_{r,t} =\sigma_{r,0}(1+3Z_{t})
\end{equation*}
where $\sigma_{r,0}$ is set to the value of the posterior mean estimate for the standard deviation of the monetary policy shock in  \citet{smets2007shocks-supp}. This specification assumes that the standard deviation of the monetary shock increases by a factor of four in the second regime.
We assume that the econometrician observes the binary process $Z_t$ and uses it as a VIV for the monetary policy shock.

Next, using the simulated dataset of seven observable variables and the instrument $Z_t$, we apply our identification method to extract the structural monetary policy shock and compute the corresponding IRFs.
 We include in the vector of observed variables in the local projection model with 8 lags: log deviations from the steady-state level for real GDP, household consumption, investment, labor, wages, the GDP deflator index,  and the nominal interest rate.\footnote{ We use the GDP deflator index rather than growth rates in the econometric exercise to emphasize that the method can handle both stationary and nonstationary variables.}
 Thus, the dataset consists of six stationary variables and one nonstationary variable.

  Figure  \ref{fig:SW500_raw500} shows the IRFs for the seven variables with the corresponding DWB 95\% confidence sets (pointwise and sup-$t$).
We compare the IRFs generated from the theoretical SW model (VARMA) to those obtained from the estimated VAR model to validate the accuracy and effectiveness of our method in capturing the dynamics prescribed by the SW framework.
The contemporaneous impulse responses identified with VIV are well-aligned with the true theoretical IRFs, which validates our identification and estimation approach.
For subsequent horizons, the true IRFs lie within joint confidence bounds for all variables and across all horizons.

\begin{figure}[!htbp]
\centering
\vspace{-0.3cm}
\includegraphics[width=1\linewidth]{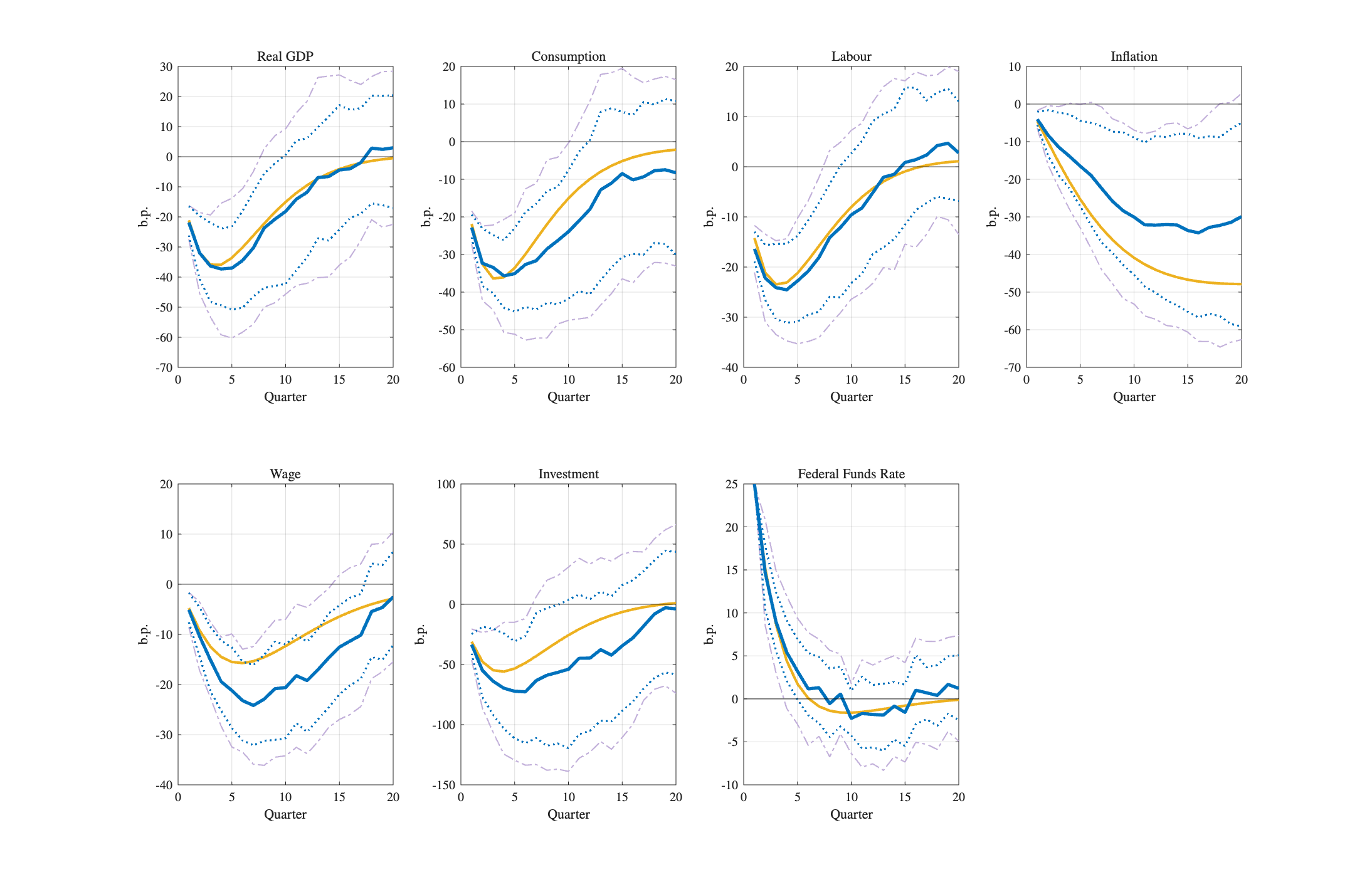}

\vspace{-0.4cm}

\caption{IRFs to a monetary policy shock identified through VIV, $T=500$. LP estimators (blue lines) with corresponding 95\% pointwise (blue dots) and simultaneous sup-$t$ bands (dot-dash lines), true IRFs from the SW model (gold lines).}
\label{fig:SW500_raw500}

\vspace{0.3cm}

\includegraphics[width=1\linewidth]{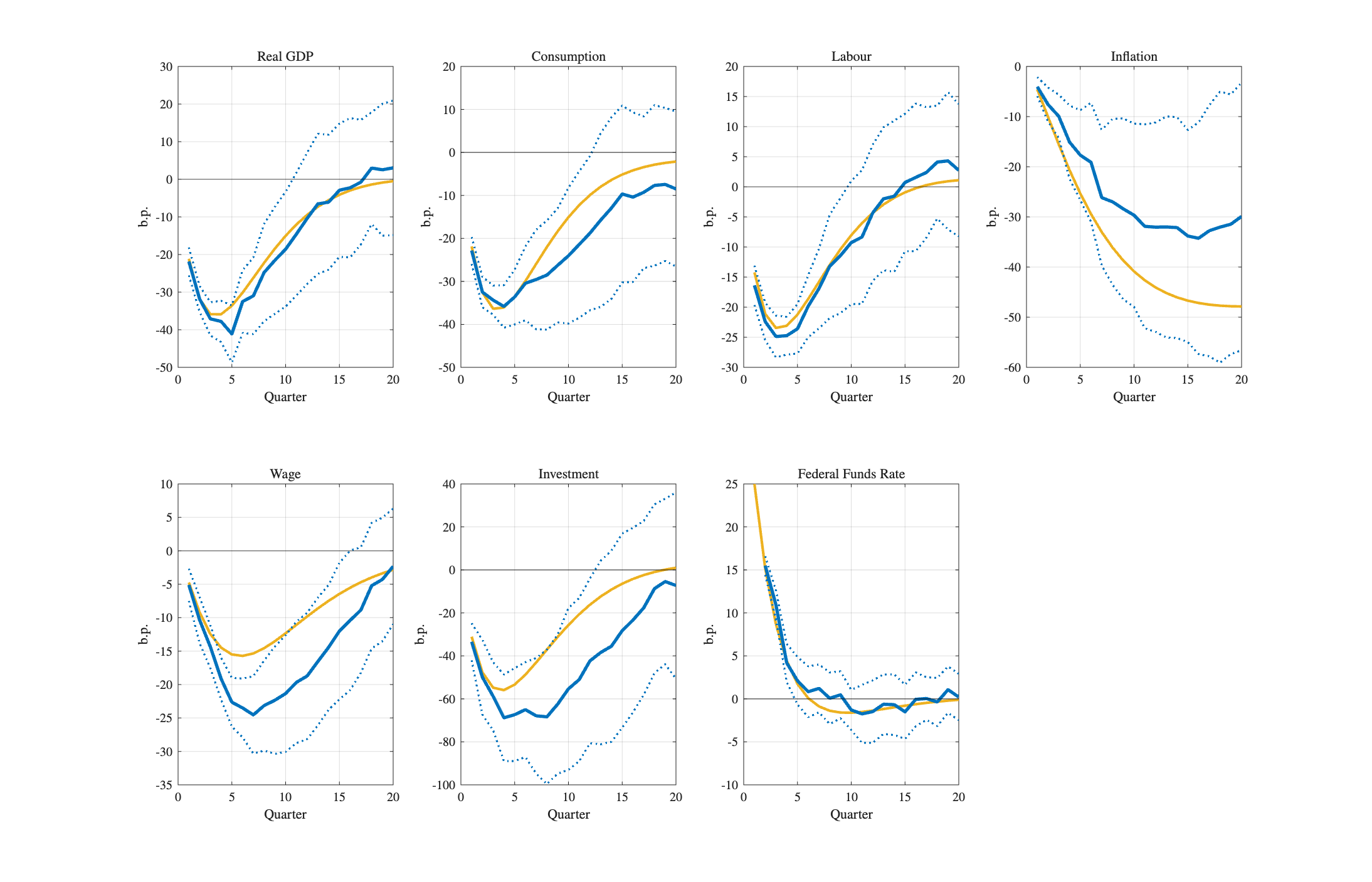}

\vspace{-0.4cm}

\caption{IRFs to a monetary policy shock identified through VIV, $T=500$. Weighted LP estimators (blue lines) with corresponding 95\% pointwise bands (blue dots), true IRFs from the SW model (gold lines).}
\label{fig:SW500_weighted500}

\end{figure}

The weighted IRF estimator that matches the VAR(8)-based  and the corresponding LP IRF estimators for 20 horizons is shown in Figure \ref{fig:SW500_weighted500}.
Optimal weighting preserves the approximate shape of LP-estimates while reducing the width of the corresponding confidence sets.
As we increase the sample size to 1500 periods, the LP estimates get even closer to the true values.\footnote{The corresponding figure is available upon request.}
This result shows that LP estimators based on finite VAR(8) can reasonably approximate the underlying VARMA(3,2) model.
Moreover, the illustrative figures show that the simultaneous confidence bounds successfully cover the true IRFs in all the figures, including the nonstationary GDP deflator index

Of course, one cannot judge the quality of the confidence sets based on a single sample. Next, we conduct a small scale Monte Carlo study  to evaluate the finite-sample coverage properties of the proposed confidence intervals more systematically.
To do so, we simulate a much longer realization from the SW DSGE model and partition it into 100 non-overlapping subsamples of equal length, 500.
In each subsample, we re-estimate the local projection model, construct the corresponding confidence intervals, and compare them to the theoretical impulse responses implied by the DSGE model.
This allows us to compute empirical coverage rates, defined as the fraction of subsamples in which the theoretical impulse response lies inside the estimated confidence interval, as well as the corresponding average interval lengths.

Supplementary Appendix Table \ref{tab:dsge_mc_var_h_selected} reports coverage probabilities and average interval lengths for selected representative variables at selected horizons.
The Monte Carlo results complement the representative figures by replacing single-realization evidence with repeated-sampling evidence on finite-sample coverage.
The raw procedure typically achieves better coverage at short horizons, with wider intervals, while the smoothed and weighted procedures often shorten intervals at some cost to coverage.
This tradeoff is more pronounced for stationary variables.
The only unit root variable, GDP deflator index, in contrast, demonstrates almost uniform improvement in both coverage and the average length of confidence sets.

\begin{table}[!ht]
\centering

\caption{Monte Carlo coverage and average interval length by variable and selected horizons in DSGE simulations based on the SW model.}
\label{tab:dsge_mc_var_h_selected}

\begin{tabular}{llcccccc}
\hline
 & & \multicolumn{3}{c}{Coverage probabilities} & \multicolumn{3}{c}{Average lengths} \\
\cline{3-5}\cline{6-8}
Variable & Hor.& Raw & Smoothed & Weighted & Raw & Smoothed & Weighted \\
\hline
Federal Funds Rate & 1  & 1.00 & 1.00 & - & 0.00  & 0.00  & -   \\
  & 5  & 0.87 & 0.62 & 0.64 & 6.73  & 4.42  & 4.48  \\
  & 10 & 0.92 & 0.85 & 0.88 & 6.64  & 6.15  & 5.44  \\
  & 15 & 0.93 & 0.78 & 0.80 & 6.85  & 5.53  & 5.13  \\
  & 20 & 0.94 & 0.91 & 0.89 & 6.83  & 5.57  & 5.05  \\
  \hline
Real GDP           & 1  & 0.93 & 0.96 & 0.96 & 9.20  & 9.31  & 9.31  \\
            & 5  & 0.93 & 0.71 & 0.75 & 28.16 & 13.89 & 14.22 \\
            & 10 & 0.90 & 0.85 & 0.86 & 32.37 & 26.52 & 26.76 \\
            & 15 & 0.86 & 0.95 & 0.87 & 34.09 & 35.95 & 33.06 \\
           & 20 & 0.88 & 0.93 & 0.86 & 35.76 & 38.56 & 34.92 \\
           \hline

GDP deflator index          & 1  & 0.92 & 0.96 & 0.96 & 3.49  & 3.40  & 3.40  \\
           & 5  & 0.97 & 0.91 & 0.93 & 22.64 & 15.09 & 15.73 \\
          & 10 & 0.91 & 0.90 & 0.90 & 37.54 & 35.76 & 34.22 \\
          & 15 & 0.87 & 0.92 & 0.89 & 48.53 & 55.85 & 46.91 \\
           & 20 & 0.82 & 0.91 & 0.84 & 56.55 & 76.94 & 56.08 \\

\hline
\end{tabular}

\vspace{0.3cm}
\begin{minipage}{0.92\textwidth}
 \footnotesize
\textit{Notes.}
Federal Funds Rate IRF is normalized to 1. The results are based on $N_{MC}=100$ simulations of time series  with length  $T=500$ each.
The omitted variables and horizons have comparable coverage probabilities. Raw, Smoothed, and Weighted refer to the three versions of LP estimators. Smoothing is done over 20 horizons.
\end{minipage}
\end{table}

Overall, our results indicate that the proposed VIV local projection procedure paired with DWB can recover the DSGE-implied monetary policy impulse responses with good coverage in empirically realistic finite samples, even though the observable process is generated by a VARMA model rather than by an exact finite-order VAR.
This result suggests that the robustness theory for LP estimators established  in  \cite{Montiel2024double-supp} for stationary VARMA models carries over to nonstationary models.
In addition to the standard LP estimates, we find that  the  model-based efficient weighted estimator shows robustness to model misspecification at long horizons but can result in visible biases for shorter horizons for stationary variables.
One natural and promising way to improve the performance of weighted estimates for stationary variables at short horizons is to perform smoothing using the VARMA(3,2) model instead of  the VAR(8) model.
We leave this extension for future research.

\section{Additional materials for empirical application}\label{sec:app_monetary_policy}

\subsection{Data sources}

Table \ref{tab:data_sources} describes the variables and respective sources used in the main specification. The monthly series for real GDP and the GDP deflator are interpolated from quarterly data using the \cite{chow1971best-supp} methodology. We use the industrial production index for the real GDP interpolation and the consumer price index for the GDP deflator.

\begin{table}[h]
\centering
\caption{Description of Variables and Data Sources}

\begin{tabular}{lccl}

\toprule
\textbf{Variable} & \textbf{Source} & \textbf{Seasonally } & \textbf{Mnemonics} \\
                  &                 &  \textbf{Adjusted}   &                         \\
\midrule
Real GDP & FRED & Yes & GDPC1 \\
GDP Deflator & FRED & Yes & GDPDEF  \\
Total Reserves & FRED & Yes & TRARR \\
Nonborrowed Reserves & FRED & Yes &  BOGNONBR \\
Federal Funds Rate & FRED & No & FEDFUNDS  \\
Commodity Price Index & S\&P Dow Jones & No & DJCIT  \\
Consumer Price Index & FRED & Yes & CPIAUCSL  \\
Industrial Production Index & FRED & Yes & INDPRO  \\
Market Yield on U.S. Treasury Securities & FRED & No & GS1 \\
 at 1-Year Constant Maturity  &   &   &  \\
\bottomrule
\end{tabular}
\label{tab:data_sources}

{\footnotesize {\it Notes:} Time series are available on the following website, \href{https://fred.stlouisfed.org/series/}{https://fred.stlouisfed.org/series/} and \\ \href{https://www.spglobal.com/spdji/en/indices/commodities/dow-jones-commodity-index/} {https://www.spglobal.com/spdji/en/indices/commodities/dow-jones-commodity-index/} }
\end{table}

\subsection{Instrument construction}

\subsubsection{Baseline instruments}

As a baseline instrument (see Figure \ref{fig:IRF_Rigobon} in the main text), we consider the number of FOMC meetings held in a given month, including telephone conferences and unscheduled meetings.\footnote{We construct instrumental-variable series using the FRASER website of the Federal Reserve Bank of St. Louis https://fraser.stlouisfed.org/title/federal-open-market-committee-meeting-minutes-transcripts-documents-677?browse=1980s} As a second instrument we use the number of meeting-days in a given month.\footnote{Results with this instrument are available upon request.} Figures \ref{fig:FOMC_events} and \ref{fig:FOMC_days} report both instruments for all meetings and for scheduled meetings only.
\begin{figure}[!ht]
\begin{center}
\includegraphics[width=0.9\linewidth]{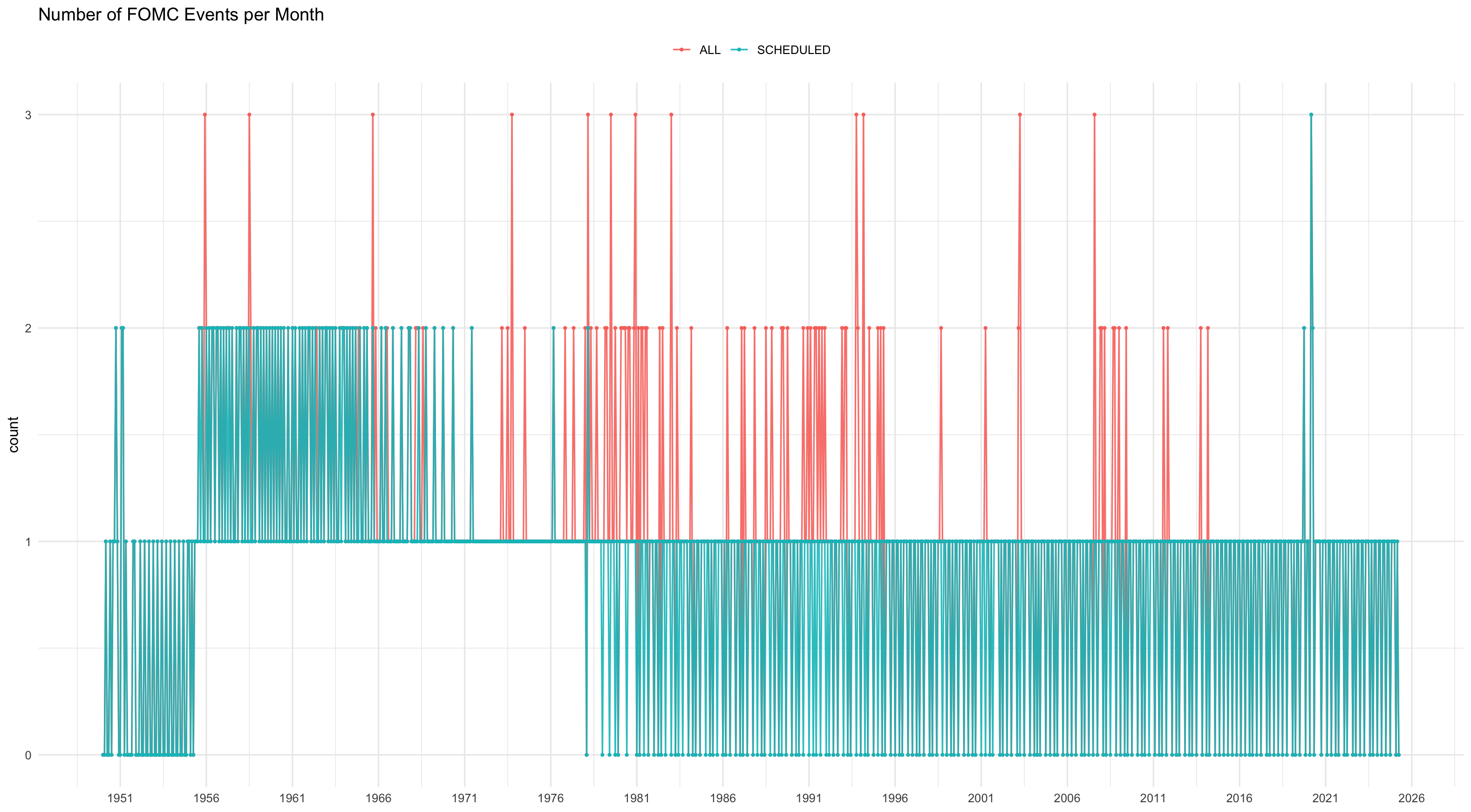}
\end{center}
\caption{Number of FOMC events per month.}\label{fig:FOMC_events}
{\footnotesize {\it Notes:} The blue series counts scheduled meetings; the red series counts all events, including unscheduled meetings and conference calls. Sample 1950 - 2025.}

\begin{center}
\includegraphics[width=0.9\linewidth]{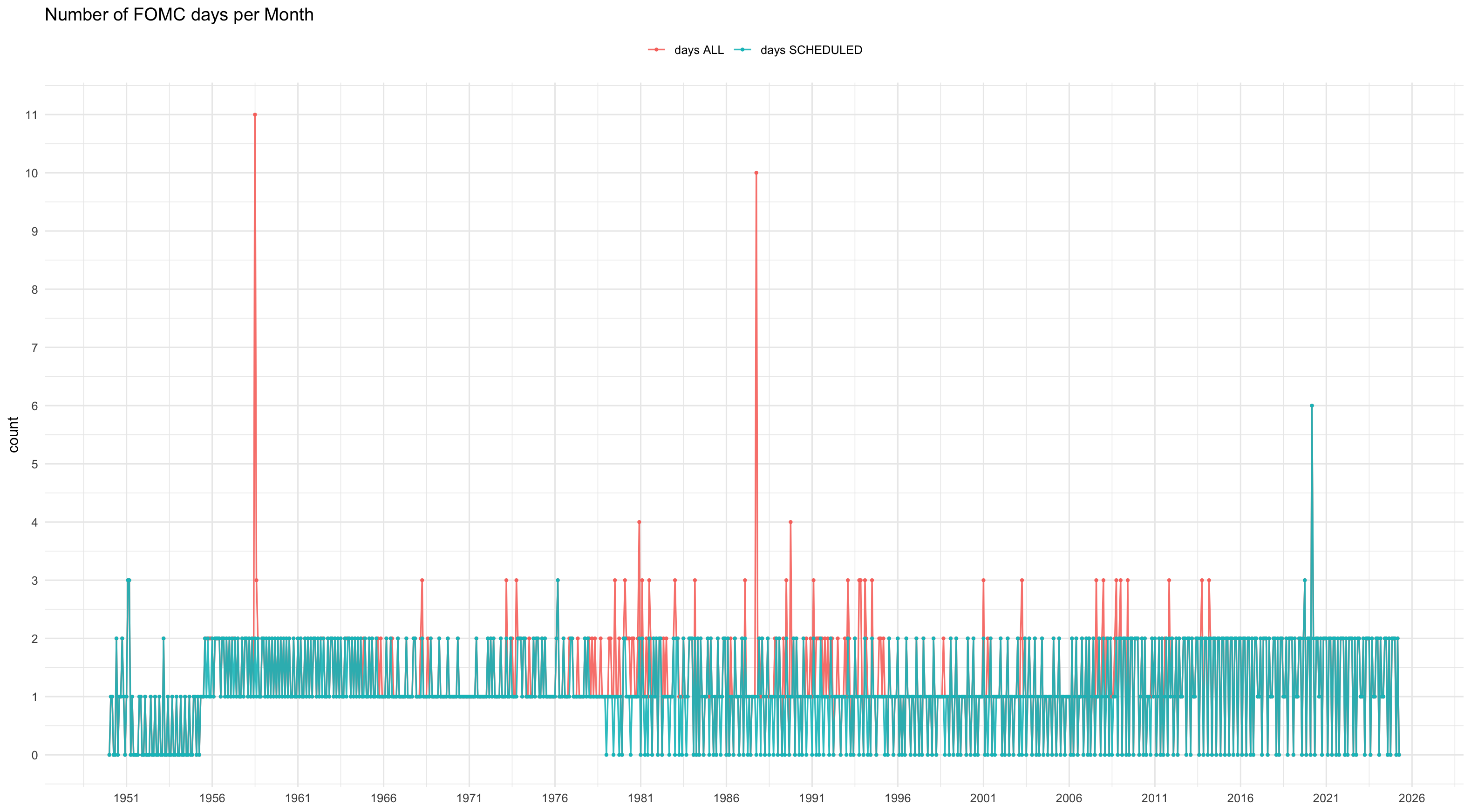}
\end{center}
\caption{Number of FOMC meeting-days per month.}\label{fig:FOMC_days}
{\footnotesize {\it Notes:} The blue series counts scheduled meeting-days; the red series counts all meeting-days, including unscheduled events and conference calls. Sample 1950-2025.}
\end{figure}

\subsubsection{Volume-surprise instruments}

We measure meeting-related surprises in equity-market trading activity using trading-day windows around FOMC meetings. Let $t$ index trading days, and let
\[
v_t = \log(\mathrm{Vol}_t),
\qquad
d_t = \log(\mathrm{Vol}_t \cdot P_t),
\]
where $\mathrm{Vol}_t$ denotes S\&P~500 share volume and $P_t$ the closing price from Yahoo Finance. We compute surprises for $x_t \in \{v_t,d_t\}$ and report both. We assemble two meeting calendars: \textbf{ALL} (all meetings) and \textbf{SCHED} (scheduled ex ante). Calendar days are mapped to trading days. In the few cases where a meeting falls on a non-trading day (weekend, holiday, or exchange closure), we exclude that event from the equity-based surprise calculation.\footnote{In our sample, there are 10 such instances; examples include 1952-03-01 (Saturday), 1965-11-02 (NYSE closed for Election Day), 2001-09-13 (post-9/11 exchange closure), and 2020-03-15 (Sunday).}
If a meeting spans consecutive trading days, we collapse them into a \emph{meeting block} $[s_b,e_b]$ of consecutive trading indices with the meeting indicator equal to one.

\paragraph{Pre/post windows and placement.}
For each block $b$ we form pre- and post-windows. We allow two anchoring conventions:
(i) \emph{final-day anchoring} \textbf{final}, where the anchor is the last day $e_b$; and
(ii) \emph{block anchoring} \textbf{block}, where the anchor spans the full block $[s_b,e_b]$.
The pre-window is taken \emph{before} the anchor (we exclude the anchor from the pre window by default); the post window runs from the anchor through $e_b+\texttt{post}$. We enforce non-overlapping windows by construction. To avoid forward-looking assignment, we \emph{place} the surprise at the end of the post window, $e_b+\texttt{post}$.\footnote{For robustness, we also place it at the anchor.}

Given $x_t$, the \emph{event surprise} for block $b$ is the pre/post difference in mean logs,
\[
S^{\mathrm{event}}_b(x)
\equiv
\overline{x}\!\left(\mathcal{P}^{\text{post}}_b\right)
-
\overline{x}\!\left(\mathcal{P}^{\text{pre}}_b\right),
\]
which we record on a single trading day (the placement day). This yields a sparse series $S^{\mathrm{event}}_t(x)$: zero on non-placement days and equal to $S^{\mathrm{event}}_b(x)$ on the placement day. Because $x_t$ is in logs, $S^{\mathrm{event}}_b(x)$ is a percent change.

Complementing this, we also construct a \emph{daily} surprise series that does not use meeting dates. For every trading day $t$,
\[
S^{\mathrm{daily}}_t(x)
\equiv
\overline{x}\!\big(\{t,\ldots,t+\texttt{post}\}\big)
-
\overline{x}\!\big(\{t-\texttt{pre},\ldots,t-1\}\big),
\]
with the same inclusion choices for the anchor day. This fully populated series asks whether activity around $t$ is unusually high relative to its immediate pre window.

\paragraph{Estimation and interpretation.}
For each specification, we estimate
\[
S_t = \alpha + \beta D_t + \varepsilon_t,
\]
on trading-day data with Newey-West standard errors.

We consider three dummies $D_t$:
(i) a \emph{placement} dummy equal to one on the specification's placement dates,
(ii) an \emph{ALL-calendar} dummy for all FOMC meetings, and
(iii) a \emph{SCHEDULED-calendar} dummy for scheduled meetings only.
The coefficient $\beta$ is the difference between the average surprise on $D_t{=}1$ days and the unconditional mean of $S_t$. With event surprises, this recovers the mean meeting day spike; with daily surprises, it quantifies how much larger the average surprise is on meeting days than on typical days.

\begin{figure}[H]
\begin{center}
\includegraphics[width=0.9\linewidth]{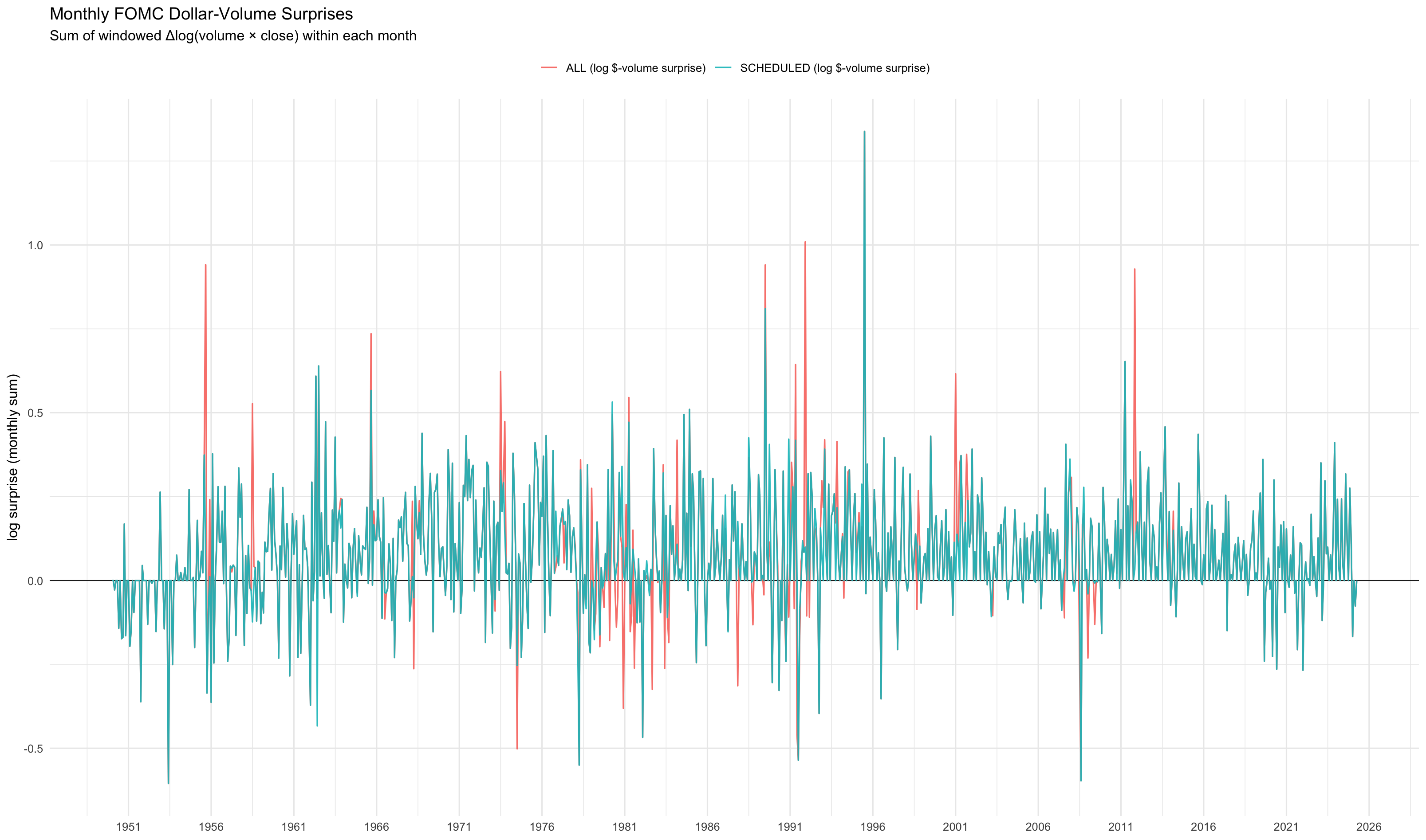}
\end{center}
\caption{Monthly FOMC trading-activity surprises (dollar volume).}\label{fig:usd_log_vol_pre1_post1_FALSE_TRUE}
{\footnotesize {\it Notes:} Monthly aggregation of event-day surprises constructed from pre/post windows around FOMC meeting blocks. Sample 1950-2025.}
\end{figure}

\paragraph{Specification grid.}
The design admits discrete choices that control the focus around meetings: which meeting set (ALL vs.\ SCHED), anchor mode (``final'' vs.\ ``block''), window sizes $\texttt{pre},\texttt{post}\in\{0,\ldots,K\}$, whether to include the anchor day in the post window, what placement date to choose(``anchor'' vs.\ ``end''). We perform a grid search over these choices and over base series $x_t\in\{v_t,d_t\}$. Table~\ref{tab:grid} summarizes the discrete choices for the grid. Our baseline reporting is full-sample: we present $\hat\beta$ with Newey--West $t$-statistics for both all FOMC meetings and scheduled only, as well as for series ($v_t$ and $d_t$), and various window designs.

\begin{table}[h!]
\centering
\caption{Grid of window and placement choices}
\label{tab:grid}
\begin{tabular}{ll}
\toprule
Dimension & Values \\
\midrule
Meeting set & \texttt{ALL} (all meetings), \texttt{SCHED} (scheduled) \\
Anchoring mode & \texttt{final} (anchor $=e_b$), \texttt{block} ($[s_b,e_b]$) \\
Pre window length & $\texttt{pre} \in \{0,1,\ldots,K\}$ (default $K\!\le\!5$) \\
Post window length & $\texttt{post} \in \{0,1,\ldots,K\}$ (default $K\!\le\!5$) \\
Include anchor in pre & FALSE/TRUE \\
Include anchor in post & TRUE/FALSE \\
Placement day & \texttt{end} ($e_b{+}\texttt{post}$) [default], \texttt{anchor} \\
Base series & $x_t \in \{\log \mathrm{Vol}_t,\;\log (\mathrm{Vol}_t\!\cdot\! P_t)\}$ \\
\bottomrule
\end{tabular}
\end{table}

\paragraph{Out of sample selection. Robustness.}
To avoid overfitting, we split the sample into two parts around February 4, 1994 (the start of modern FOMC statement practices) and recompute Newey--West $t$-statistics in-sample (pre-1994) and out-of-sample (post-1994). We rank specifications both by the size of the coefficient $\beta$ out-of-sample and by the out-of-sample $t$-statistic on $\beta$, retaining only cases with $\hat\beta>0$ and $t-stat>0$ both in- and out-of-sample; ties are broken by parsimony (smaller $\texttt{pre}+\texttt{post}$) and then by preferring dollar- to share-volume.

\paragraph{Illustration (timeline).}
Figure~\ref{fig:timeline} depicts a two-day meeting ($s_b$ to $e_b$) with $\texttt{pre}=1$, $\texttt{post}=2$, exclusion of the anchor both from the pre window and the post window, and placement at the post end. The event surprise is the difference between the mean of two post days and the mean of one pre day, recorded at $e_b{+}2$. Figure~\ref{fig:timeline2} is different only by inclusion of the anchor to the $\texttt{post}$ period, which changes the calculation of surprise. Now the event surprise is the difference between the mean of two post days and one anchor day, and the mean of one pre day, recorded at $e_b{+}2$.

\begin{figure}[t!]
\centering
\begin{tikzpicture}[>=Stealth,scale=1]

\draw[->] (-0.5,0) -- (9.0,0);

\foreach \x/\lab/\dy in {
  0/{$s_b{-}1$}/2mm,
  1.5/{$s_b$}/2mm,
  4.5/{$e_b$}/2mm,
  6/{$e_b{+}1$}/2mm,
  7.5/{$e_b{+}2$}/2mm
}{
  \draw (\x,0.12) -- (\x,-0.12);
  \fill (\x,0) circle (1.2pt);
  \node[below=\dy] at (\x,-0.12) {\lab};
}

\draw [decorate,decoration={calligraphic brace,amplitude=6pt,mirror,raise=6pt}]
  (0,-0.02) -- (1.5,-0.02)
  node[midway,below=20pt] {\small PRE};

\draw [decorate,decoration={calligraphic brace,amplitude=6pt,raise=6pt}]
  (1.5,0.02) -- (4.5,0.02)
  node[midway,above=10pt] {\small Meeting block $[s_b,e_b]$};

\draw [decorate,decoration={calligraphic brace,amplitude=6pt,mirror,raise=6pt}]
  (4.5,-0.02) -- (7.5,-0.02)
  node[midway,below=20pt] {\small POST};

\draw[->,thick] (7.5,0.85) -- (7.5,0.14);
\node[above] at (7.5,0.85) {\small placement at $e_b{+}\texttt{post}$};
\end{tikzpicture}
\caption{Pre/post windows for a two-day meeting with $\texttt{pre}=1$, $\texttt{post}=2$; single-day gaps equal, block spans two steps. Anchor is excluded both from the pre window and the post window.}
\label{fig:timeline}
\end{figure}

\begin{figure}[t!]
\centering
\begin{tikzpicture}[>=Stealth,scale=1]

\draw[->] (-0.5,0) -- (9.0,0);

\foreach \x/\lab/\dy in {
  0/{$s_b{-}1$}/2mm,
  1.5/{$s_b$}/2mm,
  4.5/{$e_b$}/2mm,
  6/{$e_b{+}1$}/2mm,
  7.5/{$e_b{+}2$}/2mm
}{
  \draw (\x,0.12) -- (\x,-0.12);
  \fill (\x,0) circle (1.2pt);
  \node[below=\dy] at (\x,-0.12) {\lab};
}

\draw [decorate,decoration={calligraphic brace,amplitude=6pt,mirror,raise=6pt}]
  (0,-0.02) -- (1.5,-0.02)
  node[midway,below=20pt] {\small PRE};

\draw [decorate,decoration={calligraphic brace,amplitude=6pt,raise=6pt}]
  (1.5,0.02) -- (4.5,0.02)
  node[midway,above=10pt] {\small Meeting block $[s_b,e_b]$};

\draw [decorate,decoration={calligraphic brace,amplitude=6pt,mirror,raise=6pt}]
  (3,-0.02) -- (7.5,-0.02)
  node[midway,below=20pt] {\small POST};

\draw[->,thick] (7.5,0.85) -- (7.5,0.14);
\node[above] at (7.5,0.85) {\small placement at $e_b{+}\texttt{post}$};
\end{tikzpicture}
\caption{Pre/post windows for a two-day meeting with $\texttt{pre}=1$, $\texttt{post}=2$; single-day gaps equal, block spans two steps. Anchor is excluded from the pre window and included in the post window.}
\label{fig:timeline2}
\end{figure}

\subsection{Additional figures}  \label{sec:app_extra_figs}

\begin{figure}[H]
\begin{center}
 \includegraphics[width=\linewidth ]{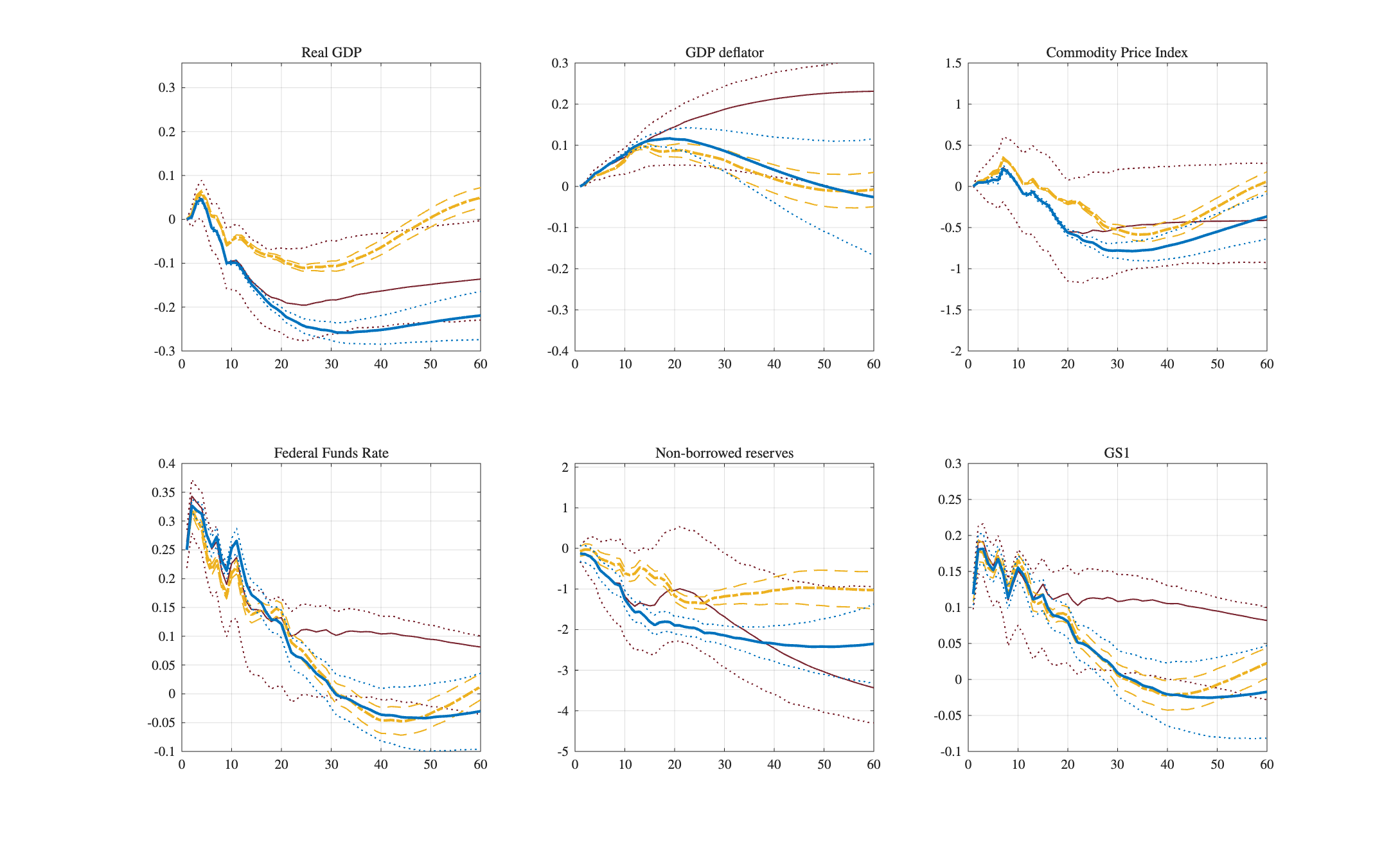}
\vspace{-1cm}
\caption{ Effect of a monetary policy shock on macroeconomic variables under various specifications. }
  \label{fig:IRF_CEEnoTrend}
\end{center}
\vspace{-0.5cm}
{\footnotesize {\it Notes:} Weighted LP IRF estimates under Cholesky identification  with corresponding   $90\%$  pointwise DWB confidence bands (dotted and dashed lines) for baseline specification ($k=6$, orange dashed line) and constant only specification ($k=0$, blue solid line).  Purple solid line and the corresponding dotted bounds are based on conventional recursive VAR IRF estimation  under homoscedasticity assumption with residual bootstrap with constant only, $k=0$. }

\end{figure}

\clearpage
\renewcommand{\refname}{Supplementary Appendix References}

\end{document}